\documentclass[a4paper, UKenglish, cleveref, autoref, thm-restate, ]{lipics-v2021}

\usepackage[utf8]{inputenc}
\usepackage{amsmath,amssymb,amsthm}
\usepackage[textwidth=2cm,textsize=small]{todonotes}
\usepackage{stmaryrd}
\usepackage{microtype}
\usepackage{url}
\usepackage{doi}
\usepackage{hyperref}
\usepackage{cleveref}
\usepackage{xspace}
\usepackage{paralist}
\usepackage{appendix}



\NewDocumentEnvironment{code}{}
 {\begin{minted}{haskell}}
 {\end{minted}}
 
\DeclareGraphicsRule{.ai}{pdf}{.ai}{}

\title{Bidimensional linear recursive sequences and universality of unambiguous register automata}
\author{Corentin Barloy}{\'Ecole Normale Supérieure de Paris, PSL, France}{corentin.barloy@ens.fr}{}{Partially supported by the Polish NCN grant 2017/26/D/ST6/00201.}
\author{Lorenzo Clemente}{University of Warsaw, Poland}{clementelorenzo@gmail.com}{https://orcid.org/0000-0003-0578-9103}{Partially supported by the Polish NCN grant 2017/26/D/ST6/00201.}

\authorrunning{Corentin Barloy and Lorenzo Clemente}

\titlerunning{Linear recursive sequences and universality of register automata}

\Copyright{Corentin Barloy and Lorenzo Clemente}

\date{January 2021}

\ccsdesc[100]{
{Theory of computation~-~Automata over infinite objects}.}

\keywords{unambiguous register automata, universality and inclusion problems, multi-dimensional linear recurrence sequences.}


\acknowledgements{We would like to thank Daniel Robertz for kindly providing us with the \texttt{LDA} package for Maple 16.}

\nolinenumbers

\EventEditors{John Q. Open and Joan R. Access}
\EventNoEds{2}
\EventLongTitle{42nd Conference on Very Important Topics (CVIT 2016)}
\EventShortTitle{CVIT 2016}
\EventAcronym{CVIT}
\EventYear{2016}
\EventDate{December 24--27, 2016}
\EventLocation{Little Whinging, United Kingdom}
\EventLogo{}
\SeriesVolume{42}
\ArticleNo{23}

\newcommand{\A}{\mathbb A}

\newcommand{\F}{\mathbb F}

\newcommand{\N}{\mathbb N}
\newcommand{\Z}{\mathbb Z}
\newcommand{\Q}{\mathbb Q}

\renewcommand{\AA}{A}
\newcommand{\BB}{B}
\newcommand{\CC}{C}

\newcommand{\shift}[1]{\partial_{#1}}

\newcommand{\tuple}[1]{(#1)}
\newcommand{\set}[1]{\{#1\}}
\newcommand{\setof}[2]{\set{#1 \mid #2}}
\newcommand{\orbit}[1]{[#1]}
\newcommand{\orbits}[2]{\mathsf{orbits}_{#1}(#2)}
\newcommand{\card}[1]{\left|#1\right|}
\newcommand{\goesto}[1]{\xrightarrow{#1}}
\newcommand{\lang}[1]{L(#1)}
\newcommand{\blang}[1]{L^{\mathsf R}(#1)}
\newcommand{\width}[1]{\mathsf{\#}#1}

\newcommand{\sem}[1]{\left\llbracket#1\right\rrbracket}

\newcommand{\RNR}[1]{G_{#1}}
\newcommand{\runs}[3]{\mathsf{Runs}(#1; #2; #3)}
\newcommand{\stirling}[2]{S(#1, #2)}

\newcommand\NCTWO{\textsf{\textup{NC}}${}^2$\xspace}
\newcommand\NLOGSPACE{\textsf{\textup{NL}}\xspace}
\newcommand\PTIME{\textsf{\textup{PTIME}}\xspace}
\newcommand\PSPACE{\textsf{\textup{PSPACE}}\xspace}
\newcommand\EXPTIME{\textsf{\textup{EXPTIME}}\xspace}
\newcommand\EXPSPACE{\textsf{\textup{EXPSPACE}}\xspace}
\newcommand\TWOEXPTIME{\textsf{\textup{2-EXPTIME}}\xspace}
\newcommand\TWOEXPSPACE{\textsf{\textup{2-EXPSPACE}}\xspace}

\newcommand\HNF{\textsf{HNF}\xspace}

\newcommand\CR{\textsf{CR}\xspace}
\newcommand{\softO}[1]{\tilde{O}({#1})}

\newcommand{\size}[1]{\abs{#1}_\infty}
\newcommand{\abs}[1]{\left|{#1}\right|}

\newcommand{\I}{\textsf{\bf I}}
\newcommand{\II}{\textsf{\bf II}}
\newcommand{\III}{\textsf{\bf III}}

\renewcommand{\L}{\mathsf L}

\newcommand{\CLM}{\textsf{CLM}\xspace}

\newcommand{\decision}[3]{\medskip\noindent {\sc #1}. \newline {\bf Input: } #2 \newline {\bf Output: } #3\medskip}

\newif\ifstartedinmathmode
\renewcommand*{\st}{
  \relax\ifmmode\startedinmathmodetrue\else\startedinmathmodefalse\fi
  \ifstartedinmathmode{\;\cdot\;}\else{s.t.~}\fi%
}




\makeatletter
\def\cleartheorem#1{%
    \expandafter\let\csname#1\endcsname\relax
    \expandafter\let\csname c@#1\endcsname\relax
}
\makeatother






\crefname{equation}{}{}
\crefname{claim}{Claim}{Claims}
\crefname{section}{Sec.~}{Sections}

\begin{document}

\maketitle

\begin{abstract}
    We study the universality and inclusion problems for register automata over equality data $(\A, =)$.
    We show that the universality $\lang B = (\Sigma \times \A)^*$
    and inclusion problems $\lang A \subseteq \lang B$
    can be solved with \TWOEXPTIME complexity
    when both automata are without guessing and $B$ is unambiguous,
    improving on the currently best-known \TWOEXPSPACE upper bound by Mottet and Quaas.
    When the number of registers of both automata is fixed,
    we obtain a lower \EXPTIME complexity,
    also improving the \EXPSPACE upper bound from Mottet and Quaas
    for fixed number of registers.
    %
    %
    We reduce inclusion to universality,
    and then we reduce universality to the problem of counting the number of orbits of runs of the automaton.
    We show that the orbit-counting function satisfies a system of bidimensional linear recursive equations with polynomial coefficients (linrec),
    which generalises analogous recurrences for the Stirling numbers of the second kind,
    and then we show that universality reduces to the zeroness problem for linrec sequences.
    While such a counting approach is classical
    and has successfully been applied to unambiguous finite automata and grammars over finite alphabets,
    its application to register automata over infinite alphabets is novel.
    
    We provide two algorithms to decide the zeroness problem
    for bidimensional linear recursive sequences
    arising from orbit-counting functions.
    Both algorithms rely on techniques from linear non-commutative algebra.
    The first algorithm performs variable elimination and has elementary complexity.
    The second algorithm is a refined version of the first one
    and it relies on the computation of the Hermite normal form of matrices over a skew polynomial field.
    The second algorithm yields an \EXPTIME decision procedure for the zeroness problem of linrec sequences, which in turn yields the claimed bounds for the universality and inclusion problems of register automata.
\end{abstract}

\newpage

\section{Introduction}

\subparagraph{Register automata.}

\emph{Register automata} extend finite automata with finitely many registers
holding values from an infinite \emph{data domain $\A$}
which can be compared against the data appearing in the input.
The study of register automata arises naturally in automata theory
as a conservative generalisation of finite automata over finite alphabets $\Sigma$
to richer but well-behaved classes of infinite alphabets.
The seminal work of Kaminski and Francez
introduced \emph{finite-memory automata} 
as the study of register automata over the data domain $(\A, =)$
consisting of an infinite set $\A$ and the equality relation
\cite{KaminskiFrancez:TCS:1994}.
The recent book \cite{Bojanczyk:AtomBook:2019}
studies automata theory over other data domains
such as $(\Q, \leq)$,
and more generally homogeneous \cite{Macpherson:Survey:DM:2011}
or even $\omega$-categorical relational structures.
Another motivation for the study of register automata comes from the area of database theory:
XML documents can naturally be modelled as finite unranked trees
where data values from an infinite alphabet are necessary to model the \emph{attribute values} of the document
(c.f.~\cite{NevenSchwentickVianu:TOCL:2004} and the survey \cite{Segoufin:CSL06}).


The central verification question for register automata
is the \emph{inclusion problem},
which, for two given automata $A, B$,
asks whether $\lang A \subseteq \lang B$.
In full generality the problem is undecidable
and this holds already in the special case of the \emph{universality problem} $\lang B = (\Sigma \times \A)^*$ \cite[Theorem 5.1]{NevenSchwentickVianu:TOCL:2004},
when $B$ has only two registers \cite[Theorem 1.8]{Bojanczyk:AtomBook:2019}
(or even just one register in the more powerful model \emph{with guessing} \cite[Exercise 9]{Bojanczyk:AtomBook:2019},
i.e., non-deterministic reassignment in the terminology of \cite{KaminskiZeitlin:IJFCS:2010}).
One way to obtain decidability is to restrict the automaton $B$.
One such restriction requires that $B$ is \emph{deterministic}:
Since deterministic register automata are effectively closed under complementation,
the inclusion problem reduces to non-emptiness of $\lang A \cap (\Sigma \times \A)^* \setminus \lang B$,
which can be checked in \PSPACE.
Another, incomparable, restriction demands
that $B$ has only one register:
In this case the problem becomes decidable
\cite[Appendix A]{KaminskiFrancez:TCS:1994}%
\footnote{Decidability even holds for the so-called ``two-window register automata'',
which combined with the restriction in \cite{KaminskiFrancez:TCS:1994}
demanding that the last data value read must always be stored in some register
boils down to a slightly more general class of ``$1\frac 1 2$-register automata''.}
and non-primitive recursive \cite[Theorem 5.2]{DemriLazic:FreezeLTL:ACM09}.

\subparagraph{Unambiguity.}

\emph{Unambiguous automata} are a natural class of automata intermediate between deterministic and nondeterministic automata.
An automaton is unambiguous if
there is at most one accepting run on every input word.
Unambiguity has often been used to generalise decidability results for deterministic automata
at the price of a usually modest additional complexity.
For instance, the universality problem for deterministic finite automata
(which is \PSPACE-complete in general \cite{StockmeyerMeyer:1973})
is \NLOGSPACE-complete,
while for the unambiguous variant it is in \PTIME \cite[Corollary 4.7]{StearnsHunt:Unambiguous:SFCS81},
and even in \NCTWO \cite{Tzeng:IPL:1996}.
An even more dramatic example is provided by universality of context-free grammars,
which is undecidable in general \cite[Theorem 9.22]{HopcroftMotwaniUllman:2000},
\PTIME-complete for deterministic context-free grammars,
and decidable for unambiguous context-free grammars \cite[Theorem 5.5]{SalomaaSoittola:Book:PowerSeries:1978}
(even in \PSPACE \cite[Theorem 10]{Clemente:EPTCS:2020}).
(The more general equivalence problem is decidable for deterministic context-free grammars \cite{Senizergues:ICALP:1997},
but it is currently an open problem whether equivalence is decidable for unambiguous context-free grammars,
as well as for the more general \emph{multiplicity equivalence} of context-free grammars \cite{Kuich:multiplicity:1994}.)
Other applications of unambiguity for universality and inclusion problems in automata theory include
Büchi automata \cite{BousquetLoeding:Unambiguous:LATA:2010,BaierKieferKleinKluppelholzMullerWorrell:CAV:2016},
probabilistic automata \cite{DaviaudJurdzinskiLazicMazowieckiPerezWorrell:ICALP:2018},
Parikh automata \cite{CadilhacFinkelMcKenzie:2012,BostanCarayolKoechlinNicaud:ICALP:2020},
vector addition systems \cite{CzerwinskiFigueiraHofman:CONCUR:2020},
and several others (c.f.~also \cite{Colcombet:STACS:2012,Colcombet:DCFS:2015}).

\subparagraph{Number sequences and the counting approach.}

The universality problem for a language over finite words $L \subseteq \Sigma^*$
is equivalent to whether its associated \emph{word counting function}
$f_L(n) := \card{L \cap \Sigma^n}$
equals $\card\Sigma^n$ for every $n$.
The most classical way of exploiting unambiguity of a computation model $A$
(finite automaton, context-free grammar, \ldots)
is to use the fact that it yields a bijection between the recognised language $\lang A$
and the set of accepting runs.
In this way, $f_L(n)$ is also the number of accepting runs of length $n$,
and for the latter recursive descriptions usually exist.
When the class of number sequences to which $f_L$ belongs contains $\card\Sigma^n$
and is closed under difference,
this is equivalent to the \emph{zeroness} problem for $g(n) := \card\Sigma^n - f_L(n)$,
which amounts to decide whether $g = 0$.
This approach has been pioneered by Chomsky and Schützenberger \cite{ChomskySchutzenberger:Algebraic:1963}
who have shown that the generating function
$g_L(x) = \sum_{n = 0}^\infty f_L(n) \cdot x^n$
associated to an unambiguous context-free language $L$ is algebraic (c.f.~\cite{Bousquet-Melou:STACS:2005}).
A similar observation by Stearns and Hunt \cite{StearnsHunt:Unambiguous:SFCS81}
shows that $g_L(x)$ is rational \cite[Chapter 4]{Stanley:1986},
when $L$ is regular,
and more recently by Bostan et al.~\cite{BostanCarayolKoechlinNicaud:ICALP:2020}
who have shown that $g_L(x)$ is holonomic \cite{Stanley:EJC:1980}
when $L$ is recognised by an unambiguous Parikh automaton.
Since the zeroness problem for
rational, algebraic, and holonomic generating functions is decidable,
one obtains decidability of the corresponding universality problems.

\subparagraph{Unambiguous register automata.}

Returning to register automata,
Mottet and Quaas have recently shown that the inclusion problem in the case where $B$ is an unambiguous register automaton over equality data (without guessing) 
can be decided in \TWOEXPSPACE,
and in \EXPSPACE when the numbers of registers of $B$ is fixed \cite[Theorem 1]{MottetQuaas:ToCS:2020}.
%
%
Note that already decidability is interesting,
since unambiguous register automata without guessing
are not closed under complement
in the class of nondeterministic register automata without guessing
\cite[Example 4]{KaminskiZeitlin:IJFCS:2010},
and thus the classical approach via complementing $B$ fails for register automata%
\footnote{
In the more general class of register automata with guessing,
an unproved conjecture proposed by Colcombet states that unambiguous register automata with guessing are effectively closed under complement \cite[Theorem 12]{Colcombet:DCFS:2015},
implying decidability of the universality and containment problems
for unambiguous register automata with guessing and,
a posteriori,
unambiguous register automata without guessing as considered in this paper.
No published proof of this conjecture has appeared as of yet.
}.
(In fact, even for finite automata complementation of unambiguous finite automata
cannot lead to a \PTIME universality algorithm,
thanks to Raskin's recent super-polynomial lower-bound
for the complementation problem for unambiguous finite automata
in the class of non-deterministic finite automata \cite{Raskin:ICALP:2018}).
Mottet and Quaas obtain their result by showing that inclusion can be decided
by checking a reachability property of a suitable graph of triply-exponential size
obtained by taking the product of $A$ and $B$,
and then applying the standard \NLOGSPACE algorithm for reachability in directed graphs.
%

\subparagraph{Our contributions.}

In view of the widespread success of the counting approach to unambiguous models of computation,
one may wonder whether it can be applied to register automata as well.
This is the topic of our paper.
A na\"ive counting approach for register automata
immediately runs into trouble
since there are infinitely many data words 
of length $n$.
The natural remedy is to use the fact that $\A^n$,
albeit infinite,
is \emph{orbit-finite} \cite[Sec.~3.2]{Bojanczyk:AtomBook:2019},
which is a crucial notion generalising finiteness to the realm of relational structures used to model data.
In this way, we naturally count
the number of \emph{orbits} of words/runs of a given length,
which in the context of model theory is sometimes known as the
\emph{Ryll-Nardzewski function} \cite{Schmerl:FM:1978}.
For example, in the case of equality data $(\A, =)$,
the number of orbits of words of length $n$ is the well-known \emph{Bell number $B(n)$},
and for $(\Q, \leq)$ one obtains
the \emph{ordered Bell numbers} (a.k.a.~\emph{Fubini numbers});
c.f.~Cameron's book for more examples \cite[Ch.~7]{Cameron:CUP:2017}.

When considering orbits of runs,
the run length $n$ seems insufficient to obtain recurrence equations.
To this end, we also consider the number of distinct data values $k$
that appear on the word labelling the run.
%
For instance, in the case of equality data,
the corresponding orbit-counting function
is the well-known sequence of \emph{Stirling numbers of the second kind} $\stirling n k : \Q^{\N^2}$,
which satisfies $\stirling 0 0 = 1$, $\stirling m 0 = \stirling 0 m = 0$ for $m \geq 1$, and
\begin{align}
    \label{eq:Stirling}
    \stirling n k = \stirling {n - 1} {k - 1} + k \cdot \stirling {n - 1} k,
    \quad \text{ for } n, k \geq 1.
\end{align}
These intuitions lead us to define the class of
\emph{bidimensional linear recursive sequences with polynomial coefficients} (linrec; c.f.~\eqref{eq:linrec})
which are a class of number sequences in $\Q^{\N^2}$
satisfying a system of shift equations with polynomial coefficients generalising \eqref{eq:Stirling}.
Linrec are sufficiently general to model the orbit-counting functions of register automata
and yet amenable to algorithmic analysis.
Our first result is a complexity upper bound for the zeroness problem for a class of linrec sequences which suffices to model register automata.
\begin{restatable}{theorem}{thmZeronessComplexity}
    \label{thm:zeroness:complexity}
    The zeroness problem for linrec sequences
    with univariate polynomial coefficients from $\Q[k]$
    is in \EXPTIME.
\end{restatable}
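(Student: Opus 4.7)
The plan is to decide zeroness via the computation of a Hermite normal form (HNF) of a matrix over a skew polynomial field, and then to bound the HNF computation carefully enough to stay within \EXPTIME.

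First, I would encode the linrec system as a matrix $M$ with entries in the Ore algebra $R = \Q[k]\langle \partial_n, \partial_k \rangle$, where $\partial_n$ commutes with $k$ and $\partial_k$ satisfies the skew commutation $\partial_k \cdot k = (k+1) \cdot \partial_k$. A candidate solution $f \in \Q^{\N^2}$ is annihilated by the left submodule generated by $M$, and by iterating the recurrences any value $f(n, k)$ can be expressed as a $\Q$-linear combination of finitely many designated initial values. Zeroness thus reduces to a finite rank test, provided we have a sharp canonical representation of the module generated by $M$.

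To obtain such a canonical representation, I would pass to the skew polynomial field $F = \Q(k)\langle \partial_k \rangle$ (the Ore quotient field of $\Q[k]\langle \partial_k \rangle$). Since $\partial_n$ commutes with $F$, the ring $F[\partial_n]$ is left-Euclidean and matrices over it admit a Hermite normal form. Computing the HNF of $M$ brings the recurrence system to an upper-triangular canonical form whose leading entries determine precisely which initial values of $f$ characterise the solution and the linear conditions they must satisfy for zeroness; this final step is a finite-rank linear-algebra test over $\Q$.

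The main obstacle, and the heart of the \EXPTIME bound, is controlling the degree blow-up during HNF computation --- both of the polynomials in $k$ and of the orders of the shift operators $\partial_n, \partial_k$ produced by row reductions. A na\"ive analysis, corresponding to the first (elementary-complexity) algorithm, yields only a tower-of-exponentials bound. The refined argument exploits sharp degree-growth estimates for row reductions over skew polynomial rings (in the spirit of the weak Popov and Hermite form analyses for Ore matrices) to show that all intermediate entries remain of singly-exponential bit-size in the input. Combined with a polynomial number of arithmetic operations on such entries, this yields the desired \EXPTIME upper bound.
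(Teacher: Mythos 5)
Your overall strategy---encode the linrec system as a matrix over a skew polynomial ring, pass to the Ore quotient field, and invoke Hermite-normal-form bounds in the style of Giesbrecht--Kim---matches the paper's route to \cref{thm:zeroness:complexity}. But the final step, where you claim the HNF ``determines precisely which initial values characterise the solution'' and that zeroness then becomes ``a finite-rank linear-algebra test over $\Q$,'' is a genuine gap, and it is exactly where the real work (and the exponential) lives. The HNF yields a cancelling relation $p_{i^*,j^*}(k)\cdot\shift1^{i^*}\shift2^{j^*} f = \sum_{(i,j)<_\text{lex}(i^*,j^*)} p_{i,j}(k)\cdot\shift1^i\shift2^j f$ whose leading coefficient $p_{i^*,j^*}(k)$ is in general a \emph{non-monic} polynomial (monicity is only a conjecture in this setting). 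Such a relation propagates zeroness forward only at indices where $p_{i^*,j^*}(k-j^*)\neq 0$, so you cannot conclude $f=0$ from any fixed finite set of initial values without first bounding the roots of the leading coefficient. The paper does this via Lagrange's root bound: the test grid must extend past the largest root, which is of magnitude $O(e\cdot h)$ where $h$ is the (exponentially large) height of the coefficients in the HNF. This is the source of the \EXPTIME{} bound---an exponentially large grid of values of $f$, each computable in polynomial time by unfolding the original (monic-in-$\shift1\shift2$) recurrences---not, as you suggest, the cost of arithmetic during HNF computation. You also omit the boundary sections $f(0,k),\dots,f(i^*,k)$: the bidimensional cancelling relation says nothing about indices with $n<i^*$, so one needs separate one-dimensional cancelling relations for each section (obtained via closure of linrec under sections), with their own root bounds.

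A secondary point: the paper never actually \emph{computes} the Hermite form. It only uses the \emph{existence} of an HNF with polynomially bounded degrees and exponentially bounded heights to certify that a cancelling relation with those parameters exists; the algorithm itself just evaluates $f$ on the resulting finite grid. So your concern about degree blow-up during row reduction, while reasonable, is orthogonal to the argument that is actually needed, and your ``finite-rank test'' would have to be replaced by the root-bound argument to close the proof.
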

\noindent
This is obtained by modelling linrec equations
as systems of linear equations with \emph{skew polynomial coefficients}
(introduced by Ore \cite{Ore:1933})
and then using complexity bounds on the computation of the Hermite normal form
of skew polynomial matrices by Giesbrecht and Kim \cite{GiesbrechtKim:JA:2013}.
Our second result is a reduction of the universality and inclusion problems
to the zeroness problem of a system of linrec equations of exponential size.
Together with \cref{thm:zeroness:complexity},
this yields improved upper bounds on the former problems.
\begin{restatable}{theorem}{thmInclusionComplexity}
    \label{thm:inclusion:complexity}
    The universality $\lang B = (\Sigma \times \A)^*$
    and the inclusion problem $\lang A \subseteq \lang B$
    for register automata $A, B$ without guessing with $B$ unambiguous 
    are in \TWOEXPTIME, and in \EXPTIME for a fixed number of registers of $A, B$.
    The same holds for the equivalence problem $\lang A = \lang B$
    when both automata are unambiguous.
\end{restatable}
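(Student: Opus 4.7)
The plan is to chain the three reductions outlined in the abstract: reduce inclusion to universality via a product construction, reduce universality to the zeroness problem for a system of linrec equations via orbit-counting of accepting runs, and apply \cref{thm:zeroness:complexity} to obtain the stated complexity bounds. Throughout I would track the size of intermediate objects as a function of $|Q|$ (number of states) and $r$ (number of registers) of the input automata.

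For the inclusion-to-universality step, I would form the product $C := A \times B$, an automaton recognising $\lang A \cap \lang B$. Because $B$ is unambiguous, for every data word $w$ the accepting runs of $C$ on $w$ are in natural bijection with the accepting runs of $A$ on $w$ when $w \in \lang B$, and are empty otherwise. Hence $\lang A \subseteq \lang B$ iff the orbit-counting functions of accepting runs of $A$ and of $C$ coincide on every stratum $(n,k)$ --- a zeroness-of-difference statement about two orbit-counting sequences, of exactly the kind produced by the universality reduction. The product has size polynomial in $|Q_A| \cdot |Q_B|$ and $r_A + r_B$. For equivalence, since both automata are assumed unambiguous, I would apply this argument in both directions.

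For universality, unambiguity of $B$ gives that the orbit-count of accepting runs of $B$ stratified by word length $n$ and number $k$ of distinct data values equals the orbit-count of $\lang B$ at $(n,k)$. Universality amounts to this quantity matching $\card\Sigma^n \cdot \stirling n k$, the orbit-count of $(\Sigma \times \A)^n$ with exactly $k$ distinct data values, which itself admits a simple linrec description. The crux is to show that the run-orbit-counting function of a register automaton satisfies a linrec system. I would describe a symbolic configuration of a partial run by an automaton state together with a description, up to orbit-equivalence of $(\A, =)$, of which of the $r$ registers currently hold which of the $k$ live data values. Reading a fresh input letter either reuses one of the $k$ live values (yielding a transition coefficient polynomial in $k$, mirroring the $k \cdot \stirling{n-1}k$ term in \eqref{eq:Stirling}) or introduces a fresh one (mirroring $\stirling{n-1}{k-1}$). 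The number of symbolic configurations is bounded by $|Q|$ times the number of set partitions of an $r$-element set, hence polynomial in $|Q|$ and exponential in $r$.

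Combining the pieces, I would instantiate \cref{thm:zeroness:complexity} on the assembled linrec system, whose coefficients lie in $\Q[k]$. With system size exponential in $r$ and polynomial in $|Q|$, and zeroness in \EXPTIME in the system size, the total running time is \TWOEXPTIME when $r$ varies and \EXPTIME when $r$ is fixed, as claimed. The main obstacle I anticipate lies in the universality step: rigorously verifying that symbolic configurations of runs over equality atoms assemble into a linrec system of the claimed form and size, including correctly computing the polynomial-in-$k$ coefficients arising from the case analysis over which of the $k$ currently live values the freshly read letter equals, and confirming that the recurrence stays within the linrec shape rather than escaping into a richer class. The product construction for inclusion is comparatively easy but still requires care to confirm that potential ambiguity of $A$ does not spoil the orbit-counting identity that encodes inclusion.
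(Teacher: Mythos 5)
Your overall architecture is sound, and your universality step is essentially the paper's: stratify orbits of initial runs by length $n$ and width $k$, index the linrec variables by a control location together with the orbit (equality type) of the register valuation, derive the recurrence by a last-step decomposition, and feed the resulting system of $O(\ell \cdot 2^{d \log d})$ equations with $\Q[k]$ coefficients into \cref{thm:zeroness:complexity}, giving \TWOEXPTIME in general and \EXPTIME for fixed $d$. Your inclusion step, however, genuinely departs from the paper. The paper (\cref{lem:reduction:to:universality}) first determinises $A$ by enlarging the alphabet with $A$'s own transition rules, then complements the now-deterministic $A$ and checks universality of an automaton recognising $(\lang B \cap \lang A) \cup \lang A^c$; this yields a reusable standalone reduction of inclusion to universality. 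You instead compare, stratum by stratum, the orbit-counts of accepting runs of $A$ and of the product $A \times B$. This works and is arguably more elementary---no determinisation or complementation is needed---but it bypasses universality and lands directly on zeroness of a difference of two run-orbit-counting sequences. You then owe the reader the orbit-level half of the argument: the projection from accepting runs of $A \times B$ to accepting runs of $A$ must be shown injective \emph{on orbits}, which follows because an automorphism matching the $A$-components maps the unique accepting $B$-run over a word of $\lang B$ to the unique accepting $B$-run over its image (unambiguity of $B$ plus equivariance of $\lang B$). Both routes give the same final bounds, including the need to fix the registers of both $A$ and $B$ for the \EXPTIME claim.

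One place where your sketch is too coarse: the last-step case analysis requires \emph{three} cases, not two. When the newly read datum is already live you must distinguish whether it currently sits in a register (coefficient $1$; the transition constraint pins down which register) from the case where it is live but not stored, whose coefficient is $k+1-\width{\orbit{\bar a}}$ rather than $k$. The subtraction of $\width{\orbit{\bar a}}$ is precisely where the without-guessing hypothesis enters---it guarantees that every register value already occurs in the prefix read so far---and the resulting $\max(k+1-\width{\orbit{\bar a}},0)$ has to be unfolded by case-splitting on small $k$ to stay inside the linrec format. This is the content of \cref{lem:URA:linrec} and is where most of the real work lies, as you correctly anticipated.
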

The rest of the paper is organised as follows.
In \cref{sec:linrec}, we introduce linrec sequences
(c.f.~\cref{sec:linrec:comparison} for a comparison with well known sequence families from the literature
such as the C-recursive, P-recursive, and the more recent polyrec sequences \cite{CadilhacMazowieckiPapermanPilipczukSenizergues:ICALP:2020}).
In \cref{sec:URA}, we introduce unambiguous register automata
and we present an efficient reduction of the inclusion (and thus equivalence) problem to the universality problem,
which allows us to concentrate on the latter in the rest of the paper.
In \cref{sec:URA2linrec}, we present a reduction of the universality problem to the zeroness problem for linrec.
In \cref{sec:ore}, we show with a simple argument based on elimination
that the zeroness problem for linrec is decidable,
and in \cref{sec:complexity} we derive a complexity upper bound
using non-commutative linear algebra.
Finally, in \cref{sec:conclusions}
we conclude with further work and an intriguing conjecture.
Full proofs, additional definitions, and examples
are provided in \cref{app:linrec,app:URA,app:URA2linrec,app:ore,app:hermite}.


    
    




\subparagraph{Notation.}

Let $\N$, $\Z$, and $\Q$ be the set of non-negative integers, resp., rationals.
The \emph{height} of an integer $k \in \Z$ is $\size k = \abs k$,
and for a rational number $a \in \Q$
uniquely written as $a = \frac p q$ with $p \in \Z, q \in \N$ co-prime
we define $\size a = \max \set {\size p, \size q}$.
Let $\Q[n, k]$ denote the ring of bivariate polynomials.
The \emph{(combined) degree} $\deg P$ of $P = \sum_{i,j} a_{ij}n^ik^j \in \Q[n, k]$
is the maximum $i+j$ s.t.~$a_{ij} \neq 0$
and the \emph{height} $\size P$ is $\max_{i,j}\size {a_{ij}}$.
For a nonempty set $A$ and $n \in \N$,
let $A^n$ be the set of sequences of elements from $A$ of length $n$,
%
In particular, $A^0 = \set\varepsilon$ contains only the empty sequence $\varepsilon$.
Let $A^* = \bigcup_{n \in \N} A^n$ be the set of all finite sequences over $A$.
%
We use the \emph{soft-Oh} notation $\softO {f(n)}$ to denote
$\bigcup_{c \geq 0} O(f(n) \cdot \log^c f(n))$.

\section{Bidimensional linear recursive sequences with polynomial coefficients}
\label{sec:linrec}


Let $f(n, k) : \Q^{\N^2}$ be a bidimensional sequence.
For $L \in \N$, the \emph{first $L$-section} of $f$
is the one-dimensional sequence $f(L, k) : \Q^\N$
obtained by fixing its first component to $L$;
the \emph{second $L$-section} $f(n, L)$ is defined similarly.
The two \emph{shift operators} $\shift 1, \shift 2 : \Q^{\N^2} \to \Q^{\N^2}$ are
\begin{align*}
    (\shift 1 f)(n, k) = f(n+1, k)
        \quad \text{ and } \quad
            (\shift 2 f)(n, k) = f(n, k+1),
                \quad \text{ for all } n, k \geq 0.
\end{align*}
An \emph{affine operator} is a formal expression of the form
$A = p_{00} + p_{01} \cdot \shift 1 + p_{10} \cdot \shift 2$
where $p_{00}, p_{01}, p_{10} \in \Q[n, k]$
are bivariate polynomials over $n, k$ with rational coefficients.
Let $\set{f_1, \dots, f_m}$ be a set of variables denoting bidimensional sequences%
\footnote{We abuse notation
and silently identify variables denoting sequences
with the sequences they denote.}.
A \emph{system of linear shift equations} over $f_1, \dots, f_m$
consists of $m$ equations of the form
\begin{align}
    \label{eq:linrec}
    \left\{
    \begin{array}{rcl}
        \shift 1 \shift 2 f_1 &=& A_{1,1} \cdot f_1 + \cdots + A_{1, m} \cdot f_m, \\
        &\vdots& \\
        \shift 1 \shift 2 f_m &=& A_{m,1} \cdot f_1 + \cdots + A_{m,m} \cdot f_m,
    \end{array}
    \right.
\end{align}
where the $A_{i,j}$'s are affine operators.
A bidimensional sequence $f : \Q^{\N^2}$ is \emph{linear recursive of order $m$, degree $d$, and height $h$} (abbreviated, linrec)
if the following two conditions hold:
\begin{enumerate}[1)]
    \item there are auxiliary bidimensional sequences $f_2, \dots, f_m : \Q^{\N^2}$
    which together with $f = f_1$ satisfy a system of linear shift equations as in \eqref{eq:linrec}
    where the polynomial coefficients have (combined) degree $\leq d$
    and height $\leq h$.
    \item for every $1 \leq i \leq m$
    there are constants denoted $f_i(0, \geq 1), f_i(\geq 1, 0) \in \Q$
    s.t.~$f_i(0, k) = f_i(0, \geq 1)$ and $f_i(n, 0) = f_i(\geq 1, 0)$
    for every $n, k\geq 1$. 
\end{enumerate}
%
%
If we additionally fix 
the initial values $f_1(0, 0), \dots, f_m(0, 0)$,
then the system \eqref{eq:linrec} has a unique solution,
which is computable in \PTIME.
%
%
\begin{lemma}
    \label{lem:linrec:compute}
    The values $f_i(n, k)$'s are computable in deterministic time
    $\softO {m \cdot n \cdot k}$.
\end{lemma}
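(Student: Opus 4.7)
The plan is to convert the operator system \eqref{eq:linrec} into an explicit pointwise scalar recurrence and then fill in an $m$-fold bivariate table by dynamic programming. Writing each affine operator as $A_{i,j} = p^{(i,j)}_{00} + p^{(i,j)}_{01} \shift 1 + p^{(i,j)}_{10} \shift 2$ and evaluating both sides of the $i$-th equation at $(n,k)$ yields
\begin{equation*}
f_i(n+1, k+1) \;=\; \sum_{j=1}^{m} \Bigl( p^{(i,j)}_{00}(n,k)\,f_j(n,k) \;+\; p^{(i,j)}_{01}(n,k)\,f_j(n+1,k) \;+\; p^{(i,j)}_{10}(n,k)\,f_j(n,k+1) \Bigr),
\end{equation*}
which expresses $f_i(n+1,k+1)$ in terms of values of the $m$ sequences at the three strictly smaller neighbours $(n,k)$, $(n+1,k)$, $(n,k+1)$.

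First I would initialise the boundaries of the $m$ tables: by condition~(2) of the linrec definition, each row $f_i(0, k')$ for $k' \ge 1$ is the given constant $f_i(0, \geq 1)$, each column $f_i(n', 0)$ for $n' \ge 1$ is the given constant $f_i(\geq 1, 0)$, and the origin $f_i(0,0)$ is provided as an initial value. Then I would iterate over the interior cells $(n', k')$ with $1 \le n' \le n$ and $1 \le k' \le k$, processed either row-by-row or by increasing antidiagonal $n' + k'$; at each such cell the recurrence above, applied at $(n'-1, k'-1)$, produces $f_1(n',k'), \ldots, f_m(n',k')$ from entries that have already been written to the tables.

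Each cell requires a constant number of evaluations of the polynomial coefficients $p^{(i,j)}_{\bullet\bullet}$ at the small integer argument $(n'-1,k'-1)$, together with a linear combination of $3m$ previously computed rationals, so the amount of ring work is linear in the table size. Summed over the $n \cdot k$ interior cells of the $m$ tables and combined with fast integer arithmetic on the rationals that arise, the polylog overhead is absorbed into the soft-Oh, giving the claimed $\softO{m \cdot n \cdot k}$ bound.

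The only genuinely delicate point is verifying that the chosen evaluation order respects the dependencies of the recurrence; this is immediate since the three positions $(n,k), (n+1,k), (n,k+1)$ all strictly precede $(n+1,k+1)$ both in row-major order and in antidiagonal order, so every right-hand side value is available when it is needed. The rest is routine bookkeeping.
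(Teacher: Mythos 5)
Your proof is correct and is essentially the paper's (implicit) argument: the paper gives no separate proof of this lemma and simply computes the values by unfolding the recurrence from the boundary conditions, which is exactly the dynamic programme you describe (pointwise recurrence at $(n'-1,k'-1)$, boundaries from condition~(2) and the initial values, row-major or antidiagonal fill order). The only caveat --- shared with the lemma as stated --- is that the bound is best read as an arithmetic-operation count for a sparse system: a dense system costs $O(m)$ ring operations per table entry rather than $O(1)$, and the rationals produced can have polynomially many bits (e.g.\ Stirling numbers), so neither overhead is literally absorbed by the soft-Oh.
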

%
%
%
%


\noindent
In the following we will use the following effective closure under section.

\begin{restatable}{lemma}{lemLinrecSection}
    \label{lem:linrec:section}
    If $f : \Q^{\N^2}$ is linrec of order $\leq m$, degree $\leq d$, and height $\leq h$,
    then its $L$-sections $f(L, k), f(n, L) : \Q^\N$ are linrec
    of order $\leq m \cdot (L+3)$,
    degree $\leq d$,
    and height $\leq h \cdot L^d$.
\end{restatable}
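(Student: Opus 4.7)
The plan is to prove the claim for the first $L$-section $f(L, k)$; the case of the second $L$-section $f(n, L)$ is entirely symmetric. Write $f = f_1$ with auxiliary $f_2, \dots, f_m$ satisfying the linrec system \eqref{eq:linrec}, where each affine operator has the form $A_{i,j} = p_{00}^{(i,j)} + p_{01}^{(i,j)} \shift 1 + p_{10}^{(i,j)} \shift 2$ with $p_{\ast\ast}^{(i,j)} \in \Q[n, k]$. The key observation is that the first-component shift $\shift 1$ occurs only to the first power on the right-hand side, so fixing $n$ to a constant value propagates through only one adjacent level.

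First I would introduce univariate auxiliary sequences $g_i^{\ell} : \Q^\N$ defined by $g_i^{\ell}(k) := f_i(\ell, k)$ for $0 \leq \ell \leq L$ and $1 \leq i \leq m$, so that the target sequence is $g_1^L$. Substituting $n := \ell - 1$ into the $i$-th equation of \eqref{eq:linrec} yields, for every $1 \leq \ell \leq L$,
\[
g_i^{\ell}(k+1) \;=\; \sum_{j=1}^{m} \bigl[\, p_{00}^{(i,j)}(\ell{-}1, k)\, g_j^{\ell-1}(k) + p_{01}^{(i,j)}(\ell{-}1, k)\, g_j^{\ell}(k) + p_{10}^{(i,j)}(\ell{-}1, k)\, g_j^{\ell-1}(k{+}1) \,\bigr].
\]
These recurrences express the shift of $g_i^{\ell}$ using only sequences at levels $\ell$ and $\ell - 1$, so the cascade terminates cleanly at $\ell = 0$. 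For the base layer $\ell = 0$, the boundary clause of the linrec definition gives $g_i^0(k) = f_i(0, \geq 1)$ constant for $k \geq 1$, modelled by the trivial shift equation $g_i^0(k+1) = g_i^0(k)$ together with the initial value $g_i^0(0) = f_i(0, 0)$.

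It then remains to verify the three numeric bounds for the resulting univariate system over $\{ g_i^{\ell} \}_{i, \ell}$. The order is $m \cdot (L+1)$, well within $m(L+3)$; the additive slack absorbs a constant number of auxiliary sequences per original index used both to cast the derived equations in the exact formal shape of \eqref{eq:linrec} and to enforce the eventual-constant boundary conditions demanded by the linrec definition. The degree bound is preserved, since substituting a constant for $n$ cannot raise the combined degree in the remaining variable $k$. For the height, each monomial $a\, n^c k^b$ with $c + b \leq d$ and $|a| \leq h$ becomes $a (\ell{-}1)^c k^b$ whose coefficient has absolute value at most $h \cdot L^d$, so after consolidating like terms the height stays within $h \cdot L^d$.

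The main obstacle, in my view, is the bookkeeping around boundary and initial data: the linrec definition requires the eventual-constant property of every auxiliary sequence, and the derived equations must be cast in the exact template of \eqref{eq:linrec}. This is where the additive slack of $+3$ in the order bound is spent. The substantive mathematical content—the polynomial substitution $n \mapsto \ell - 1$ and its impact on degree and height—is elementary, and everything else is routine assembly.
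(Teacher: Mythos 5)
Your proof follows essentially the same route as the paper's: a ladder of auxiliary one-dimensional sequences indexed by the levels $0,\dots,L$, the cascading recurrence obtained by fixing one index in \eqref{eq:linrec}, and the same accounting for order, degree, and height (the paper just treats the other section, $f(n,L)$, and notes symmetry as you do). One small slip: your explicit base-level equation $g_i^0(k+1)=g_i^0(k)$ with initial value $f_i(0,0)$ forces $g_i^0$ to be constantly $f_i(0,0)$, whereas the linrec boundary condition only guarantees $g_i^0(k)=f_i(0,\geq 1)$ for $k\geq 1$, which may differ from $f_i(0,0)$; the fix is the paper's extra variable per index ($\shift{}g_i^0 = h_i$, $\shift{}h_i = h_i$, $h_i(0)=f_i(0,\geq 1)$), which is exactly the slack you already budget for in the $m(L+3)$ bound.
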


\noindent
We are interested in the following central algorithmic problem for linrec.

\decision{Zeroness problem}
{A system of linrec equations \eqref{eq:linrec} together with all initial conditions.}
{Is it the case that $f_1 = 0$?}
%

\noindent
In \cref{sec:URA2linrec} we use linrec sequences to model the orbit-counting functions of register automata,
which we introduce next.

\section{Unambiguous register automata} 
\label{sec:URA}

We consider register automata over the relational structure $\tuple {\A, =}$
consisting of a countable set $\A$ equipped with equality as the only relational symbol.
Let $\bar a = a_1 \cdots a_n \in \A^n$ be a finite sequence of $n$ data values.
An \emph{$\bar a$-automorphism} of $\A$ is a bijection $\alpha : \A \to \A$
s.t.~$\alpha(a_i) = a_i$ for every $1 \leq i \leq n$,
which is extended pointwise to $\bar a \in \A^n$
and to $L \subseteq \A^*$. 
For $\bar b, \bar c \in \A^n$,
we write $\bar b \sim_{\bar a} \bar c$ whenever there is an $\bar a$-automorphism $\alpha$
\st $\alpha(\bar b) = \bar c$.
The \emph{$\bar a$-orbit} of $\bar b$ is the equivalence class
$\orbit {\bar b}_{\bar a} = \setof{\bar c \in \A^n} {\bar b \sim_{\bar a} \bar c}$,
and the set of $\bar a$-orbits of sequences in $L \subseteq \A^*$ is
$\orbits {\bar a} L = \setof{\orbit {\bar b}_{\bar a}} {\bar b \in L}$.
In the special case when $\bar a = \varepsilon$ is the empty tuple,
we just speak about \emph{automorphism} $\alpha$
and \emph{orbit} $\orbit {\bar b}$.
A set $X$ is \emph{orbit-finite} if $\orbits {} X$ is a finite set \cite[Sec.~3.2]{Bojanczyk:AtomBook:2019}.
%
%
All definitions above extend to $\A_\bot := \A \cup \set \bot$ with $\bot \not\in \A$ in the expected way.
%
%
%
A \emph{constraint} $\varphi$ is a quantifier-free%
\footnote{Since $\tuple{\A, =}$ is a homogeneous relational structure, and thus it admits quantifier elimination,
we would obtain the same expressive power if we would consider more general first-order formulas instead.}
formula generated by 
    $\varphi, \psi ::\equiv
    x = \bot \mid
    x = y \mid
    \varphi \lor \psi \mid
    \varphi \land \psi \mid
    \lnot \varphi$,
%
where $x, y$ are variables
and $\bot$ is a special constant denoting an undefined value.
%
%
%
The semantics of a constraint $\varphi(x_1, \dots, x_n)$ with $n$ free variables $x_1, \dots, x_n$
is the set of tuples of $n$ elements which satisfies:
$\sem\varphi=\setof{a_1, \dots, a_n \in \A_\bot^n}{\A_\bot, x_1:a_1, \dots, x_n:a_n \models \varphi}$.
%
%
A \emph{register automaton} of \emph{dimension} $d \in \N$
is a tuple $\AA = \tuple{d, \Sigma, \L, \L_I, \L_F, \goesto {}}$
where $d$ is the number of registers,
$\Sigma$ is a finite alphabet,
$\L$ is a finite set of \emph{control locations},
of which we distinguish those which are \emph{initial} $\L_I \subseteq \L$, resp.,
\emph{final} $\L_F \subseteq \L$,
and ``$\goesto {}$'' is a set of rules of the form
$p \goesto {\sigma, \varphi} q$,
where $p, q \in \L$ are control locations,
$\sigma \in \Sigma$ is an input symbol from the finite alphabet,
and $\varphi(x_1, \dots, x_d, y, x_1', \dots, x_d')$ is a constraint 
relating the current register values $x_i$'s,
the current input symbol (represented by the variable $y$),
and the next register values of  ${x_i'}$'s.

\begin{example}
    \label{ex:URA}
    Let $\AA$ over $\card\Sigma = 1$ have one register $x$,
    and four control locations $p, q, r, s$,
    of which $p$ is initial and $s$ is final.
    %
    The transitions are
    $p \goesto {x = \bot \land x' = y} q$,
    $p \goesto {x = \bot \land x' = y} r$,
    $q \goesto {x \neq y \land x' = x} q$,
    $q \goesto {x = y \land x' = x} s$,
    $r \goesto {x = y \land x' = x} r$, and
    $r \goesto {x \neq y \land x' = x} s$.
    The automaton accepts all words of the form $a(\A \setminus \set a)^*a$ or $aa^*(\A \setminus \set a)$ with $a \in \A$.
\end{example}

A register automaton is \emph{orbitised} if every constraint $\varphi$
appearing in some transition thereof
denotes an orbit $\sem\varphi \in \orbits {} {\A^{2 \cdot d+1}_\bot}$.
For example, when $d=1$ the constraint $\varphi \equiv x = x'$ is not orbitised,
however $\sem \varphi = \sem {\varphi_0} \cup \sem{\varphi_1}$
splits into two disjoint orbits for the orbitised constraints
$\varphi_0 \equiv x = x' \land x = y$ and
$\varphi_1 \equiv x = x' \land x \neq y$.
The automaton from \cref{ex:URA} is orbitised.
Every register automaton can be transformed in orbitised form 
by replacing every transition $p \goesto {\sigma, \varphi} q$
with exponentially many transitions $p \goesto {\sigma, \varphi_1} q, \dots, p \goesto {\sigma, \varphi_n} q$,
for each orbit $\sem{\varphi_i}$ of $\sem \varphi \subseteq \A^{2 \cdot d+1}_\bot$.
%

A \emph{register valuation} is a tuple of (possibly undefined) values
$\bar a = \tuple {a_1, \dots, a_d} \in \A^d_\bot$.
A \emph{configuration} is a pair $\tuple {p, \bar a}$,
where $p \in \L$ is a control location and $\bar a \in \A^d_\bot$ is a register valuation;
it is \emph{initial} if $p \in \L_I$ is initial and all registers are initially undefined $\bar a = \tuple{\bot, \dots, \bot}$,
and it is \emph{final} whenever $p \in \L_F$ is so.
The \emph{semantics} of a register automaton $\AA$
is the infinite transition system $\sem\AA = \tuple{C, C_I, C_F, \goesto {}}$
where $C$ is the set of configurations, of which $C_I, C_F \subseteq C$ are the initial, resp., final ones,
and ${\goesto {}} \subseteq C \times (\Sigma \times \A) \times C$
is the set of all transitions of the form
\begin{align*}
    \tuple {p, \bar a} \goesto {\sigma, a} \tuple {q, \bar a'},
    \qquad \text{with } \sigma \in \Sigma, a \in \A, \text{ and } \bar a, \bar a' \in \A^d_\bot,
\end{align*}
\st there exists a rule $p \goesto{\sigma, \varphi} q$
where satisfying the constraint $\A_\bot, \bar x : \bar a, y : a, \bar x' : \bar a' \models \varphi$.
%
%
A \emph{data word} is a sequence $w = \tuple {\sigma_1, a_1} \cdots \tuple {\sigma_n, a_n} \in (\Sigma \times \A)^*$.
A \emph{run over} a data word $w$
\emph{starting at} $c_0 \in C$
and \emph{ending at} $c_n \in C$
is a sequence $\pi$ of transitions of $\sem \AA$ of the form
%
 $\pi = c_0 \goesto {\sigma_1, a_1} c_1 \goesto {\sigma_2, a_2} \cdots \goesto {\sigma_n, a_n} c_n.$
%
We denote with $\runs {c_0} {w} {c_n}$ the set of runs over $w$
starting at $c_0$ and ending in $c_n$,
and with $\runs {C_I} {w} {c_n}$ the set of \emph{initial runs},
i.e., those runs over $w$ starting at some initial configuration $c_0 \in C_I$
and ending in $c_n$.
The run $\pi$ is \emph{accepting} if $c_n \in C_F$.
The language $\lang {\AA, c}$ recognised from configuration $c \in C$
is the set of data words labelling some accepting run starting at $c$;
the language recognised from a set of configurations $D \subseteq C$ is
$\lang {\AA, D} = \bigcup_{c \in D} \lang {\AA, c}$,
and the language recognised by the register automaton $\AA$ is $\lang \AA = \lang {\AA, C_I}$.
Similarly, the \emph{backward language} $\blang {\AA, c}$
is the set of words labelling some run starting at an initial configuration and ending at $c$.
Thus, we also have $\lang \AA = \blang {\AA, C_F}$.
%
%
A register automaton is \emph{deterministic} if for every input word
there exists at most one initial run,
and \emph{unambiguous} if for every input word
there is at most one initial and accepting run.
A register automaton is \emph{without guessing}
if, for every initial run $\tuple {p, \bot^d} \goesto w \tuple {q, \bar a}$
every non-$\bot$ data value in $\bar a$ occurs in the input $w$, written $\bar a \subseteq w$.
In the rest of the paper we will study exclusively automata without guessing.
A deterministic automaton is unambiguous and without guessing.
These semantic properties can be decided in \PSPACE with simple reachability analyses (c.f.~\cite{Colcombet:DCFS:2015}).

\begin{example}
    The automaton from \cref{ex:URA} is unambiguous and without guessing.
    An example of language which can only be recognised by ambiguous register automata
    is the set of words where the same data value appears two times
    $L = \setof{u \cdot a \cdot v \cdot a \cdot w} {a \in \A; u,v,w \in \A^*}$.
\end{example}

\begin{lemma}
    \label{lem:URA:bijection}
    If $\AA$ is an unambiguous register automaton,
    then there is a bijection between the language it recognises
    $\lang \AA = \lang {\AA, C_I} = \blang {\AA, C_F}$
    and the set of runs starting at some initial configuration in $C_I$
    and ending at some final configuration in $C_F$.
\end{lemma}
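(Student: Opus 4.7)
The plan is to exhibit the bijection as the label-extraction map. Let $R$ denote the set of runs $\pi$ of $\sem\AA$ of the form
$c_0 \goesto{\sigma_1,a_1} c_1 \cdots \goesto{\sigma_n,a_n} c_n$
with $c_0 \in C_I$ and $c_n \in C_F$, and let $\Phi : R \to (\Sigma \times \A)^*$ send such a $\pi$ to its label $w(\pi) = \tuple{\sigma_1,a_1}\cdots\tuple{\sigma_n,a_n}$. I claim that $\Phi$ is a bijection onto $\lang \AA$.

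For surjectivity, a data word $w$ lies in $\lang \AA = \lang{\AA, C_I}$ precisely when there exists an accepting run over $w$ starting in some $c_0 \in C_I$, i.e.\ one that also ends in some $c_n \in C_F$; this is exactly a run in $R$ with $\Phi(\pi) = w$. The same observation shows that the two descriptions $\lang{\AA, C_I}$ and $\blang{\AA, C_F}$ of $\lang \AA$ coincide, both being equal to $\Phi(R)$.

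For injectivity, suppose $\pi, \pi' \in R$ satisfy $\Phi(\pi) = \Phi(\pi') = w$. Both are then initial runs over $w$ that end in some final configuration, hence both are accepting initial runs over $w$. The unambiguity hypothesis on $\AA$ states exactly that at most one such run exists, so $\pi = \pi'$.

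There is no real obstacle here: the statement is essentially an unpacking of the definitions of language recognition and unambiguity. Its role is to fix the bijection that will subsequently be exploited to replace the counting of accepted data words by the counting of runs (and ultimately orbits of runs), which is the key idea driving the reduction from universality to zeroness in \cref{sec:URA2linrec}.
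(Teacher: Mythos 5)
Your proof is correct and is exactly the intended argument: the paper states this lemma without proof, treating it as an immediate unpacking of the definitions of acceptance and unambiguity, which is precisely what you do via the label-extraction map. Nothing is missing.
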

%
%
We are interested in the following decision problem.



\decision{Inclusion problem}
{Two register automata $\AA, \BB$ over the same input alphabet $\Sigma$.}
{Is it the case that $\lang \AA \subseteq \lang \BB$?}

\noindent
The \emph{universality problem} asks $\lang \AA = (\Sigma \times \A)^*$,
and the \emph{equivalence problem} $\lang \AA = \lang \BB$.
%
%
In general, universality reduces to equivalence, which in turn reduces to inclusion.
In our context, inclusion reduces to universality
and thus all three problems are equivalent.

\begin{restatable}{lemma}{lemReductionToUniversality}
    \label{lem:reduction:to:universality}
    Let $\AA$ and $\BB$ be two register automata.
    \begin{enumerate}
        \item The inclusion problem $\lang\AA \subseteq \lang\BB$ with $\AA$ orbitised and without guessing
        reduces in \PTIME to the case where $\AA$ is deterministic.
        The reduction preserves whether $\BB$ is 1) unambiguous,
        2) without guessing, and 3) orbitised.
        
        \item The inclusion problem $\lang\AA \subseteq \lang\BB$ with $\AA$ deterministic
        reduces in \PTIME to the universality problem for some register automaton $\CC$.
        If $\BB$ is unambiguous, then so is $\CC$.
        If $\BB$ is without guessing, then so is $\CC$.
        If $\AA$ and $\BB$ are orbitised, then so is $\CC$.
    \end{enumerate}
\end{restatable}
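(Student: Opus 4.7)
The plan is to prove the two parts independently, each by adapting a classical finite-automata construction to the register setting; the bulk of the work is to check that the structural properties of $\BB$ are transported along the reduction.

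\textbf{Part 1.} First I would exploit the following observation: when $\AA$ is orbitised and without guessing, a single rule $t = (p \goesto{\sigma, \varphi} q)$ applied to a configuration $(p, \bar a)$ while reading data value $a$ determines the successor configuration $(q, \bar a')$ \emph{uniquely}. Orbitisation fixes the equality pattern of $(\bar a, a, \bar a')$, while absence of guessing forces each $a'_i$ to lie in $\set{\bot, a, a_1, \ldots, a_d}$, so the pattern pins down $\bar a'$ completely; the only nondeterminism in $\AA$ is therefore the choice of rule. I would then pass to the extended alphabet $\Sigma' := \Sigma \times T$, where $T$ is the set of rules of $\AA$, let $\AA'$ have transitions $p \goesto{(\sigma, t), \varphi} q$, one per rule $t = (p \goesto{\sigma, \varphi} q)$, and lift $\BB$ to $\BB'$ by replacing each transition $p \goesto{\sigma, \varphi} q$ with $\abs T$ copies $p \goesto{(\sigma, t), \varphi} q$, one per $t \in T$. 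By construction $\AA'$ is deterministic, and the $\Sigma$-projection sends $\lang{\AA'}$ to $\lang\AA$ and $\lang{\BB'}$ to $\lang\BB$, so $\lang\AA \subseteq \lang\BB$ iff $\lang{\AA'} \subseteq \lang{\BB'}$. The lifting is polynomial and immediately preserves unambiguity, absence of guessing, and orbitisation of $\BB$.

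\textbf{Part 2.} With $\AA$ deterministic, I would use the classical complement-and-product construction $\CC := \bar\AA \uplus (\AA \times \BB)$, where $\uplus$ denotes disjoint union and $\bar\AA$ is the complement of $\AA$ obtained by completing missing transitions to a fresh sink location and then swapping final with non-final control locations. Then $\lang\CC = \lang{\bar\AA} \cup (\lang\AA \cap \lang\BB)$, so $\lang\CC = (\Sigma \times \A)^*$ iff $\lang\AA \subseteq \lang\BB$, reducing inclusion to universality. Absence of guessing is clearly inherited through product, completion, and disjoint union. When $\BB$ is unambiguous, so is $\CC$: $\bar\AA$ is deterministic, $\AA \times \BB$ is unambiguous (at most one run per factor), and $\lang{\bar\AA} \cap \lang\AA = \emptyset$ forces the two summands to accept disjoint languages.

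The step I expect to demand the most technical care is the orbitisation claim of Part 2. Both the completion of $\AA$ (whose added constraints are \emph{a priori} unions of several orbits, being complements of finite unions of orbits) and the product $\AA \times \BB$ (whose transition constraints are conjunctions of orbit constraints and need not themselves be orbits on the combined register tuple) can introduce non-orbitised transitions. I would restore orbitisation by splitting each offending constraint along the orbits of $\A_\bot^{2d+1}$ it intersects; the delicate point is to argue that this decomposition is compatible with a polynomial-time reduction, which should follow from working with a suitable succinct representation of orbit constraints rather than enumerating the super-exponentially many orbits of the register tuple explicitly.
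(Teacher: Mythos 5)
Your proposal matches the paper's proof essentially step for step: Part 1 uses the same alphabet-tagging by the transition rules of $\AA$ (with the same key observation that orbitisation plus absence of guessing leaves the choice of rule as the only source of nondeterminism, so $\AA'$ is deterministic and $\BB'$ is the $\hat h$-preimage of $\BB$), and Part 2 is exactly the paper's construction of $\CC$ with $\lang \CC = (\lang\BB \cap \lang\AA) \cup \lang\AA^c$, which you write as $\bar\AA \uplus (\AA \times \BB)$. Your flagged concern about orbitisation of the product in Part 2 is well placed: the paper's proof merely asserts that preservation in one line, and in the later complexity theorem it in fact re-orbitises $\CC$ explicitly (accepting an exponential blow-up in the number of registers of $\BB$), so on that point you are, if anything, more careful than the source.
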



\section{Universality of unambiguous register automata without guessing}
\label{sec:URA2linrec}

We reduce universality of unambiguous register automata without guessing
to zeroness of bidimensional linrec sequences with univariate polynomial coefficients.
The \emph{width} of a sequence of data values $\bar a = a_1 \cdots a_n \in \A^n$
is $\width {\bar a} = \card{\set{a_1, \dots, a_n}}$,
for a word $w = \tuple{\sigma_1, a_1} \cdots \tuple{\sigma_n, a_n} \in (\Sigma \times \A)^*$
we set $\width w = \width {(a_1 \cdots a_n)}$,
and for a run $\pi$ over $w$ we set $\width \pi = \width w$.
Let the \emph{Ryll-Nardzewski function} $\RNR {p, \bar a} (n, k)$
of a configuration $\tuple{p, \bar a} \in C = \L \times \A^d_\bot$
count the number of $\bar a$-orbits of initial runs of length $n$ and width $k$
ending in $\tuple{p, \bar a}$:
\begin{align}
    \label{eq:Ryll-Nardzewski}
    \RNR {p, \bar a} (n, k) =
        \card {\setof{\orbit {\pi}_{\bar a}}{w \in (\Sigma \times \A)^n, \pi \in \runs {C_I} w {p, \bar a}, \width w = k}}.
\end{align}
\begin{restatable}{lemma}{lemRNRorbit}
    \label{lem:RNR:orbit}
    Let $\bar a, \bar b \in \A_\bot^d$.
    If $\orbit {\bar a} = \orbit {\bar b}$,
    then $\RNR {p, \bar a} (n, k) = \RNR {p, \bar b} (n, k)$
    for every $n, k \geq 0$.
\end{restatable}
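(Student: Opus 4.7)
The plan is to exhibit a length-, width-, and orbit-preserving bijection between the initial runs counted by $\RNR{p,\bar a}$ and those counted by $\RNR{p,\bar b}$. Since $\orbit{\bar a}=\orbit{\bar b}$, fix once and for all an automorphism $\alpha$ of $\A$ with $\alpha(\bar a)=\bar b$; extend it pointwise to $\A_\bot$, to tuples, to words, and (component-wise on configurations and transitions) to runs of $\sem\AA$.

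First I would verify that $\alpha$ acts on runs. The transitions of $\sem\AA$ are defined by quantifier-free equality constraints relating $\bar x, y, \bar x'$; such constraints are invariant under any automorphism of $(\A,=)$, so
$\tuple{q,\bar c}\goesto{\sigma,a}\tuple{q',\bar c'}$ is a transition iff $\tuple{q,\alpha(\bar c)}\goesto{\sigma,\alpha(a)}\tuple{q',\alpha(\bar c')}$ is. Moreover, the initial register valuation $\tuple{\bot,\dots,\bot}$ is fixed by $\alpha$, so initial configurations map to initial configurations. Hence $\alpha$ sends an initial run $\pi : c_0\goesto{w}\tuple{p,\bar a}$ to an initial run $\alpha(\pi): c_0\goesto{\alpha(w)}\tuple{p,\bar b}$ of the same length, and $\width{\alpha(w)}=\width w$ since $\alpha$ is a bijection on $\A$. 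Applying $\alpha^{-1}$ shows this is a bijection between the run sets
$\bigcup_w\runs{C_I}{w}{p,\bar a}$ and $\bigcup_w\runs{C_I}{w}{p,\bar b}$ of any fixed length and width.

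Next, I would show this bijection descends to orbits, i.e.\ that for initial runs $\pi_1,\pi_2$ ending at $\tuple{p,\bar a}$,
\[
    \pi_1\sim_{\bar a}\pi_2 \quad\Longleftrightarrow\quad \alpha(\pi_1)\sim_{\bar b}\alpha(\pi_2).
\]
For the forward direction, if $\beta$ is an $\bar a$-automorphism with $\beta(\pi_1)=\pi_2$, then $\alpha\beta\alpha^{-1}$ fixes $\bar b$ pointwise (since $\alpha^{-1}(\bar b)=\bar a$, $\beta(\bar a)=\bar a$, $\alpha(\bar a)=\bar b$), so it is a $\bar b$-automorphism, and it clearly maps $\alpha(\pi_1)$ to $\alpha(\pi_2)$. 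The converse is symmetric using $\alpha^{-1}$. Thus conjugation by $\alpha$ gives a bijection between $\bar a$-orbits of initial runs ending at $\tuple{p,\bar a}$ and $\bar b$-orbits of initial runs ending at $\tuple{p,\bar b}$, at every fixed length $n$ and width $k$. Counting on both sides using \eqref{eq:Ryll-Nardzewski} then yields $\RNR{p,\bar a}(n,k)=\RNR{p,\bar b}(n,k)$.

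The only nontrivial step is the invariance of transitions under $\alpha$, which is immediate once one unfolds that constraints are built from $x=\bot$ and $x=y$ by Boolean connectives, both preserved by any bijection of $\A_\bot$ that fixes $\bot$. The remaining bookkeeping (preservation of length, width, initial/target configurations) is routine.
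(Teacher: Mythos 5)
Your proof is correct and follows essentially the same route as the paper's: fix an automorphism $\alpha$ with $\alpha(\bar a)=\bar b$, map runs via $\alpha$, and show the induced map on orbits is a bijection by conjugating the witnessing $\bar a$-automorphism with $\alpha$. The only cosmetic difference is that you establish the two-sided equivalence $\pi_1\sim_{\bar a}\pi_2 \Leftrightarrow \alpha(\pi_1)\sim_{\bar b}\alpha(\pi_2)$ directly (and explicitly check that transitions are automorphism-invariant), whereas the paper proves injectivity of the orbit map and appeals to a symmetric injection in the other direction.
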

\noindent
We thus overload the notation and write $\RNR {p, \orbit {\bar a}}$
instead of $\RNR {p, \bar a}$.
Since $\A^d_\bot$ is orbit-finite, this yields finitely many variables $\RNR {p, \orbit {\bar a}}$'s.
By slightly abusing notation, let
$\RNR {C_F} (n, k) = \sum_{\orbit{\tuple{p, \bar a}} \in \orbits {} {C_F}} \RNR {p, \orbit{\bar a}} (n, k)$
be the sum of the Ryll-Nardzewski function over all orbits of accepting configurations.
When the automaton is unambiguous,
thanks to 
\cref{lem:URA:bijection},
$\RNR {C_F} (n, k)$ is also the number of orbits of accepted words of length $n$ and width $k$.

\begin{lemma}
    \label{lem:univ-iff-count-URA}
    Let $\AA$ be an unambiguous register automaton w/o guessing over $\Sigma$
    and let $S_\Sigma(n, k)$ be the number of orbits of all words of length $n$ and width $k$.
    We have
        $\lang \AA = (\A \times A)^*$
            if, and only if, 
                $\forall n, k \in \N \cdot \RNR {C_F} (n, k) = S_\Sigma (n, k)$.
\end{lemma}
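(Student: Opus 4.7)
The plan is to reduce the language equality $\lang \AA = (\Sigma \times \A)^*$ to an orbit-counting identity at each length-width pair. Both sides are automorphism-closed and the inclusion $\lang \AA \subseteq (\Sigma \times \A)^*$ is trivial; since each intersection with words of length $n$ and width $k$ is a finite disjoint union of orbits, the equivalence in the lemma reduces to proving the identity
\[
    \RNR{C_F}(n,k) = \card{\orbits{}{\lang \AA \cap L_\Sigma(n,k)}},
\]
where $L_\Sigma(n,k)$ denotes the set of data words of length $n$ and width $k$.

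This identity follows from two ingredients. First, unambiguity via \cref{lem:URA:bijection} supplies an automorphism-equivariant bijection between $\lang \AA$ and the set of initial accepting runs of $\AA$, which preserves both length and width; hence orbits of accepted data words of profile $(n,k)$ correspond to orbits of initial accepting runs of the same profile. Second, I expand the left-hand side as $\RNR{C_F}(n,k) = \sum_{\orbit{\tuple{p, \bar a}} \in \orbits{}{C_F}} \RNR{p, \orbit{\bar a}}(n,k)$ and apply an orbit-stabiliser argument: fixing a representative $\tuple{p, \bar a}$ of each orbit class in $C_F$, the $\bar a$-orbits of runs ending exactly at $\tuple{p, \bar a}$ correspond bijectively to the orbits of runs ending somewhere in $\orbit{\tuple{p, \bar a}}$, since each such orbit has a representative ending exactly at $\tuple{p, \bar a}$ (by transporting along an automorphism) with the residual freedom being precisely the pointwise stabiliser of $\bar a$. \cref{lem:RNR:orbit} ensures that the summand depends only on the orbit of the endpoint, so the sum is well-defined, and telescoping over endpoint-orbits yields the total number of orbits of initial accepting runs of length $n$ and width $k$.

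The main obstacle is the orbit-stabiliser bookkeeping in the second ingredient: one must carefully align the $\bar a$-orbits counted by the Ryll-Nardzewski function with the orbits of initial accepting runs viewed globally, and verify that no double-counting or under-counting occurs when choosing representatives for each endpoint-orbit. The ``without guessing'' hypothesis plays a subtle but essential role here, ensuring $\bar a \subseteq w$ for every initial run reaching $\tuple{p, \bar a}$; this makes the width parameter $k$ simultaneously constrain the input word and the register contents at the endpoint, so that the same index $k$ on both sides refers to the same underlying set of data values. Without this assumption, register and word orbits could decouple and the sum identity would fail.
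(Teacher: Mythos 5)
Your proof is correct and follows essentially the same route as the paper: the paper gives no separate proof of this lemma, relying on the preceding remark that by \cref{lem:URA:bijection} the quantity $\RNR{C_F}(n,k)$ counts the orbits of accepted words of length $n$ and width $k$, combined with automorphism-closure of $\lang\AA$ and orbit-finiteness of the set of words of a fixed length --- precisely the two ingredients you spell out, with your orbit-stabiliser bookkeeping filling in a step the paper leaves implicit. One minor caveat: the ``without guessing'' hypothesis is not actually needed for this counting identity (the width of a run is by definition the width of its labelling word, and the stabiliser argument relating $\bar a$-orbits of runs ending at a representative $\tuple{p,\bar a}$ to global orbits of runs ending in $\orbit{\tuple{p,\bar a}}$ goes through regardless); it becomes essential only for the last-step recurrences of \cref{lem:URA:linrec}.
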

In other words, universality of $\AA$
reduces to zeroness of
{${\RNR {} := S_{\Sigma} - \RNR {C_F}}$}.
The sequence $S_{\Sigma}$ is linrec 
since it satisfies the recurrence in \cref{fig:linrec:equations}
with initial conditions
$S_\Sigma(0, 0) = 1$ and $S_\Sigma(n+1, 0) = S_\Sigma(0, k+1)= 0$
for $n, k \geq 0$.
%
We show that all the sequences of the form $\RNR {p, \orbit {\bar a}}$ are also linrec
and thus also $\RNR {}$ will be linrec.
%
%
\begin{figure}
    \centering
    \includegraphics[scale=1]{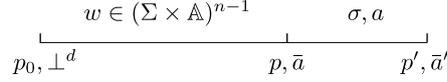}
    \caption{Last-step decomposition.}
    \label{fig:last-step-decompositiion}
\end{figure}
%
%
We perform a last-step decomposition of an initial run;
c.f.~\cref{fig:last-step-decompositiion}.
Starting from some initial configuration $\tuple{p_0, \bot^d}$,
the automaton has read a word $w$ of length $n-1$
leading to $\tuple{p, \bar a}$.
Then, the automaton reads the last letter $\tuple{\sigma, a}$
and goes to $\tuple{p', \bar a'}$
via the transition
$t = (p, \bar a \goesto {\sigma, a} p', \bar a')$.
The question is in how many distinct ways can an orbit of the run over $w$
be extended into an orbit of the run over $w\cdot \tuple{\sigma, a}$.
We distinguish three cases.
\begin{enumerate}
    \item[\I:] Assume that $a$ appears in register $\bar a_i = a$.
    Since the automaton is without guessing, $a \in w$ has appeared earlier in the input word and $\bar a' \subseteq \bar a$
    (ignoring $\bot$'s).
    Thus, each $\bar a$-orbit of runs
    $\orbit {p_0, \bot^d \goesto w p, \bar a}_{\bar a}$
    yields, via the fixed $t$,
    an $\bar a'$-orbit of runs
    $\orbit {p_0, \bot^d \goesto w p, \bar a \goesto {\sigma, a} p', \bar a'}_{\bar a'}$ of the same width
    in just one way.
    \item[\II:] Assume that $a$ is globally fresh $a \not\in w$,
    and thus in particular $a \not\in \bar a$
    since the automaton is without guessing.
    Each $\bar a$-orbit of runs $\orbit {p_0, \bot^d \goesto w p, \bar a}_{\bar a}$ of width $\width w$
    yields, via the fixed $t$, a single $\bar a'$-orbit of runs
    $\orbit {p_0, \bot^d \goesto w p, \bar a \goesto {\sigma, a} p', \bar a'}_{\bar a'}$
    of width $\width {(w \cdot a)} = \width w + 1$.
    \item[\III:] Assume that $a \in w$ is not globally fresh,
    but it does not appear in any register $a \not\in \bar a$.
    Since the automaton is without guessing,
    every value in $\bar a$ appears in $w$.
    Consequently, $a$ can be any of the $\width w$ distinct values in $w$,
    with the exception of $\width {\bar a}$ values.
    Each $\bar a$-orbit of runs
    $\orbit {p_0, \bot \goesto w p, \bar a}_{\bar a}$ of width $\width w$
    yields $\width w - \width {\bar a} \geq 0$ $\bar a'$-orbits of runs
    $\orbit {p_0, \bot^d \goesto w p, \bar a \goesto {\sigma, a} p', \bar a'}_{\bar a'}$ of the same width.
\end{enumerate}
(As expected, we do not need unambiguity at this point,
since we are counting orbits of runs.)
%
We obtain the equations in \cref{fig:linrec:equations},
where the sums range over orbits of transitions.
This set of equations is finite since there are finitely many orbits
$\orbit {\bar a} \in \orbits {} {\A^d_\bot}$ of register valuations,
and moreover we can effectively represent each orbit by a constraint \cite[Ch.~4]{Bojanczyk:AtomBook:2019}.
Strictly speaking, the equations are not linrec
due to the ``$\max$'' operator,
however they can easily be transformed to linrec
by considering $G_{p, \orbit{\bar a}}(n, K)$ separately for $1 \leq K < d$;
in the interest of clarity,
we omit the full linrec expansion.
The initial condition is
$\RNR {p, \orbit{\bar a}} (0, 0) = 1$ if $p \in I$ initial,
and $\RNR {p, \orbit{\bar a}} (0, 0) = 0$ otherwise.
The two $0$-sections satisfy 
$\RNR {p, \orbit{\bar a}} (n+1, 0) = 0$ for $n \geq 0$
(if the word is nonempty, then there is at least one data value)
and $\RNR {p, \orbit{\bar a}} (0, k+1) = 0$ for $k \geq 0$
(an empty word does not have any data value).
\begin{figure}
    \begin{align*}
        \RNR {p', \orbit {\bar a '}} (n+1, k+1) =
            &\ \sum_{\orbit {p, \bar a \goesto {\sigma, a} p', \bar a'}:\; a \in \bar a}
                    \underbrace{\RNR {p, \orbit{\bar a}} (n, k+1)}_{\I}\ + \\
            &\ \sum_{\orbit {p, \bar a \goesto {\sigma, a} p', \bar a'}:\; a \not\in \bar a}
                \left( \underbrace{\RNR {p, \orbit {\bar a}} (n, k)}_{\II} + \underbrace{\max (k + 1 - \width {\orbit {\bar a}}, 0) \cdot \RNR {p, \orbit {\bar a}} (n, k+1)}_{\III}\right), \\
        S_\Sigma(n+1, k+1) =
            &\ {\card \Sigma} \cdot S_{\Sigma}(n, k) + {\card \Sigma} \cdot (k + 1) \cdot S_{\Sigma}(n, k+1), \\
    \RNR {} (n, k) =
        &\ S_{\Sigma}(n, k) - \sum_{\orbit{p, \bar a} \in \orbits {} {C_F}} \RNR {p, \orbit{\bar a}} (n, k).
    \end{align*}
    \caption{Linrec automata equations.}
    \label{fig:linrec:equations}
\end{figure}
%
%
\begin{restatable}{lemma}{lemURALinrec}
    \label{lem:URA:linrec}
    The sequences $\RNR {p, \orbit {\bar a}}$'s
    satisfy the system of equations in \cref{fig:linrec:equations}.
\end{restatable}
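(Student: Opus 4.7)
The plan is to verify the recurrence in \cref{fig:linrec:equations} directly from the definition \eqref{eq:Ryll-Nardzewski} by a last-step decomposition of every initial run, formalising the informal three-case analysis given in the text preceding the lemma. I would fix a target control location $p'$ and a representative $\bar a'$ of a register-valuation orbit $\orbit{\bar a'}$, and count $\RNR {p', \orbit{\bar a'}}(n+1, k+1)$. Every initial run $\pi'$ of length $n+1$ and width $k+1$ ending at $(p', \bar a')$ factors uniquely as $\pi' = \pi \cdot t$, where $\pi$ is an initial run of length $n$ ending at some $(p, \bar a)$ and $t$ is a transition from $(p, \bar a)$ to $(p', \bar a')$ on some $(\sigma, a)$. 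Partitioning the $\bar a'$-orbits of such $\pi'$'s according to the orbit of the last transition $(p, \bar a, \sigma, a, p', \bar a')$ gives a sum-of-orbits decomposition whose terms will match the right-hand side of the first equation of \cref{fig:linrec:equations}.

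The multiplicity of each term would then be analysed according to the position of the last data value $a$, in the three mutually exclusive cases \I, \II, \III. In case \I, $a$ occurs in some register of $\bar a$; since $\AA$ is without guessing, $\bar a'$ uses only values already in $\bar a$, the prefix has width $k+1$, and each $\bar a$-orbit of $\pi$ extends to exactly one $\bar a'$-orbit of $\pi \cdot t$. In case \II, $a$ is globally fresh (not in $w$ nor in $\bar a$); the prefix has width $k$, and again the correspondence is one-to-one. In case \III, $a$ lies in $w$ but not in $\bar a$; the prefix has width $k+1$, and a single $\bar a$-orbit of $\pi$ yields exactly $\width w - \width{\orbit{\bar a}} = (k+1) - \width{\orbit{\bar a}}$ distinct $\bar a'$-orbits of extension, one for each choice of $a$ among the values appearing in the input but not in any register (the $\max$ handles the degenerate case where no such $a$ exists). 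Summing these contributions over all orbits of last transitions yields the first equation of \cref{fig:linrec:equations}; the second equation for $S_\Sigma$ is the specialisation of the same analysis to a one-state universal automaton over $\Sigma$; and the third equation is immediate from the definition of $\RNR {}$.

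The step I expect to be most delicate is the rigorous justification of the orbit-counting multiplicities in each case, namely showing that the natural map from pairs (an $\bar a$-orbit of a prefix $\pi$, together in case \III with a choice of $a$) to $\bar a'$-orbits of extensions $\pi \cdot t$, for a fixed orbit of last transition, is well-defined, surjective, and has fibres of the claimed size. This ultimately reduces to two automorphism-transport facts specific to $(\A, =)$: any $\bar a$-automorphism of $\A$ fixing the non-$\bot$ values of $\bar a'$ not in $\bar a$ extends to an $\bar a'$-automorphism, and conversely any $\bar a'$-automorphism of an extended run restricts, possibly after postcomposition with a suitable permutation of the $w$-values in case \III, to an $\bar a$-automorphism of the prefix. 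Combined with \cref{lem:RNR:orbit}, which ensures that the counts depend only on $\orbit{\bar a}$ and not on the chosen representative, this analysis delivers the equations of \cref{fig:linrec:equations}.
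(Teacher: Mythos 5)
Your proposal is correct and follows essentially the same route as the paper's proof: a last-step decomposition of initial runs, a partition of $\bar a'$-orbits by the orbit of the last transition, and per-case bijection arguments (one-to-one in cases \I\ and \II, fibres of size $\max(k+1-\width{\orbit{\bar a}},0)$ in case \III\ indexed by the choice of the repeated value $a$ among the input values outside the registers), with exactly the automorphism-transport facts you identify --- extending an $\bar a$-automorphism to an $\bar a'$-automorphism via the non-guessing property $\bar a' \subseteq \bar a \cup \set a$, and restricting in the other direction --- carrying the burden of the argument. The paper formalises the case-\III\ multiplicity by indexing the chosen value $a$ by its position in the subsequence of distinct input values outside $\bar a$, which is one concrete way to realise the fibre-counting you describe.
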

\begin{example}
    \label{ex:equations}
    The equations corresponding to the automaton in \cref{ex:URA}
    are as follows.
    (Since the automaton is orbitised, we can omit the orbit.)
    We have $G_p(0, 0) = 1$, $G_q(0, 0) = G_r(0, 0) = G_s(0, 0) = 0$
    and for $n, k \geq 0$:
    %
    \begin{align*}
        G_p(n+1, k+1) &= 0, \\
        G_q(n+1, k+1) &= \underbrace{G_p(n, k)}_{\II} + \underbrace{(k+1) \cdot G_p(n, k+1)}_{\III} + \underbrace{G_q(n, k)}_{\II} + \underbrace{k \cdot G_q(n, k+1)}_{\III}, \\
        G_r(n+1, k+1) &= \underbrace{G_p(n, k)}_{\II} + \underbrace{(k+1) \cdot G_p(n, k+1)}_{\III} + \underbrace{G_r(n, k+1)}_{\I}, \\
        G_s(n+1, k+1) &= \underbrace{G_q(n, k+1)}_{\I} + \underbrace{G_r(n, k)}_{\II} + \underbrace{k \cdot G_r(n, k+1)}_{\III}.
    \end{align*}
\end{example}
\begin{restatable}{lemma}{lemUniversalityToZeroness}
    \label{lem:universality2zeroness}
    Let $\AA$ be an unambiguous register automaton over equality atoms without guessing
    with $d$ registers and $\ell$ control locations.
    The universality problem for $\AA$
    reduces to the zeroness problem of the linrec sequence $G$
    defined by the system of equations in \cref{fig:linrec:equations}
    containing $O(\ell \cdot 2^{d \cdot \log d})$
    variables and equations
    and constructible in $\PSPACE$.
    If $\AA$ is already orbitised,
    then the system of equations has size $O(\ell)$.
\end{restatable}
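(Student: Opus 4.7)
The plan is to combine the orbit-counting machinery developed earlier in the section into a single reduction statement, then carefully bound the size of the resulting system and the resources needed to build it. Correctness is essentially already in place: by \cref{lem:univ-iff-count-URA}, universality of $\AA$ is equivalent to the pointwise equality $\RNR{C_F}(n,k) = S_\Sigma(n,k)$, i.e.\ to the zeroness of $\RNR{}$; by \cref{lem:URA:linrec}, the sequences $\RNR{p,\orbit{\bar a}}$, $S_\Sigma$ and hence their linear combination $\RNR{}$ satisfy the system in \cref{fig:linrec:equations}. After the straightforward expansion removing the $\max$ (splitting into the cases $\width{\orbit{\bar a}} = 0, 1, \dots, d-1$, so no reference to $k$ survives except through polynomial coefficients in a single variable, yielding genuine linrec equations with univariate polynomial coefficients in $k$), what remains is to justify the size bound and the \PSPACE construction.

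For the size bound, I would argue as follows. The variables of the system are $S_\Sigma$, $\RNR{}$, and one sequence $\RNR{p,\orbit{\bar a}}$ per pair $(p,\orbit{\bar a}) \in \L \times \orbits{}{\A^d_\bot}$. By \cref{lem:RNR:orbit} this indexing is well-defined on orbits. The orbits of $\A^d_\bot$ under the group of automorphisms of $(\A,=)$ correspond bijectively to partitions of the index set $\{1,\dots,d\}$ together with a distinguished block for the indices assigned to $\bot$; the number of such structures is bounded by the Bell number $B(d+1) = 2^{O(d\log d)}$. Multiplying by $\ell = |\L|$ gives the claimed bound $O(\ell \cdot 2^{d\log d})$ on the number of variables, and since each equation is obtained by summing over orbits of transitions out of one such variable, the total number of equations (and their combined size, modulo the finite alphabet and the orbitisation blow-up) is of the same order.

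For the orbitised case, the situation is simpler: an orbitised automaton already carries in its transitions the constraint that fixes the orbit of the source register valuation, so for each control location $p$ only the orbits $\orbit{\bar a}$ actually compatible with some transition need appear, and they can be indexed directly by the transitions themselves. This yields an $O(\ell)$ bound, and the example in \cref{ex:equations} illustrates exactly this phenomenon with one equation per location.

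The last and probably most delicate point is effectivity in \PSPACE. The main obstacle is that the number of orbits is exponential in $d$, so we cannot list them explicitly and still stay in polynomial space; instead, I would proceed orbit-by-orbit, producing the equations on the fly. Each orbit $\orbit{\bar a}$ of $\A^d_\bot$ is represented in polynomial space by a constraint (equivalently, a partition of $\{1,\dots,d,\bot\}$), following \cite[Ch.~4]{Bojanczyk:AtomBook:2019}. Given such a representation, iterating over transition rules $p \goesto{\sigma,\varphi} p'$ of $\AA$, over orbits of the update variable $y \in \A_\bot$ relative to $\orbit{\bar a}$, and over orbits of the resulting $\orbit{\bar a'}$, can all be done in \PSPACE: we classify the resulting contribution as case \I, \II, or \III by inspecting whether $y$ equals some register, is globally fresh, or is a non-register old value, and compute $\width{\orbit{\bar a}}$ directly from the partition. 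Producing the corresponding right-hand side term of the equation for $\RNR{p',\orbit{\bar a'}}$ then takes polynomial work per orbit of transition, so the whole system is constructible in \PSPACE.
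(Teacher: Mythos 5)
Your proof is correct and follows essentially the same route as the paper's (much terser) proof: enumerate the orbits of register valuations and transitions via constraints up to logical equivalence in \PSPACE, and bound the number of variables and equations by $\ell$ times a Bell number. Your count of $B(d+1)$ for the orbits of $\A^d_\bot$ is in fact slightly more careful than the paper's $B(d)$, and both give the stated $O(\ell \cdot 2^{d \cdot \log d})$ bound.
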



\section{Decidability of the zeroness problem}
\label{sec:ore}

In this section,
we present an algorithm to solve the zeroness problem of bidimensional linrec sequences
with univariate polynomial coefficients,
which is sufficient for linrec sequences from \cref{fig:linrec:equations}.
We first give a general presentation on elimination for bivariate polynomial coefficients,
and then we use the univariate assumption to obtain a decision procedure.
%
%
We model the non-commutative operators appearing in the definition of linrec sequences \eqref{eq:linrec}
with Ore polynomials (a.k.a.~skew polynomials) \cite{Ore:1933}%
\footnote{The general definition of the Ore polynomial ring $R[\partial; \sigma, \delta]$
uses an additional component $\delta : R \to R$ in order to model differential operators.
We present a simplified version which is enough for our purposes.}. 
Let $R$ be a (not necessarily commutative) ring
and $\sigma$ an automorphism of $R$.
The ring of \emph{(shift) skew polynomials} $R[\partial; \sigma]$
is defined as the ring of polynomials
but where the multiplication operation satisfies the following commutation rule:
For a coefficient $a\in R$ and the unknown $\partial$, we have
$$\partial \cdot a = \sigma(a) \cdot \partial.$$
(The usual ring of polynomials is recovered when $\sigma$ is the identity.)
The multiplication extends to monomials as
$a\partial^k \cdot b \partial^l = a \sigma^k(b) \cdot \partial^{k+l}$
and to the whole ring by distributivity.
The \emph{degree} of a skew monomial $a \cdot \partial^k$ is $k$,
and the degree $\deg P$ of a skew polynomial $P$
is the maximum of the degrees of its monomials.
The degree function satisfies the expected identities
$\deg (P \cdot Q) = \deg P + \deg Q$ and
$\deg (P + Q) \leq \max (\deg P, \deg Q)$.
A skew polynomial is \emph{monic} 
if the coefficient of its monomial of highest degree is $1$.
The crucial and only property that we need in this section
is that skew polynomial rings admit a Euclidean pseudo-division algorithm,
which in turns allows one to find common left multiples.
A skew polynomial ring $R[\partial; \sigma]$ has \emph{pseudo-division}
if for any two skew polynomials $A, B \in R[\partial; \sigma]$ with $\deg A \geq \deg B$
there is a coefficient $a \in R$ and skew polynomials $Q, R \in R[\partial; \sigma]$
s.t.~$a \cdot A = P \cdot B + Q$ and $\deg Q < \deg B$.
%
We say that a ring $R$ has the \emph{common left multiple} (\CLM) property
if for every $a,b\neq 0$, there exists $c,d\neq 0$ such that
$c \cdot a = d \cdot b$.
\begin{restatable}[\protect{c.f.~\cite[Sec.~1]{Ore:1931}}]{theorem}{thmCLM}
    \label{thm:CLM}
    If $R$ has the \CLM property, then
    \begin{inparaenum}[1)]
        \item $R[\partial;\sigma]$ has a pseudo-division, and
        \item $R[\partial;\sigma]$ also has the \CLM property.
    \end{inparaenum}
\end{restatable}
%
%
\noindent
The most important instances of skew polynomials are
the \emph{first} and \emph{second Weyl algebras}:
\begin{align}
    \label{eq:Weyl:algebras}
    W_1 = \Q[n, k][\shift 1; \sigma_1]
        \quad \text{and} \quad
            W_2 = W_1[\shift 2; \sigma_2] = \Q[n, k][\shift 1; \sigma_1][\shift 2; \sigma_2],
\end{align}
where $\Q[n, k]$ is the ring of bivariate polynomials,
and the shifts satisfy $\sigma_1 (p(n, k)) := p(n+1, k)$
%
%
%
and $\sigma_2\left(\sum_i p_i(n, k)\shift 1^i\right) := \sum_i p_i(n, k+1) \shift 1^i$.
Skew polynomials in $W_2$ act on bidimensional sequences $f:\Q^{\N^2}$
by interpreting $\shift1$ and $\shift2$ as the two shifts.
A linrec system of equations \eqref{eq:linrec}
can thus be interpreted as a system of linear equations with variables $f_1, \dots, f_m$
and coefficients in $W_2$.




\newcommand\equationsExWeyl{
\begin{align*}
    &\begin{array}{rrrrrrrr}
        \shift 1\shift 2 \cdot G_p&&&& = 0, \\
        -(1 + (k+1) \shift 2) \cdot G_p & + (\shift 1\shift 2 - k \shift 2 - 1) \cdot G_q&&& = 0, \\
        -(1 + (k+1) \shift 2) \cdot G_p & & + (\shift 1 \shift 2 - \shift 2) \cdot G_r&& = 0, \\
        & - \shift 2 \cdot G_q & - (1 + k \shift 2) \cdot G_r& + \shift 1 \shift 2 \cdot G_s& = 0,
    \end{array} \\
    &(\shift 1 \shift 2 - (k+1)\shift 2 - 1) \cdot S_1 = 0, \\
    &G_s - S_1 + G = 0.
\end{align*}
}

\begin{example}
    \label{ex:weyl:equations}
    Continuing our running \cref{ex:equations},
    we obtain the following linear system of equations with $W_2$ coefficients:
    %
        %
    \equationsExWeyl
\end{example}
%
%
%
%
%
\noindent
Since $W_0 = \N[n, k]$ is commutative, 
it obviously has the \CLM property.
By two applications of \cref{thm:CLM}, we have
(see \cref{app:CLM:examples} for \CLM examples):

\begin{corollary}
    \label{cor:CLM}
    The two Weyl algebras $W_1$ and $W_2$ have the \CLM property.
\end{corollary}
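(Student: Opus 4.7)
The plan is to apply \cref{thm:CLM} twice, starting from the commutative base ring $W_0 := \Q[n,k]$. The base case is immediate: every commutative integral domain has the \CLM property, since for nonzero $a, b$ we can take $c := b$ and $d := a$ and observe that $c \cdot a = ba = ab = d \cdot b$ with $c, d$ both nonzero. This gives \CLM for $W_0$.

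First, I would apply \cref{thm:CLM} with $R := W_0$ and the shift automorphism $\sigma_1 : W_0 \to W_0$ given by $\sigma_1(p(n,k)) = p(n+1, k)$, whose inverse is the backward shift $p(n,k) \mapsto p(n-1, k)$. The conclusion is that $W_1 = W_0[\shift 1; \sigma_1]$ has the \CLM property.

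Second, I would apply \cref{thm:CLM} once more, this time with $R := W_1$ and $\sigma_2 : W_1 \to W_1$ defined on skew polynomials by $\sigma_2\bigl(\sum_i p_i(n,k)\,\shift 1^i\bigr) = \sum_i p_i(n, k+1)\,\shift 1^i$. Before invoking the theorem one must verify that $\sigma_2$ is actually a ring automorphism of $W_1$, not merely an additive bijection. The non-trivial identity is $\sigma_2(\shift 1 \cdot a) = \sigma_2(\shift 1) \cdot \sigma_2(a)$ for $a \in \Q[n,k]$, which reduces to $\sigma_1 \circ \sigma_2 = \sigma_2 \circ \sigma_1$ on $\Q[n,k]$, i.e.~the commutation of the two coordinate shifts. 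Granted this, \cref{thm:CLM} yields the \CLM property for $W_2 = W_1[\shift 2; \sigma_2]$, which is exactly the claim.

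The argument is essentially a formal double application of the previously stated theorem, so I do not foresee any genuine obstacle. The only mildly delicate point is the verification that $\sigma_2$ extends to an automorphism of the non-commutative ring $W_1$, which, as noted, follows from the commutation of the two coordinate shifts on $\Q[n,k]$.
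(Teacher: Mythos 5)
Your proof is correct and takes essentially the same route as the paper: the paper likewise notes that the commutative base ring $\Q[n,k]$ trivially has the \CLM property and then applies \cref{thm:CLM} twice with the shift automorphisms $\sigma_1$ and $\sigma_2$. Your extra verification that $\sigma_2$ is a ring automorphism of $W_1$ (via the commutation of the two coordinate shifts) is a detail the paper leaves implicit but is the right thing to check.
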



%
A (linear) \emph{cancelling relation} (\CR) for a bidimensional sequence $f : \Q^{\N^2}$
is a linear equation of the form
\begin{align}
    \label{eq:cancelling:relation}
    \tag{\CR-2}
    p_{i^*, j^*}(n, k)\cdot \shift1^{i^*} \shift2^{j^*} f =
        \sum_{(i, j) <_\text{lex} (i^* , j^*)} p_{i, j}(n, k)\cdot \shift1^i \shift2^j f,
\end{align}
where $p_{i^*, j^*}(n, k), p_{i, j}(n, k)\in\Q[n, k]$ are bivariate polynomial coefficients
and $<_\text{lex}$ is the lexicographic ordering.
%
%
Cancelling relations for a one-dimensional sequence $g : \Q^\N$
are defined analogously (we use the second variable $k$ as the index for convenience):
\begin{align}
    \label{eq:cancelling:relation:1}
    \tag{\CR-1}
    q_{j^*}(k) \cdot \shift2^{j^*} g =
        \sum_{0 \leq j < j^*} q_j(k)\cdot \shift2^j g.
\end{align}
We use cancelling relations as certificates of zeroness for $f$
when the $p_{i, j}$'s are univariate.
We do not need to construct any cancelling relation, just knowing that some exists with the required bounds suffices.

\begin{restatable}{lemma}{lemZeroness}
    \label{lem:zeroness}
    The zeroness problem for a bidimensional linrec sequence $f : \Q^{\N^2}$ of order $\leq m$
    and univariate polynomial coefficients in $\Q[k]$
    admitting some cancelling relation \eqref{eq:cancelling:relation}
    with leading coefficient $p_{i^*, j^*}(k)\in \Q[k]$
    of degree $\leq e$ and height $\leq h$
    s.t.~each of the one-dimensional sections $f(M, k) \in \Q^\N$ for $1 \leq M \leq i^*$
    also admits some cancelling relation \eqref{eq:cancelling:relation:1}
    of $\shift2$-degree $\leq d$
    with leading polynomial coefficients
    of degrees $\leq e$ and height $\leq h$
    is decidable in deterministic time $\softO {p(m, i^*, j^*, d, e, h)}$ for some polynomial $p$.
\end{restatable}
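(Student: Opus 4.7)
The algorithm proceeds in two phases: a collection of one-dimensional zeroness checks for the row-sections $f(M, \cdot)$ with $M$ small, followed by propagation via the two-dimensional cancelling relation.

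In the first phase, for each $M \in \{1, \ldots, i^*\}$ I would use the provided one-dimensional cancelling relation to decide whether $f(M, \cdot)$ is identically zero. Since the relation has $\shift 2$-degree $\leq d$ and univariate leading coefficient of degree $\leq e$ and height $\leq h$, a standard propagation argument reduces the check to verifying that the first $d + R$ values vanish, where $R = O(\mathrm{poly}(e, h))$ bounds the non-negative integer roots of the leading coefficient (via, for instance, Cauchy's root bound). Each value $f(M, k)$ is computed in time $\softO{m \cdot M \cdot k}$ via \cref{lem:linrec:compute}. I would also verify the boundary constants $f(0, 0), f(0, \geq 1), f(\geq 1, 0)$ equal zero, which together with the previous checks establish $f(M, \cdot) \equiv 0$ for all $M \in \{0, 1, \ldots, i^*\}$.

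In the second phase, I would argue that these checks imply $f \equiv 0$. Substituting the inductive hypothesis ``rows $M' < M$ are zero'' into the two-dimensional cancelling relation at shift $n = M - i^*$ cancels all terms with $i < i^*$, leaving the one-dimensional recurrence
\[
p_{i^*, j^*}(k)\, f(M, k + j^*) = \sum_{j < j^*} p_{i^*, j}(k)\, f(M, k + j)
\]
for the row $f(M, \cdot)$, with the same leading coefficient $p_{i^*, j^*}(k)$. To start this recurrence, the initial values $f(M, 0), f(M, 1), \ldots, f(M, j^* + R - 1)$ must vanish: $f(M, 0) = 0$ holds from the boundary $f(\geq 1, 0) = 0$, and the remaining initial values come from a symmetric secondary induction on the column index $k$. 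For each fixed $k$, specializing the two-dimensional cancelling relation and using zeroness of lower columns yields a one-dimensional recurrence in the $\shift 1$ direction for $f(\cdot, k)$ of order $\leq i^*$ (valid because $\shift 1$ commutes with coefficients in $\Q[k]$, so the $\shift 1$-subalgebra of $W_2$ is commutative); its initial values $f(0, k), \ldots, f(i^* - 1, k) = 0$ come from the first phase.

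The main obstacle is handling the at most $e$ exceptional columns $k$ where the derived leading coefficient $p_{i^*, j^*}(k - j^*)$ vanishes, so that neither column- nor row-direction propagation applies for these specific $k$. Since $p_{i^*, j^*}$ has at most $e$ non-negative integer roots, bounded by $R$, the exceptional columns lie in a polynomial-size initial region of $\N$; I would verify zeroness there by evaluating $f$ at polynomially many additional points via \cref{lem:linrec:compute} and checking pointwise vanishing, exploiting the already-established zeroness of rows $M \leq i^*$ to bound the required evaluations. A careful accounting of the total number of evaluations and arithmetic operations then yields the claimed $\softO{p(m, i^*, j^*, d, e, h)}$ runtime for some polynomial $p$.
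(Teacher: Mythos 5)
Your overall strategy matches the paper's: verify the rows $f(0,k),\dots,f(i^*,k)$ via the one-dimensional cancelling relations and Lagrange-type root bounds, verify a polynomial-size initial block of columns, and then propagate zeroness through the rest of the grid using the two-dimensional cancelling relation and the fact that $p_{i^*,j^*}(k-j^*)\neq 0$ beyond the largest root. Your double induction (rows, then columns for the initial values) is just a reorganisation of the paper's single lexicographic-minimality argument (its Claim 3), and that part of your argument is sound.

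However, there is a genuine gap exactly at the point you flag as ``the main obstacle''. For the exceptional columns $k_0$ (those with $p_{i^*,j^*}(k_0-j^*)=0$, and also $k_0<j^*$ where the specialised $\shift1$-recurrence does not even apply), you propose to ``verify zeroness by evaluating $f$ at polynomially many additional points and checking pointwise vanishing''. A column $f(\cdot,k_0)$ is an infinite sequence, and finitely many pointwise checks cannot certify that it is identically zero unless you exhibit a recurrence of known order governing it; knowing that the rows $M\leq i^*$ vanish does not supply one, since at an exceptional column neither the row-direction nor the column-direction recurrence derived from the cancelling relation determines the entries with $M>i^*$. The missing ingredient is \cref{lem:linrec:section}: each column section $f(n,L)$ is itself linrec of order $\leq m\cdot(L+3)$, and because the coefficients of the system lie in $\Q[k]$ they become \emph{constants} once $k=L$ is fixed, so the section is C-recursive and is identically zero iff its first $m\cdot(L+3)+1$ values vanish (this is the paper's Claim 1, resting on \cref{lem:zeroness:C-rec}). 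With that lemma in hand one simply checks \emph{all} columns $0,\dots,K$ for $K=2+j^*+e\cdot h$ this way, which also covers your exceptional columns and completes the argument; without it, your treatment of those columns does not go through.
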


%
Elimination already yields decidability with elementary complexity
for the zeroness problem and thus for the universality/equivalence/inclusion problems of unambiguous register automata without guessing.

\begin{restatable}{theorem}{thmZeronessLinrec}
    \label{thm:zeroness:linrec}
    The zeroness problem for linrec sequences with univariate polynomial coefficients from $\Q[k]$ (or from $\Q[n]$) is decidable.
\end{restatable}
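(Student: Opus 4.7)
The plan is to reduce the theorem to \cref{lem:zeroness} by producing, for the target linrec sequence $f_1$ with univariate coefficients in $\Q[k]$, a cancelling relation \eqref{eq:cancelling:relation} for $f_1$ together with cancelling relations \eqref{eq:cancelling:relation:1} for each section $f_1(M, k)$ up to the order $i^*$ of the former. The key tool is non-commutative Gaussian elimination inside a skew polynomial ring, enabled by the CLM property of \cref{cor:CLM}. I would view the defining system \eqref{eq:linrec} as an $m \times m$ homogeneous matrix equation $N \cdot \vec f = 0$ whose entries, under the univariate assumption, lie in the subring $R = \Q[k][\shift 1; \sigma_1][\shift 2; \sigma_2]$ of $W_2$; note that $\sigma_1$ restricts to the identity on $\Q[k]$, so the inner extension degenerates to the commutative ring $\Q[k, \shift 1]$, after which $\sigma_2$ acts nontrivially by $k \mapsto k+1$. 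One application of \cref{thm:CLM} then confirms that $R$ has the CLM property.

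Next I would perform elimination in $R$: given two equations involving some unknown $f_j$ with leading coefficients $a, b \in R$, CLM yields nonzero $c, d$ with $c \cdot a = d \cdot b$, so the combination $c \cdot (\text{eq}_1) - d \cdot (\text{eq}_2)$ kills $f_j$. Iterating to eliminate $f_2, \ldots, f_m$ reduces the whole system to a single identity $P \cdot f_1 = 0$ with $P \in R$. Since each of the $m$ starting equations carries $\shift 1 \shift 2$ as its leading monomial at the diagonal, successive CLM combinations never collapse the system, so $P \neq 0$; expanding $P$ in the monomial basis $\shift 1^i \shift 2^j$ and isolating the lexicographically maximal monomial yields a cancelling relation of the form \eqref{eq:cancelling:relation} for $f_1$, whose coefficients lie in $\Q[k]$ as required by \cref{lem:zeroness}.

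It remains to provide cancelling relations for the sections $f_1(M, k)$ for $1 \leq M \leq i^*$. By \cref{lem:linrec:section} each such section is itself linrec with coefficients in $\Q[k]$, and a completely analogous elimination inside the simpler skew polynomial ring $\Q[k][\shift 2; \sigma_2]$ (which has CLM by a single application of \cref{thm:CLM}) produces the required \eqref{eq:cancelling:relation:1}. Feeding these two ingredients into \cref{lem:zeroness} supplies the decision procedure. The symmetric case of coefficients in $\Q[n]$ follows by exchanging the roles of $n$ and $k$ throughout, working in the analogous subring $\Q[n][\shift 2; \sigma_2][\shift 1; \sigma_1]$.

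The main obstacle will be the nonvanishing of the eliminated operator $P$: CLM only guarantees that the individual multipliers $c, d$ are nonzero, not that a cascade of cancellations cannot collapse the final identity. Resolving this cleanly requires exploiting the specific shape of \eqref{eq:linrec} — the fact that the pivot $\shift 1 \shift 2 f_i$ strictly dominates, in the lexicographic monomial order, every operator appearing on the right-hand side — so that each elimination step yields a strictly lower-order yet still nontrivial residual relation, ensuring the procedure terminates with a nonzero $P$ after the $(m-1)$-th step.
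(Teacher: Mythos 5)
Your proposal follows essentially the same route as the paper's proof: interpret \eqref{eq:linrec} as a linear system over the skew polynomial ring, eliminate $f_m,\dots,f_2$ via the \CLM property of \cref{cor:CLM} to obtain a cancelling relation $R \cdot f_1 = 0$, repeat for the sections supplied by \cref{lem:linrec:section}, and conclude with \cref{lem:zeroness}. The nonvanishing worry you flag at the end is resolved in the paper exactly by the degree-dominance of the diagonal pivots $\shift1\shift2$ that you identify, formalised as a full-rank statement in \cref{lem:full_rank_linrec}.
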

%
%

\begin{example}
    Continuing our running \cref{ex:weyl:equations},
    we subsequently eliminate $G_p, G_s, G_r, G_q, S$
    finally obtaining (c.f.~\cref{ex:elimination} in \cref{app:CR:examples} for details)
    \begin{align}
    \label{eq:cancelling:relation:example}
        \begin{array}{ll}
            G(n+4, k+4) =
                & (k+3) \cdot G(n+3, k+4) + G(n+3, k+3)\; + \\
                &- (k+2) \cdot G(n+2, k+4) - G(n+2, k+3).
        \end{array}
    \end{align}
    As expected, all coefficients are polynomials in $\Q[k]$
    and in particular they do not involve the variable $n$.
    Moreover, we note that the relation above is \emph{monic},
    in the sense that the lexicographically leading term $G(n+4, k+4)$ has coefficient $1$ (c.f.~\cref{sec:conclusions}).
    %
    %
    %
    (C.f.~\cref{ex:CR:two-registers} for elimination in a two-register automaton
    and \cref{ex:URA:universal} for a one-register automaton accepting all words of length $\geq 2$.)
\end{example}
%
%
\noindent
We omit a precise complexity analysis of elimination
because better bounds can be obtained
by resorting to linear non-commutative algebra,
which is the topic of the next section.

\section{Complexity of the zeroness problem}
\label{sec:complexity}

In this section we present an \EXPTIME algorithm to solve the zeroness problem
and we apply this result to register automata.
We compute the \emph{Hermite normal form} (\HNF)
of the matrix with skew polynomial coefficients associated to \eqref{eq:linrec}
in order to do elimination in a more efficient way.
The complexity bounds provided by Giesbrecht and Kim \cite{GiesbrechtKim:JA:2013}
on the computation of the \HNF
lead to the following bounds for cancelling relations;
c.f.~\cref{app:hermite} for further details and full proofs.

\begin{restatable}{lemma}{lemCRBounds}
    \label{lem:CR:bounds}
    A linrec sequence $f \in \Q^{\N^2}$ of order $\leq m$,
    degree $\leq d$,
    and height $\leq h$
    admits a cancelling relation \eqref{eq:cancelling:relation}
    with the orders $i^*, j^*$ and the degree of $p_{i^*, j^*}$ polynomially bounded,
    and with height $\size {p_{i^*, j^*}}$ exponentially bounded.
    Similarly, its one-dimensional sections $f(0, k), \dots, f(i^*, k) \in \Q^\N$
    also admit cancelling relations \eqref{eq:cancelling:relation:1}
    of polynomially bounded orders and degree,
    and exponentially bounded height.
\end{restatable}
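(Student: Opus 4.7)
The plan is to recast the linrec system as a linear system with coefficients in a univariate skew polynomial ring, compute its Hermite normal form (HNF), and read off the annihilator of $f = f_1$ from the top row of the resulting upper-triangular matrix; the quantitative bounds then follow directly from the Giesbrecht--Kim complexity estimates for HNF over Ore domains \cite{GiesbrechtKim:JA:2013}.

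Concretely, I would first rewrite the $m$ shift equations \eqref{eq:linrec} as a homogeneous matrix equation $M \cdot (f_1, \dots, f_m)^T = 0$, where $M$ is an $m \times m$ matrix over the second Weyl algebra $W_2$ with entries $M_{i,j} = \delta_{i,j} \cdot \shift 1 \shift 2 - A_{i,j}$; each entry has $(\shift 1, \shift 2)$-degree at most $(1,1)$ and $\Q[n,k]$-coefficients of combined degree $\leq d$ and height $\leq h$. Viewing $M$ as a matrix over the single-variable Ore ring $W_1[\shift 2; \sigma_2]$ (whose \CLM property is guaranteed by \cref{cor:CLM}), an HNF computation produces an upper triangular $H = U M$ with $U$ unimodular over this ring. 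The top row of $H$ corresponds to a nonzero annihilator $P \cdot f_1 = 0$ with $P \in W_2$, and isolating the lexicographically leading monomial of $P$ gives the desired cancelling relation \eqref{eq:cancelling:relation} with leading coefficient $p_{i^*, j^*} \in \Q[n,k]$. Plugging $(m, d, h)$ into the Giesbrecht--Kim complexity bounds, the orders $i^*, j^*$ and the degree of $p_{i^*, j^*}$ are polynomial in the input parameters, while the height $\size{p_{i^*, j^*}}$ is at most exponential; the exponential growth is intrinsic to the iterated cross-multiplications of skew pseudo-division used inside HNF and cannot be avoided.

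For the one-dimensional sections $f(0, k), \dots, f(i^*, k) \in \Q^\N$, I would first invoke \cref{lem:linrec:section}, which produces a linrec system for each section of polynomially inflated order and of height at most $h \cdot (i^*)^d$, hence still polynomially bounded since $i^*$ itself is polynomial in the input parameters. I would then rerun the same HNF argument, now over the simpler univariate Ore ring $\Q[k][\shift 2; \sigma_2]$ (the variable $n$ has been specialised away), to extract cancelling relations \eqref{eq:cancelling:relation:1} satisfying the same polynomial/exponential regime. The principal obstacle I anticipate is the careful bookkeeping required to transfer the Giesbrecht--Kim bounds, which are naturally stated for skew polynomial rings over a field, back to the polynomial ring $\Q[n,k]$: one must pass to the fraction field to run HNF and then clear denominators at the end, verifying that denominator-clearing inflates only the height (so the final bound stays exponential) and preserves the polynomial bounds on the orders and on the degree of the leading coefficient.
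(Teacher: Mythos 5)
Your overall strategy---recast \eqref{eq:linrec} as a matrix equation over a skew polynomial ring, compute the Hermite normal form, read off a cancelling relation from the triangular shape, and handle the sections via \cref{lem:linrec:section}---is exactly the route the paper takes. However, there are two concrete points where your argument has a gap.

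First, and most importantly, you propose to run the HNF over $W_1[\shift2;\sigma_2]$ with $W_1 = \Q[n,k][\shift1;\sigma_1]$, and you defer to ``bookkeeping'' the transfer of the Giesbrecht--Kim bounds. But those bounds are stated for $\F[z][\partial;\sigma]$ with a \emph{commutative} coefficient ring $\F[z]$; a doubly-iterated, genuinely non-commutative Ore extension over $\Q[n,k]$ is not covered by them, and the fix is not fraction fields plus denominator clearing. The step that actually makes the argument work is the restriction to univariate coefficients in $\Q[k]$ (which is the case produced by \cref{fig:linrec:equations} and the only case in which the lemma is used): then $\sigma_1$ acts trivially on $\Q[k]$, so $\Q[k][\shift1;\sigma_1] \iso \Q[k,\shift1]$ is a commutative bivariate polynomial ring, the matrix lives in $\Q[k,\shift1][\shift2;\sigma_2]$, and the Giesbrecht--Kim bounds lift by replacing $\deg_z$ with the combined degree $\deg_{k+\shift1}$ (this lifting still requires a separate estimate on inverses of bivariate polynomial matrices, \cref{lem:inv:bivariate}). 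Without identifying this commutativity your quantitative claims are not justified.

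Second, two smaller issues. For an upper-triangular $H$ with $H x = 0$ it is the \emph{bottom} row that yields a relation in a single variable, namely $H_{m,m} f_m = 0$; the top row involves all of $f_1,\dots,f_m$. You need to reorder the variables so that $f_1$ comes last (the paper does this explicitly). And you assert the extracted annihilator is nonzero without argument: this requires showing the matrix has full rank, which the paper obtains from a degree comparison between diagonal and off-diagonal entries (\cref{lem:full_rank_linrec}); without it the diagonal entry of $H$ could a priori vanish. Finally, your claim that the section systems have ``polynomially bounded'' height $h\cdot(i^*)^d$ is imprecise---this quantity is only exponentially bounded (equivalently, of polynomial bit-size), which is still sufficient for the stated conclusion.
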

\noindent
This allows us to prove below the \EXPTIME upper-bound for zeroness of \cref{thm:zeroness:complexity},
and the \TWOEXPTIME algorithm for inclusion of \cref{thm:inclusion:complexity}.
\begin{proof}[Proof of \cref{thm:zeroness:complexity}]
    Thanks to the bounds from \cref{lem:CR:bounds},
    $i^*, j^*$ are polynomially bounded;
    we can find a polynomial bound $d$ on the $\shift2$-degrees
    of the cancelling relations $R_0, \dots, R_{i^*}$ for the sections $f(0, k), \dots, f(i^*, k)$, respectively;
    we can find a polynomial bound $e$ on the degrees of $p_{i^*, j^*}(k)$
    and the leading polynomial coefficients of the $R_i$'s;
    and an exponential bound $h$ on $\size {p_{i^*, j^*}}$
    and the heights of the leading polynomial coefficients of the $R_i$'s.
    %
    We thus obtain an \EXPTIME algorithm by \cref{lem:zeroness}.
\end{proof}
\noindent
This yields the announced upper-bounds for the inclusion problem for register automata.

\begin{proof}[Proof of \cref{thm:inclusion:complexity}]
    For the universality problem $\lang B = (\Sigma \times \A)^*$,
    let $d$ be the number of registers and $\ell$ the number of control locations of $B$.
    By \cref{lem:universality2zeroness},
    the universality problem reduces in \PSPACE
    to zeroness of a linrec system with
    polynomial coefficients in $\Q[k]$
    containing $O(\ell \cdot 2^{d \cdot \log d})$ variables $G_{p, \orbit {\bar a}}$
    and the same number of equations.
    By \cref{thm:zeroness:complexity}, we get a \TWOEXPTIME algorithm.
    When the numbers of registers $d$ is fixed, we get an \EXPTIME algorithm.
    For the inclusion problem $\lang A \subseteq \lang B$,
    we first orbitise $A$ into an equivalent orbitised register automaton without guessing $A'$.
    %
    %
    A close inspection of the two constructions leading to $C$ in the proof of \cref{lem:reduction:to:universality}
    reveal that transitions in $C$ are either transitions from $A'$
    (and thus already orbitised),
    or pairs of a transition in $B$ together with a transition in $A'$,
    the second of which is already orbitised.
    It follows that orbitising $C$ incurs in an exponential blow-up w.r.t.~the number of registers of $B$,
    but only polynomial w.r.t.~the number of registers of $A'$ (and thus of $A$),
    since the $A'$-part in $C$ is already orbitised.
    Consequently, we can write (in \PSPACE)
    a system of linrec equations for the universality problem of $C$
    of size exponential in the number of registers of $A$ and of $B$.
    By reasoning as in the first part of the proof,
    we obtain a \EXPTIME algorithm for the universality problem of $C$,
    and thus a \TWOEXPTIME algorithm for the original inclusion problem $\lang A \subseteq \lang B$.
    If both the number of registers of $A$ and of $B$ is fixed,
    we get an \EXPTIME algorithm.
    The equivalence problem $\lang A = \lang B$ with both automata $A, B$ unambiguous
    reduces to two inclusion problems.
\end{proof}

\section{Further remarks and conclusions}
\label{sec:conclusions}

We say that $P = \sum_{i, j} p_{i, j}(n, k) \cdot \shift1^i \shift2^j$
is \emph{monic} if $p_{i^*, j^*} = 1$
where $(i^*, j^*)$ is the lexicographically largest pair $(i, j)$ s.t.~$p_{i, j} \neq 0$.
The cancelling relation \eqref{eq:cancelling:relation} in our examples \eqref{eq:cancelling:relation:example},
\eqref{eq:CR:two-registers},
\eqref{eq:CR:universal},
\eqref{eq:cancelling:relation:example:2}
happens to be monic in this sense.
\begin{conjecture}[Monicity conjecture]
    \label{conjecture:monic}
    There always exists a \emph{monic} cancelling relation \eqref{eq:cancelling:relation}
    for linrec systems obtained from automata equations in \cref{fig:linrec:equations},
    and similarly for their sections \eqref{eq:cancelling:relation:1}.
\end{conjecture}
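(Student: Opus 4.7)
The plan is to attack the conjecture by structural analysis of the skew-polynomial matrix $M \in W_2^{m \times m}$ encoding the linrec system from \cref{fig:linrec:equations}. The starting observation is favourable: every equation for $G_{p', \orbit{\bar a'}}$ has lexicographic leading term $\shift 1 \shift 2 \cdot G_{p', \orbit{\bar a'}}$ with coefficient $1$, and every polynomial coefficient on the right-hand side is either a nonnegative integer constant (cases \I and \II) or the monic linear polynomial $k+1-\width{\orbit{\bar a}}$ (case \III); $S_\Sigma$ and $G$ inherit the same feature. Thus $M$ is ``locally monic'' from the outset, and the conjecture asks that monicity survives elimination.

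The natural strategy is to reduce $M$ to Hermite normal form $U \cdot M = H$ over $W_2$ and inspect the first row of $H$, which furnishes the cancelling relation for $G$. Because each diagonal pivot starts out monic, the pseudo-division step $a \cdot A = Q \cdot B + R$ can initially be carried out with $a = 1$, so quotient and remainder are produced without spurious polynomial pre-multipliers. The proof would then proceed by induction on the elimination order, attempting to maintain as loop invariant that (i) every diagonal pivot remains monic in its lex-leading shift monomial, and (ii) every entry of the working matrix is a skew polynomial whose coefficients lie in the sub-semiring of $\N[n,k]$ generated by constants and shifted linear factors $k + i - j$ for $i,j \in \N$. The same inductive argument, applied after restriction to an $L$-section via \cref{lem:linrec:section}, would yield the monic one-dimensional relations \eqref{eq:cancelling:relation:1} for the sections.

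The main obstacle is the skew commutation rule $\shift_i \cdot p(n,k) = \sigma_i(p(n,k)) \cdot \shift_i$, which, as soon as an eliminated equation contributes a tail of the form $\shift 1^{i} \shift 2^{j} \cdot G_{\ell}$ with $i + j \geq 2$, forces combinations of rows through left-skew multiples that re-inject polynomial leading coefficients via commutator remainders of the shape $p(n+1, k) - p(n, k)$ and $p(n, k+1) - p(n, k)$. These remainders are not units in general, and it is precisely here that monicity could fail. A convincing proof would therefore need a finer invariant than mere monicity, perhaps a combinatorial interpretation of the leading coefficient of the first row of $H$ in terms of the asymptotics of the orbit-counting function $G$, forcing this coefficient to equal $1$ for integrality reasons. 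Establishing such an invariant, closed under all operations used by the Hermite algorithm of Giesbrecht and Kim, is in my view the crux: without it, one is forced to work in the fraction skew field $\Q(n,k)[\shift 1; \sigma_1][\shift 2; \sigma_2]$ and argue a posteriori that denominators clear, which itself demands precisely the sort of combinatorial invariant just described.
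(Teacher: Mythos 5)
You should first be aware that the paper does not prove this statement at all: it is posed explicitly as an open conjecture (\cref{conjecture:monic}) in \cref{sec:conclusions}, precisely because the authors could not establish it, and it is motivated by the fact that its truth would improve all the complexity bounds of \cref{thm:inclusion:complexity} by one exponential. There is therefore no proof of theirs to compare yours to; the only question is whether your proposal settles the problem, and it does not. Your own text concedes the crux: the loop invariant you would need --- that every pivot stays monic and every coefficient stays in the sub-semiring of $\N[n,k]$ generated by constants and factors $k+i-j$ --- is exactly what fails once elimination produces tails of combined $\shift1,\shift2$-degree at least two, because the CLM/pseudo-division steps then force left multiplication by non-unit polynomials in $k$. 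This is not a hypothetical worry: the Hermite normal form used in \cref{app:hermite} is only guaranteed to exist over the rational skew field $\Q(k,\shift1)(\shift2;\sigma_2)$, and the worked example \eqref{eq:Hermite:solution} produces entries such as $\shift2^2 - \frac{1}{\shift1^2-\shift1-(k+1)}\shift2$, whose denominator, once cleared, is precisely a non-unit leading polynomial coefficient. Your suggested rescue --- a combinatorial interpretation of the leading coefficient forcing it to equal $1$ ``for integrality reasons'' --- is the content of the conjecture restated, not an argument for it.

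What your structural observations do establish is only the easy base case: the system of \cref{fig:linrec:equations} is monic at the outset, with diagonal operators $\shift1\shift2$ and off-diagonal coefficients that are nonnegative integers or monic linear polynomials in $k$. This is consistent with the conjecture and with the computed examples \eqref{eq:cancelling:relation:example}, \eqref{eq:CR:two-registers}, \eqref{eq:CR:universal}, but monicity of the input system does not propagate for free through non-commutative elimination, and that propagation is the entire difficulty. To make progress you would need a genuinely new ingredient --- for instance, a proof that the left $W_2$-module generated by the rows of the automata matrix already contains a monic cancelling operator of minimal lexicographic order for $G$, or an arithmetic/asymptotic argument pinning down $p_{i^*,j^*}$ via the growth of the orbit-counting function. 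Nothing in the proposal supplies such an ingredient, so the statement remains, as in the paper, open.
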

\cref{conjecture:monic} has important algorithmic consequences.
The exponential complexity in \cref{thm:zeroness:complexity}
comes from the exponential growth of the rational number coefficients (heights) in the \HNF.
This is due to the use of \cref{lem:zeroness},
whose complexity depends on the maximal root of the leading polynomial $p_{i^*, j^*}(n, k)$ from \eqref{eq:cancelling:relation}.
If \cref{conjecture:monic} holds,
then $p_{i^*, j^*}(n, k) = 1$,
\cref{lem:zeroness} would yield a \PTIME algorithm for zeroness,
and consequently all complexities in \cref{thm:inclusion:complexity},
would drop by one exponential.
This provides ample motivation to investigate the monicity conjecture.

In order to obtain the lower \EXPTIME complexity for 
$\lang A \subseteq \lang B$ in \cref{thm:inclusion:complexity}
we have to fix the number of registers in \emph{both} automata $A$ and $B$.
The \EXPSPACE upper bound of Mottet and Quaas \cite{MottetQuaas:ToCS:2020}
holds already when only the number of registers of $B$ is fixed,
while we only obtain a \TWOEXPTIME upper bound in this case.
It is left for future work whether the counting approach can yield better bounds without fixing the number of registers of $A$.


The fact that the automata are non-guessing is crucial in each of the cases \I, \II, and \III~of the equations in \cref{fig:linrec:equations} in order to correctly count the number of orbits of runs.
For automata with guessing
from the fact that the current input $a$ is stored in a register we cannot deduce that $a$ actually appeared previously in the input word $w$,
and thus our current parametrisation in terms of length and width does not lead to a recursive characterisation.

in the last-step decomposition since we need to know that all values in $\bar a$

Finally, it is also left for further work
to extend the counting approach to other data domains such as total order atoms, random graph atoms, etc\dots,
and, more generally, to arbitrary homogeneous and $\omega$-categorical atoms under suitable computability assumptions (c.f.~\cite{ClementeLasota:CSL:2015}),
and to other models of computation such as register pushdown automata \cite{ChengKaminski:CFL:AI98,MurawskiTzevelekos:JCSS:2017}.

\newpage
\bibliographystyle{plainurl} 
\bibliography{bibliography}

\begin{appendix}
    \label{sec:appendix}
    \section{Additional material for \cref{sec:linrec}}
\label{app:linrec}

\subsection{One-dimensional linear recursive sequences}
\label{sec:one-dimensional:linrec}

Let $f(n) : \Q^\N$ be a one-dimensional sequence.
The shift operator $\shift {} : \Q^\N \to \Q^\N$ 
is defined as $(\shift {} f)(n) = f(n+1)$ for every $n \in \N$.
A one-dimensional sequence $f$ is \emph{linear recursive} (linrec)
if there are auxiliary sequences $f = f_1, f_2, \dots, f_m : \Q^\N$
satisfying a system of equations of the form 
\begin{align}
    \label{eq:linrec:1}
    \left\{
    \begin{array}{rcl}
        \shift {} f_1 &=& p_{1,1} \cdot f_1 + \cdots + p_{1, m} \cdot f_m, \\
        &\vdots& \\
        \shift {} f_m &=& p_{m,1} \cdot f_1 + \cdots + p_{m,m} \cdot f_m,
    \end{array}
    \right.
\end{align}
where the $p_{i, j} \in \Q[n]$ are univariate polynomials.
The \emph{order} of a linrec sequence is the smallest $m$
s.t.~it admits a description as above.
Allowing terms on the r.h.s.~of the form $p \in \Q[n]$
does not increase the expressiveness power since univariate polynomials are already linrec
and thus $p$ could be replaced by introducing an auxiliary variable for it.
If we fix the initial conditions $f_1(0), \dots, f_m(0)$,
then the system above has unique solution,
and we can moreover compute all the values $f_i(n)$'s by unfolding the definition.
Amongst innumerable others,
the \emph{Fibonacci sequence} $\shift {}^2 f = \shift {} f + f$ is linrec (even constant recursive)
since we can introduce an auxiliary sequence $g$ and write
$\shift {} f = f + g$ and $\shift {} g = f$.
An example using non-constant polynomial coefficients
is provided by the number $t(n)$ of \emph{involutions} of $\set{1, \dots, n}$ (a.k.a.~\emph{telephone numbers})
since $\shift {}^2 t = \shift {} t + (n+1) \cdot t$;
by introducing an auxiliary sequence $s(n)$,
we have a linrec system $\shift {} t = t + n \cdot s$ and $\shift {} s = t$.

\subsection{Examples of bidimensional linrec sequences}
\label{sec:linrec:examples}

There is a wealth of examples of linrec sequences.
The power sequence $n^k$ is bidimensional linrec since
for $n, k \geq 1$,
$n^k = n \cdot n^{k-1}$ and the two sections
$0^k$ and $n^0$ are certainly constant after the first element.
The sequence of \emph{binomial coefficients} ${n \choose k}$
is linrec since ${n \choose k} = {n - 1 \choose k - 1} + {n - 1 \choose k}$ for $n, k \geq 1$
and the two sections satisfy ${n \choose 0} = 1$ for $n \geq 0$
and ${0 \choose k} = 0$ for $k \geq 1$. 
The \emph{Stirling numbers of the first kind $s(n, k)$} are linrec since
$s(n, k) = s(n-1, k-1) - (n-1) \cdot s(n-1, k)$ for $n, k \geq 1$
and the two sections $s(n, 0) = s(0, k) = 0$ are constant for $n, k \geq 1$.
Similar recurrences appear for the
Stirling numbers of the second kind $\stirling n k$
(as remarked in the introduction),
the \emph{Eulerian numbers} $A(n, k) = (n-k) \cdot A(n-1, m-1) + (k+1) \cdot A(n-1, m)$
the \emph{triangle numbers} $T(n, k) = k \cdot T(n-1, k-1) + k \cdot T(n-1, k)$,
and many more.

As an additional example, consider the \emph{Bell numbers} $B(n)$,
which count the number of non-empty partitions of a set of $n$ elements.
Notice that $B(n)$ is not linrec,
in fact not even P-recursive \cite{Klazar:JCT:2003,Gerhold:EJC:2004}.
The well-known relationship $B(n) = \sum_{k=0}^n \stirling n k$
suggests to consider the partial sums
$C(n, k) = \sum_{i=0}^{k-1} \stirling n k$.
We have $C(n, 0) = 0$ and $C(n+1, k+1) = \stirling n k + C(n+1, k)$,
thus $C$ is linrec
and $B(n) = C(n+1, n+1)$ is its diagonal (shifted by one).


\subsection{Comparison with other classes of sequences}
\label{sec:linrec:comparison}

\subparagraph{Linrec vs.~C-recursive.}

A sequence $f : \Q^{\N^d}$ is \emph{C-recursive} if it satisfies a recursion as in \eqref{eq:linrec}
where the affine operators $A_{i,j}$
are restricted to be of the form $c_{i, j, 0} + c_{i, j, 1} \shift 1 + c_{i, j, 2} \shift 2$
for some constants $c_{i, j, 0}, c_{i, j, 1}, c_{i, j, 2} \in \Q$.
Thus bidimensional C-recursive sequences are linrec by definition.
Since the asymptotic growth of a 1-dimensional C-recursive sequence $f(n)$ is $O(r^n)$ for some constant $r \in \Q$,
the sequence $n! = n \cdot (n-1)!$ is linrec but not C-recursive,
and thus the inclusion is strict.
An useful fact is that zeroness of C-recursive sequences can be solved in \PTIME \cite{StearnsHunt:Unambiguous:SFCS81,Tzeng:SIAMJC:1992}.
\begin{lemma}
    \label{lem:zeroness:C-rec}
    The zeroness problem for a one-dimensional C-recursive sequence can be solved in \PTIME.
\end{lemma}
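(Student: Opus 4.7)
The plan is to reduce the zeroness problem for a one-dimensional C-recursive sequence to an elementary linear-algebraic computation over $\Q$. The key observation is that such a system is precisely an iteration $\bar f(n) = M^n \bar f(0)$ of a fixed rational matrix $M$, and zeroness of a distinguished coordinate of this iteration admits a short certificate via Cayley-Hamilton.

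First, I would gather the C-recursive system in matrix form $\shift{} \bar f = M \bar f$, where $\bar f = (f_1, \dots, f_m)^T$ and $M \in \Q^{m \times m}$ consists of the constant coefficients of the system. Iterating the recurrence gives $\bar f(n) = M^n \bar f(0)$, so that $f_1(n) = e_1^T M^n \bar f(0)$ for every $n \geq 0$, and zeroness of $f_1$ becomes the question whether $\bar f(0)$ lies in $V := \bigcap_{n \geq 0} \ker(e_1^T M^n)$, that is, whether $\bar f(0)$ belongs to the largest $M$-invariant subspace contained in $\ker(e_1^T)$.

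Second, I would use the Cayley-Hamilton theorem to observe that $M^n$ lies in the $\Q$-linear span of $I, M, \dots, M^{m-1}$ for every $n \in \N$, which collapses the infinite intersection to the finite one $V = \bigcap_{n=0}^{m-1} \ker(e_1^T M^n)$. The algorithm then computes the $m$ row vectors $e_1^T, e_1^T M, \dots, e_1^T M^{m-1}$ by repeated matrix-vector multiplication over $\Q$, and verifies that each one annihilates $\bar f(0)$; both steps are standard linear algebra in polynomial time.

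The only mild obstacle is controlling the bit complexity of the intermediate rationals: a priori, iterating $M$ could produce entries of exponential bit length. However, because only $m$ products are computed in total, the bit lengths stay polynomial in $m$ and in the bit length of $M$, so the whole procedure runs in deterministic polynomial time. An equivalent route would be to invoke directly the classical \PTIME equivalence algorithms for weighted automata over a field, since a $\Q$-valued C-recursive sequence is exactly the observable of a $\Q$-weighted automaton, which is precisely the content of the cited references \cite{Tzeng:SIAMJC:1992,StearnsHunt:Unambiguous:SFCS81}.
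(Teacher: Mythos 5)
Your proposal is correct and follows essentially the same route as the paper: both arguments invoke Cayley--Hamilton to conclude that zeroness is equivalent to the vanishing of the first $m$ values of the sequence (the paper phrases this as collapsing the system to a single order-$m$ recurrence, you phrase it as stabilisation of the kernel intersection $\bigcap_n \ker(e_1^T M^n)$ after $m$ steps), and both then check those $m$ values by direct polynomial-time computation. Your additional remark on bit-length control and the alternative via weighted-automata equivalence matches the references the paper itself cites.
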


\begin{proof}
    It is well-known that a one-dimensional C-recursive sequence $f$ of order $m$
    represented as in \eqref{eq:linrec:1} where the $p_{i, j}$'s are rational numbers in $\Q$,
    can be transformed into a single recurrence
    \begin{align*}
        \shift m f = c_0 \cdot \shift0 f + \cdots + c_{m-1} \cdot \shift{m-1} f,
    \end{align*}
    where $c_0, \cdots, c_{m-1} \in \Q$.
    C.f.~the proof of \cite[Lemma 1]{HalavaHarjuHirvensaloKarhumaki:techrep:2005}
    relying on the Cayley-Hamilton theorem,
    or the more recent proof of \cite[Proposition 1]{CadilhacMazowieckiPapermanPilipczukSenizergues:ICALP:2020}
    relying on a linear independence argument.
    It follows that $f = 0$ if, and only if,
    $f(n) = 0$ for $0 \leq n \leq m - 1$.
    The latter condition can be checked in \PTIME by \cref{lem:linrec:compute}.
\end{proof}

\subparagraph{Linrec vs.~P-recursive.}

In dimension one, linrec sequences are a special case of \emph{P-recursive sequences} \cite{Stanley:EJC:1980}.
The latter class can be defined as those sequences $f : \Q^\N$
satisfying a linear equation of the form
$p_k(n)f(n) + p_{k-1}(n)f(n-1) + \cdots + p_0(n)f(n-k) = 0$ for every $n \geq k$,
where $p_k(n), \dots, p_0(n) \in \Q[n]$.
Thus linrec corresponds to P-recursive with leading polynomial coefficient $p_k(i) = 1$.
The inclusion is strict.
The Catalan numbers $C(n)$ are P-recursive
since they satisfy $(n+2) \cdot C(n+1) = (4n + 2) \cdot C(n)$ for every $n \geq 0$.
However, they are not linrec,
and in fact not even polyrec (a more general class, c.f.~below),
since
\begin{inparaenum}[1)]
    \item by \cite[Theorem 6]{CadilhacMazowieckiPapermanPilipczukSenizergues:ICALP:2020}
    polyrec (and thus linrec) sequences are ultimately periodic modulo every sufficiently large prime, and
    \item $C(n)$ is not ultimately periodic modulo any prime $p$ \cite{AlterKubota:JCT:1973}. 
\end{inparaenum}

In dimension two, linrec and P-recursive sequences \cite{Lipshitz:D-finite:JA:1989} are incomparable.
The sequence $f(m, n) = m^n$ is linrec
since $f(m+1, n+1) = (m+1) \cdot f(m+1, n)$,
$f(m, 0) = 1$, and $f(0, n+1) = 0$.
The diagonal of $f$ is thus $f(n, n) = n^n$.
Since P-recursive sequences are closed under taking diagonals 
\cite[Theorem 3.8]{Lipshitz:D-finite:JA:1989}
and $n^n$ is not P-recursive \cite[Section 1, page 5]{FlajoletGerholdSalvy:EJC:2005},
it follows that $m^n$ is not P-recursive either (as a two-dimensional sequence).
%

\subparagraph{Linrec vs.~polyrec}

A one-dimensional sequence $f : \Q^\N$ is \emph{polynomial recursive} (polyrec)
if it satisfies a system of equations as in \eqref{eq:linrec:1}
where the rhs' are polynomial expressions in $\Q[f_1(n), \dots, f_m(n)]$
\cite[Definition 3]{CadilhacMazowieckiPapermanPilipczukSenizergues:ICALP:2020}%
\footnote{Since polynomial coefficients can already be defined in this formalism,
we would obtain the same class by allowing more general expressions in $\Q[n][f_1(n), \dots, f_m(n)]$.}.
In dimension one, the class of linrec sequences is strictly included in the class of polyrec sequences.
Consider the sequence $f(n) = 2^{2^n}$.
On the one hand,
it is polyrec since $f(n+1) = f(n)^2$.
On the other hand,
it is not linrec,
and in fact not even P-recursive,
since a P-recursive sequence $g(n)$
has growth rate $O((n!)^c)$ for some constant $c \in \N$
\cite[Proposition 3.11]{Lipshitz:D-finite:JA:1989}.
To the best of our knowledge,
polyrec sequences in higher dimension have not been studied yet.

\subsection{Zeroness problem}

Zeroness of one-dimensional C-recursive sequences is decidable
in \NCTWO \cite{Tzeng:SIAMJC:1992} (and thus in polylogarithmic space);
we recalled a simple argument leading to a \PTIME algorithm in \cref{lem:zeroness:C-rec}.
Zeroness of one-dimensional P-recursive sequences is decidable
(c.f.~\cite{CastiglioneMassazza:TCS:2017} and the corrections in \cite[Section 5]{BostanCarayolKoechlinNicaud:ICALP:2020}).
Zeroness of one-dimensional polyrec sequences is decidable,
and in fact the more general zeroness problem for polynomial automata is decidable with non-primitive recursive complexity \cite{BenediktDuffSharadWorrell:PolyAut:LICS:2017}
(polyrec sequences correspond to polynomial automata over a unary alphabet $\Sigma = \set {a}$).

\subsection{Proofs for \cref{sec:linrec}}

\lemLinrecSection*
\begin{proof}
    We prove the lemma for the $L$-section $f^L(n)$ defined as $f(n, L)$.
    Let the auxiliary sequences be $f = f_1, \dots, f_m$ as in \eqref{eq:linrec},
    and fix the initial conditions $f_j(0, \geq 1), f_j(\geq 1, 0), f_j(0, 0) \in \Q$
    for every $1 \leq j \leq m$.
    Let $f^K_j(n)$ be a new variable denoting the $K$-section $f_j(n, K)$,
    for every $1 \leq j \leq m$
    and $0 \leq K \leq L$.
    We show by induction on $K$ that all the $f^K_j$'s are linrec.
    In the base case $K=0$,
    $f^0_j(n)$ is linrec by setting
    $f^0_j(0) = f_j(0, 0) \in \Q$
    and $\shift1 f^0_j(n) = f_j(n+1, 0) = f_j(\geq 1, 0) \in \Q$.
    Notice that, strictly speaking, the latter is not a legal linrec equation
    since constants are allowed only in the base case
    and not in \eqref{eq:linrec:1} (which are linear systems and not affine ones).
    To this end, we introduce an extra variable $g_j(n)$
    and we define $g_j(0) = f_j(\geq 1, 0) \in \Q$,
    and we have the linrec equations
    \begin{align*}
      \shift1 f^0_j(n) &= g_j(n), \\
      \shift1 g_j(n) &= g_j(n).
    \end{align*}
    For the inductive step, we write
    \begin{align*}
        \shift 1 f^{M+1}_j(n)
            &= \shift 1 \shift 2 f_j(n, M) \\
            &= \sum_i (p_{i00}(n,M) + p_{i01}(n,M) \cdot \shift 1 + p_{i11}(n,M) \cdot \shift 2) f_i(n, M) \\
            &= \sum_i \left((p_{i00}(n,M) + p_{i01}(n,M) \cdot \shift 1) f^M_i(n) + p_{i11}(n,M) \cdot f^{M+1}_i(n)\right).
    \end{align*}
    By induction, each $f^M_i$ is one-dimensional linrec,
    and we can thus adjoin their corresponding systems of equations.
    We have introduced $m \cdot (L+1)$ new variables $f_j^L$'s
    and $m$ variables $g_j$'s
    (thus $m + m \cdot (L+1) + m = m \cdot (L+3)$ in total),
    and the same number of additional equations.
    The initial condition for the new variables $f^M_j$
    is $f^M_j(0) = f_j(0, M)$, which can be computed in \PTIME by \cref{lem:linrec:compute}.
    Moreover every polynomial coefficient appears already in the original system,
    but with the second parameter fixed to some $0 \leq M \leq L$.
    Therefore the degree does not increase
    and the height is bounded by $h \cdot L^d$.
\end{proof}

\section{Proofs for \cref{sec:URA}}
\label{app:URA}

\lemReductionToUniversality*

The two reductions in \cref{lem:reduction:to:universality}
are sufficiently generic to be useful also in other contexts.
For instance, in the context of nondeterministic finite automata
they imply that the inclusion problem $\lang A \subseteq \lang B$
with $A$ nondeterministic and $B$ unambiguous
reduces in \PTIME to the universality problem of an unambiguous finite automaton.
Since the latter problem is in \PTIME \cite[Corollary 4.7]{StearnsHunt:Unambiguous:SFCS81},
the inclusion problem is in \PTIME as well.
Notice that we didn't assume that $A$ is unambiguous,
as it is often done in analogous circumstances
\cite{StearnsHunt:Unambiguous:SFCS81},
\cite[Section 5]{BostanCarayolKoechlinNicaud:ICALP:2020}.
A similar reduction has recently been used in the context of inclusion problems between context-free grammars and finite automata \cite[Sec.~3.1]{Clemente:EPTCS:2020}
In the context of register automata,
the results of \cite{MottetQuaas:ToCS:2020}
do not make any unambiguity assumption on $A$.
%
%
%

\begin{proof}
    Consider two register automata $\AA$ and $\BB$ over finite alphabet $\Sigma$
    with transition relations $\goesto {}_\AA$, resp., $\goesto {}_\BB$.
    We assume w.l.o.g.~that they have the same number of registers.
    %
    %
    %
    Regarding the first point,
    consider the new finite alphabet $\Sigma' = {\goesto {}_\AA}$
    which equals exactly the set of transition rules of $\AA$.
    Let $h : \Sigma' \to \Sigma$ be the surjective homomorphism
    allowing us to recover the original letter
    and defined as $h(p \goesto{\sigma, \varphi} q) = \sigma$;
    We extend $h$ to a function $\hat h : (\Sigma' \times \A) \to (\Sigma \times \A)$
    by preserving the data value $\hat h(t, a) = \tuple{h(t), a}$.
    Consider the automaton $\AA'$ obtained from $\AA$
    by replacing every transition rule $t = (p \goesto {\sigma, \varphi}_\AA q)$ of $\AA$
    with $p \goesto {t, \varphi}_{\AA'} q$.
    Since $\AA'$ has the same set of control locations and number of transitions as $\AA$,
    it is clearly of polynomial size.
    Since $\AA$ is without guessing and orbitised,
    $\varphi$ uniquely determines the next register contents given the current configuration and input $\tuple{\sigma, a}$.
    Thus the only source of nondeterminism in $\AA$ resides in the fact that there may be several transitions over the same $\sigma$.
    This nondeterminism is removed in $\AA'$, since $\sigma$ is replaced by the transition $t$ itself.
    Consequently, $\AA'$ is deterministic.

    Consider the automaton $\BB'$ obtained from $\BB$
    by replacing every transition rule $p \goesto {\sigma, \varphi}_\BB q$
    with \emph{all} transitions of the form $p \goesto {t, \varphi}_{\BB'} q$
    s.t.~$h(t) = \sigma$.
    Clearly, $\BB'$ has the same control locations as $\BB$
    and number of transitions $O(\card{\goesto {}_\AA} \cdot \card{\goesto {}_\BB})$.
    Moreover, if $\BB$ is orbitised, then so it is $\BB'$
    Thus $\BB'$ is of polynomial size
    and by definition $\lang {\BB'} = \hat h^{-1}(\lang \BB)$
    and $\lang \BB = \hat h(\lang {\BB'})$.
    The correctness of the reduction follows from the following claims.
    \begin{claim*}
        $\lang \AA \subseteq \lang \BB$ if, and only if,
        $\lang {\AA'} \subseteq \lang {\BB'}$.
    \end{claim*}
    \begin{proof}[Proof of the claim]
        For the ``only if'' direction, assume $\lang \AA \subseteq \lang \BB$ and let $w \in \lang {\AA'}$.
        By the definition of $\AA'$, $\hat h(w) \in \lang \AA$,
        and thus $\hat h(w) \in \lang \BB$ by assumption.
        It follows that $w \in \hat h^{-1}(\lang \BB) = \lang {\BB'}$, as required.
        
        For the ``if'' direction, assume $\lang {\AA'} \subseteq \lang {\BB'}$
        and let $w = \tuple{\sigma_1, a_1} \cdots \tuple{\sigma_n, a_n} \in \lang \AA$.
        Let the corresponding accepting run in $\AA$ be
        \begin{align*}
            \pi = \tuple{p_0, \bar a_0} \goesto {\sigma_1, a_1} \cdots  \goesto {\sigma_n, a_n} \tuple{p_n, \bar a_n}.
        \end{align*}
        induced by the sequence of transitions $t_1 = (p_0 \goesto {\sigma_1, \varphi_1} p_1), \dots, t_n = (p_{n-1} \goesto {\sigma_n, \varphi_n} p_n)$.
        %
        By the definition of $\AA'$, $\rho := \tuple{t_1, a_1} \cdots \tuple{t_n, a_n} \in \lang {\AA'}$,
        and thus $\rho \in \lang {\BB'}$ by assumption.
        By definition of $\BB'$, $w = \hat h(\rho) \in \hat h(\lang {\BB'}) = \lang \BB$, as required.
    \end{proof}
    \begin{claim*}
        If $\BB$ is unambiguous, then so it is $\BB'$.
    \end{claim*}
    \begin{proof}[Proof of the claim]
        If there are two distinct accepting runs in $\BB'$ over the same input word $w \in (\Sigma' \times \A)*$,
        then applying $\hat h$ yields two distinct accepting runs in $\BB$ over $\hat h(w) \in (\Sigma \times \A)^*$.
    \end{proof}
    \begin{claim*}
        If $\BB$ is without guessing, then so it is $\BB'$.
    \end{claim*}
    \begin{proof}[Proof of the claim]
        If there is a reachable transition in $\sem{\BB'}$ of the form $\tuple{p, \bar a} \goesto {t, a} \tuple{q, \bar a'}$
        s.t.~some fresh $a'_i$ occurs in $\bar a'$,
        then the same holds for
        $\tuple{p, \bar a} \goesto {h(t), a} \tuple{q, \bar a'}$ in $\sem{\BB}$.
    \end{proof}
    
    We now show the second point, and we thus assume that $\AA$ is deterministic.
    By pure set-theoretic manipulations, we have
    \begin{align*}
        \lang \AA \subseteq \lang \BB
        \text{ iff }
            \lang \BB \cup \lang \AA^c = (\A \times A)^*
            \text{ iff }
                (\lang \BB \cap \lang \AA) \cup \lang \AA^c = (\A \times A)^*,
    \end{align*}
    where $\lang \AA^c$ denotes $(\A \times A)^* \setminus \lang \AA$.
    It suffices to observe that
    1) $\lang \AA^c$ is recognisable by a deterministic (and thus unambiguous and without guessing) register automaton
    constructible in \PTIME,
    2) $\lang \BB \cap \lang \AA$ is recognisable by an unambiguous and without guessing automaton of polynomial size
    (since $\AA$ is deterministic and $\BB$ unambiguous and without guessing), and
    3) the disjoint union of two unambiguous and without guessing languages is unambiguous and without guessing, and the complexity is again polynomial.
    We thus take as $\CC$ any unambiguous and without guessing automaton of polynomial size
    \st $\lang \CC = (\lang \BB \cap \lang \AA) \cup \lang \AA^c$.
    Finally, if $\AA$ and $\BB$ are orbitised,
    then $\CC$ is also orbitised.
\end{proof}

\section{Proofs for \cref{sec:URA2linrec}}
\label{app:URA2linrec}

\lemRNRorbit*
\begin{proof}
    Let $R_{p, \bar a}(n, k)$ be the set
    whose cardinality is counted by $\RNR {p, \bar a} (n, k)$:
    \begin{align}
        R_{p, \bar a}(n, k) = \setof{\orbit\pi_{\bar a}}{w \in (\Sigma \times \A)^n, \pi \in \runs {C_I} w {p, \bar a}, \width w = k}.
    \end{align}
    Let $\alpha : \A \to \A$ be an automorphism \st $\alpha(\bar a) = \bar b$.
    We claim that there exists a bijective function
    from $R_{p, \bar a}(n, k)$ to $R_{p, \bar b}(n, k)$.
    Consider the function
    $f$ that maps $\bar a$-orbits of runs to $\bar b$-orbits of runs
    defined as
    \begin{align*}
        f(\orbit \pi_{\bar a})
            = \orbit {\alpha(\pi)}_{\alpha (\bar a)}
            = \orbit {\alpha(\pi)}_{\bar b}.
    \end{align*}
    Since runs $\pi \in R_{p, \bar a}(n, k)$ are $\bar a$-supported
    and $f$ preserves the length of the run
    and the width of the data word labelling it,
    $f$ has the right type $f : R_{p, \bar a}(n, k) \to R_{p, \bar b}(n, k)$.
    We claim that $f$ is injective on $R_{p, \bar a}(n, k)$.
    Towards a contradiction, assume $\orbit \pi_{\bar a} \neq \orbit \rho_{\bar a}$
    but $\orbit {\alpha(\pi)}_{\bar b} = \orbit {\alpha(\rho)}_{\bar b}$.
    There exists a  $\bar b$-automorphism $\beta : \A \to \A$
    \st $\beta (\alpha (\pi)) = \alpha (\rho)$.
    Consequently, $\alpha^{-1} (\beta (\alpha (\pi))) = \rho$ maps $\pi$ to $\rho$.
    Moreover, $\alpha^{-1} \beta \alpha$ is an $\bar a$-automorphism
    since
    \begin{align*}
        \alpha^{-1} (\beta (\alpha (\bar a)))
            &= \alpha^{-1} (\beta (\bar b))
                &\text{(def.~of $\alpha$)} \\
            &= \alpha^{-1} (\bar b)
                &\text{($\beta$ is a $\bar b$-automorphism)} \\
            &= \bar a
                &\text{(def.~of $\alpha$)}.
    \end{align*}
    It follows that $\orbit \pi_{\bar a} = \orbit \rho_{\bar a}$,
    which is a contradiction.
    Thus, $f$ is injective.
    By a symmetric argument,
    there exists also an injective function
    $g : R_{p, \bar b}(n, k) \to R_{p, \bar a}(n, k)$.
\end{proof}

\lemURALinrec*
\begin{proof}
    We show that $G_{p, \orbit{\bar a}}(n, k)$
    counts the number of orbits of initial runs
    over words of length $n$ and width $k$
    ending in a configuration in the orbit $\tuple{p, \orbit{\bar a}}$.
    Let $S_{p, \bar a}(n, k)$ be the set of initial runs ending in $\tuple {p, \bar a}$ over words $w$ of length $n$ and width $k$:
    \begin{align}
        S_{p, \bar a}(n, k) = \setof{\pi}{w \in (\Sigma \times \A)^n, \pi \in \runs {C_I} w {p, \bar a}, \width w = k}.
    \end{align}
    We have $R_{p, \bar a}(n, k) = \orbits {\bar a} {S_{p, \bar a}(n, k)} = \setof {\orbit\pi_{\bar a}} {\pi \in S_{p, \bar a}(n, k)}$.
    We observe the following decomposition for $n, k \geq 0$:
    \begin{align*}
        S_{p', \bar a'}(n+1, k+1) =
         &\underbrace{\bigcup_{t=(p, \bar a \goesto {\sigma, a} p', \bar a'), a \in \bar a}
            \setof {\pi \cdot t} {\pi \in S_{p, \bar a}(n, k+1)}}_{\I}\ \cup \\
        &\underbrace{\bigcup_{t=(p, \bar a \goesto {\sigma, a} p', \bar a'), a \not\in \bar a}
            \setof {\pi \cdot t} {\pi \in S_{p, \bar a}(n, k), a \not\in \pi}}_{\II}\ \cup \\
        &\underbrace{\bigcup_{t=(p, \bar a \goesto {\sigma, a} p', \bar a'), a \not\in \bar a}
            \setof {\pi \cdot t} {\pi \in S_{p, \bar a}(n, k+1), a \in \pi}}_{\III},
    \end{align*}
    where the three unions marked by $\I, \II, \III$ are mutually disjoint.
    When we pass to their $\bar a'$-orbits,
    we also get a disjoint union of orbits:
    \begin{align*}
        R_{p', \bar a'}(n+1, k+1) =
            &\bigcup_{t=(p, \bar a \goesto {\sigma, a} p', \bar a'), a \in \bar a}
             {\setof {\orbit{\pi \cdot t}_{\bar a'}} {\pi \in S_{p, \bar a}(n, k+1)}}_{}\ \cup \\
             &\bigcup_{t=(p, \bar a \goesto {\sigma, a} p', \bar a'), a \not\in \bar a}
             {\setof {\orbit{\pi \cdot t}_{\bar a'}} {\pi \in S_{p, \bar a}(n, k), a \not\in \pi}}_{}\ \cup \\
             &\bigcup_{t=(p, \bar a \goesto {\sigma, a} p', \bar a'), a \not\in \bar a}
            {\setof {\orbit{\pi \cdot t}_{\bar a'}} {\pi \in S_{p, \bar a}(n, k+1), a \in \pi}}_{}.
    \end{align*}
    By taking cardinalities on both sides, we get
    \begin{align*}
        \card{R_{p', \bar a'}(n+1, k+1)} =
            &\card{\bigcup_{t=(p, \bar a \goesto {\sigma, a} p', \bar a'), a \in \bar a}
            \underbrace{\setof {\orbit{\pi \cdot t}_{\bar a'}} {\pi \in S_{p, \bar a}(n, k+1)}}_{R^\I_t}}\ + \\
            &\card{\bigcup_{t=(p, \bar a \goesto {\sigma, a} p', \bar a'), a \not\in \bar a}
            \underbrace{\setof {\orbit{\pi \cdot t}_{\bar a'}} {\pi \in S_{p, \bar a}(n, k), a \not\in \pi}}_{R^\II_t}}\ + \\
            &\card{\bigcup_{t=(p, \bar a \goesto {\sigma, a} p', \bar a'), a \not\in \bar a}
            \underbrace{\setof {\orbit{\pi \cdot t}_{\bar a'}} {\pi \in S_{p, \bar a}(n, k+1), a \in \pi}}_{R^\III_t}}.
    \end{align*}
    \begin{claim}
        Fix two transitions
        $t_1 = (p_1, \bar a_1 \goesto {\sigma_1, a_1} p', \bar a')$ and
        $t_2 = (p_2, \bar a_2 \goesto {\sigma_2, a_2} p', \bar a')$.
        If $R^\I_{t_1} \cap R^\I_{t_2} \not= \emptyset$
        then $\orbit {t_1} = \orbit {t_2}$.
    \end{claim}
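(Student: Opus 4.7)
The plan is to unpack the non-empty intersection hypothesis into a concrete automorphism witness, and then observe that this automorphism must conjugate $t_1$ to $t_2$, giving equality of their unsupported orbits.

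First, I would unfold the definition of $R^\I_{t_i}$: from $R^\I_{t_1} \cap R^\I_{t_2} \neq \emptyset$ one obtains runs $\pi_1 \in S_{p_1, \bar a_1}(n, k+1)$ and $\pi_2 \in S_{p_2, \bar a_2}(n, k+1)$ together with an $\bar a'$-automorphism $\alpha$ of $\A$ witnessing $\orbit{\pi_1 \cdot t_1}_{\bar a'} = \orbit{\pi_2 \cdot t_2}_{\bar a'}$, i.e.\ $\alpha(\pi_1 \cdot t_1) = \pi_2 \cdot t_2$ as runs of $\sem{\AA}$.

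Next, I would rely on the fact that the pointwise action of an automorphism on a sequence of semantic transitions commutes with concatenation, so that $\alpha(\pi_1) \cdot \alpha(t_1) = \pi_2 \cdot t_2$. Comparing the two sequences of transitions position by position, and in particular at the very last semantic transition, yields $\alpha(t_1) = t_2$.

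Finally, since $\alpha$ is in particular an (unsupported) automorphism of $\A$, this equality exhibits $t_1$ and $t_2$ in the same unsupported orbit, giving $\orbit{t_1} = \orbit{t_2}$ as required. The argument is short; the only subtlety lies in being precise about the pointwise action of an $\bar a'$-automorphism on a semantic transition $(p, \bar a) \goesto{\sigma, a} (p', \bar a')$ (namely, acting on $\bar a, a, \bar a'$ while fixing the control locations and the input letter $\sigma$, and fixing $\bar a'$ in particular by hypothesis on $\alpha$), and checking that this action respects concatenation of runs so that matching the last transitions of two equal transition sequences does force $\alpha(t_1) = t_2$.
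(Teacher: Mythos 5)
Your proposal is correct and follows essentially the same route as the paper's own proof: extract an $\bar a'$-automorphism $\alpha$ witnessing $\orbit{\pi_1 \cdot t_1}_{\bar a'} = \orbit{\pi_2 \cdot t_2}_{\bar a'}$, note that it maps the last transition $t_1$ to $t_2$, and conclude $\orbit{t_1} = \orbit{t_2}$. Your added care about how the pointwise action interacts with concatenation of runs is exactly the (implicit) content of the paper's one-line step ``in particular, $\alpha(t_1) = t_2$''.
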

    \begin{proof}[Proof of the claim]
        Let $\orbit {\pi_1 \cdot t_1}_{\bar a'} = \orbit {\pi_2 \cdot t_2}_{\bar a'}$
        for two runs
        $\pi_1 \in S_{p_1, \bar a_1}(n-1, k)$ and
        $\pi_2 \in S_{p_2, \bar a_2}(n-1, k)$.
        There exists an ($\bar a'$-)automorphism $\alpha$
        s.t.~$\alpha(\pi_1 \cdot t_1) = \pi_2 \cdot t_2$.
        In particular, $\alpha(t_1) = t_2$,
        i.e., $\orbit {t_1} = \orbit {t_2}$ as required.
    \end{proof}
    The claim above implies that the $R^\I_t$'s
    are disjoint for distinct orbits $\orbit t$'s,
    and similarly for $R^\II_t$ and $R^\III_t$.
    We thus obtain the equations
    \begin{align*}
        \card{R_{p', \bar a'}(n+1, k+1)} =
            &\sum_{\orbit{t=(p, \bar a \goesto {\sigma, a} p', \bar a')}:\; a \in \bar a}
            |\underbrace{\setof {\orbit{\pi \cdot t}_{\bar a'}} {\pi \in S_{p, \bar a}(n, k+1)}}_{R^\I_t}|\ + \\
            &\sum_{\orbit{t=(p, \bar a \goesto {\sigma, a} p', \bar a')}:\; a \not\in \bar a}
            |\underbrace{\setof {\orbit{\pi \cdot t}_{\bar a'}} {\pi \in S_{p, \bar a}(n, k), a \not\in \pi}}_{R^\II_t}|\ + \\
            &\sum_{\orbit{t=(p, \bar a \goesto {\sigma, a} p', \bar a')}:\; a \not\in \bar a}
            |\underbrace{\setof {\orbit{\pi \cdot t}_{\bar a'}} {\pi \in S_{p, \bar a}(n, k+1), a \in \pi}}_{R^\III_t}|.
    \end{align*}
    %
    %
    \begin{claim}
        \label{claim:bijection:I}
        The set of orbits
        $R^\I_t$ is in bijection with the set of orbits
        $$R_{p, \bar a}(n, k+1) = \setof{\orbit\pi_{\bar a}}{\pi \in S_{p, \bar a}(n, k+1)}.$$
    \end{claim}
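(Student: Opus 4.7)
My plan is to exhibit the natural ``append the fixed transition $t$'' map
\[
\phi : R_{p, \bar a}(n, k+1) \longrightarrow R^{\I}_t, \qquad \phi(\orbit{\pi}_{\bar a}) := \orbit{\pi \cdot t}_{\bar a'},
\]
and then check well-definedness, surjectivity, and injectivity in turn. The map $\phi$ is clearly of the right type: appending $t$ to a run of length $n$ ending in $(p, \bar a)$ produces a run of length $n+1$ ending in $(p', \bar a')$ whose underlying word has the same width $k+1$ (since $a \in \bar a$ contributes no new value).

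For well-definedness, the key ingredient is the observation already stated in the discussion of case~\I: since $a \in \bar a$ and the automaton is without guessing, every non-$\bot$ entry of $\bar a'$ already occurs in $\bar a$. Consequently, every $\bar a$-automorphism automatically fixes $a$ and $\bar a'$ pointwise. Thus, if $\alpha$ is an $\bar a$-automorphism with $\alpha(\pi_1) = \pi_2$, then $\alpha(t) = t$ and therefore $\alpha(\pi_1 \cdot t) = \pi_2 \cdot t$, giving $\orbit{\pi_1 \cdot t}_{\bar a'} = \orbit{\pi_2 \cdot t}_{\bar a'}$. Surjectivity is then immediate from the very definition of $R^{\I}_t$ as the set of orbits $\orbit{\pi \cdot t}_{\bar a'}$ with $\pi \in S_{p, \bar a}(n, k+1)$.

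For injectivity, suppose $\orbit{\pi_1 \cdot t}_{\bar a'} = \orbit{\pi_2 \cdot t}_{\bar a'}$, witnessed by some $\bar a'$-automorphism $\beta$ with $\beta(\pi_1 \cdot t) = \pi_2 \cdot t$. Reading off the configuration immediately before the final transition $t$ on both sides of this equality forces $\beta(\bar a) = \bar a$ pointwise (and $\beta(a) = a$, which in case~\I\ already follows from $a \in \bar a$). Hence $\beta$ is in fact an $\bar a$-automorphism, and stripping the last transition gives $\beta(\pi_1) = \pi_2$, whence $\orbit{\pi_1}_{\bar a} = \orbit{\pi_2}_{\bar a}$.

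The only subtle point I anticipate is the use of the without-guessing hypothesis to ensure $\bar a' \subseteq \bar a$ on non-$\bot$ entries, which is exactly what makes the ``$\bar a$-fixing'' and ``$\bar a'$-fixing'' automorphism groups interchangeable on the relevant tuples; this is precisely the place where case~\I\ differs from cases~\II\ and~\III, and where the subsequent analogous bijections will require different counting multiplicities. Everything else is a routine manipulation of orbit representatives.
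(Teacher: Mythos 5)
Your proof is correct and essentially identical to the paper's: you exhibit the same bijection, merely written in the opposite direction (appending $t$ rather than stripping it), so your well-definedness and injectivity arguments correspond to the paper's injectivity and well-definedness respectively, and surjectivity becomes immediate. The two key ingredients — using the without-guessing hypothesis to get $\bar a' \subseteq \bar a$ so that $\bar a$- and $\bar a'$-automorphisms are interchangeable, and reading off the penultimate configuration to upgrade an $\bar a'$-automorphism to an $\bar a$-automorphism — are exactly those of the paper.
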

    \begin{proof}[Proof of the claim]
        Indeed, consider the mapping $f : R^\I_t \to R_{p, \bar a}(n, k+1)$
        defined as%
        \begin{align*}
            f(\orbit{\pi \cdot t}_{\bar a'}) = \orbit\pi_{\bar a}
            \quad \text{ with }
                t = (p, \bar a \goesto {\sigma, a} p', \bar a').
        \end{align*}
        First of all $f$ is well-defined as a function:
        Assume $\orbit{\pi_1 \cdot t}_{\bar a'} = \orbit{\pi_2 \cdot t}_{\bar a'}$
        for two paths $\pi_1, \pi_2$ both ending in configuration $\tuple {p, \bar a}$.
        There exists an $\bar a'$-automorphism $\alpha$
        s.t.~$\alpha(\pi_1 \cdot t) = \pi_2 \cdot t$.
        In particular, $\alpha(\pi_1) = \pi_2$
        and since $\pi_1, \pi_2$ end up in the same configuration $\tuple {p, \bar a}$,
        $\alpha (\bar a) = \bar a$.
        Thus $\alpha$ is in fact a $\bar a$-automorphism
        and $\orbit{\pi_1}_{\bar a} = \orbit{\pi_1}_{\bar a}$
        as required.
        Secondly, $f$ is of the right type since
        $\orbit\pi_{\bar a} \in R_{p, \bar a}(n, k+1)$:
        $\pi \cdot t$ is a run over a word $w \cdot a$ of width $k+1$
        and thus $\pi$ is a run over a word $w$ also of width $k+1$
        because $a \in \bar a$, implying $a \in w$ since the automaton is non-guessing.
        We argue that $f$ is a bijection.
        First of all, $f$ is injective:
        If $f(\orbit{\pi_1 \cdot t}_{\bar a'}) = f(\orbit{\pi_2 \cdot t}_{\bar a'})$,
        then by definition of $f$ we have
        $\orbit {\pi_1}_{\bar a} = \orbit {\pi_2}_{\bar a}$.
        There exists an $\bar a$-automorphism $\alpha$
        s.t.~$\alpha(\pi_1) = \pi_2$.
        Since the automaton is without guessing, $\bar a' \subseteq \bar a$,
        and thus $\alpha$ is also an $\bar a'$-automorphism.
        Since $\alpha(t) = t$
        (due to the fact that $a \in \bar a$ and thus $\alpha (a) = a$),
        $\alpha(\pi_1\cdot t) = \pi_2 \cdot t$
        and thus $\orbit{\pi_1 \cdot t}_{\bar a'} = \orbit{\pi_2 \cdot t}_{\bar a'}$ as required.
        
        The mapping $f$ is also surjective.
        Indeed, let $\orbit \pi_{\bar a} \in R_{p, \bar a}(n, k+1)$.
        Thus $\pi$ ends in configuration $\tuple{p, \bar a}$
        and therefore $\pi \cdot t$ is a run.
        Consequently, $\orbit {\pi \cdot t}_{\bar a'} \in R^\I_t$.
        This is enough since, by the definition of $f$,
        $\orbit \pi_{\bar a} = f(\orbit {\pi \cdot t}_{\bar a'})$.
    \end{proof}
    
    \begin{claim}
        \label{claim:bijection:II}
        The set of orbits $R^\II_t$
        is in bijection with the set of orbits
        $$R_{p, \bar a}(n, k) = \setof{\orbit{\pi}_{\bar a}}{\pi \in S_{p, \bar a}(n, k)}.$$
    \end{claim}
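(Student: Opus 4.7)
The plan is to mirror the structure of Claim \ref{claim:bijection:I}, defining the natural map $f : R^\II_t \to R_{p, \bar a}(n, k)$ by $f(\orbit{\pi \cdot t}_{\bar a'}) = \orbit{\pi}_{\bar a}$ and then verifying well-definedness, correct typing, injectivity, and surjectivity. The new wrinkle compared to Case I is that the value $a$ read in the last step is globally fresh rather than already stored in a register, so a generic $\bar a$-automorphism need not fix $a$, and one has to ``patch'' such automorphisms before they can be used to witness equality of $\bar a'$-orbits.

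Well-definedness and typing go through essentially as in Claim \ref{claim:bijection:I}. If $\alpha$ is an $\bar a'$-automorphism sending $\pi_1 \cdot t$ to $\pi_2 \cdot t$, then since both $\pi_1$ and $\pi_2$ end in configuration $\tuple{p, \bar a}$, comparing final configurations forces $\alpha(\bar a) = \bar a$ pointwise, so $\alpha$ is already an $\bar a$-automorphism with $\alpha(\pi_1) = \pi_2$. For typing, the Case II constraint $a \not\in \pi$ means that $\pi \cdot t$ has width exactly one more than $\pi$, so that $\pi \in S_{p, \bar a}(n, k)$ whenever $\pi \cdot t \in S_{p', \bar a'}(n+1, k+1)$.

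For injectivity, given $\orbit{\pi_1}_{\bar a} = \orbit{\pi_2}_{\bar a}$ via an $\bar a$-automorphism $\alpha$ with $\alpha(\pi_1) = \pi_2$, I aim to produce an $\bar a'$-automorphism $\alpha'$ sending $\pi_1 \cdot t$ to $\pi_2 \cdot t$. Since $a \not\in \bar a$ and, by the Case II assumption, $a \not\in \pi_1 \cup \pi_2$, homogeneity of $(\A, =)$ lets me adjust $\alpha$ outside $\bar a \cup \pi_1$ into a new automorphism $\alpha'$ that agrees with $\alpha$ on $\bar a \cup \pi_1$ and additionally satisfies $\alpha'(a) = a$. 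Because the automaton is without guessing we have $\bar a' \subseteq \bar a \cup \set{a, \bot}$, so $\alpha'$ fixes $\bar a'$ pointwise and sends the transition $t$ to itself, yielding $\alpha'(\pi_1 \cdot t) = \pi_2 \cdot t$. Surjectivity is easier: given $\orbit{\pi}_{\bar a} \in R_{p, \bar a}(n, k)$, if $a \in \pi$ then a single application of homogeneity (using $a \not\in \bar a$) produces an $\bar a$-automorphism moving $a$ off $\pi$, giving a representative $\pi'$ with $\orbit{\pi'}_{\bar a} = \orbit{\pi}_{\bar a}$ and $a \not\in \pi'$, so that $\pi' \cdot t$ is a legal run and $\orbit{\pi' \cdot t}_{\bar a'} \in R^\II_t$ maps to $\orbit{\pi}_{\bar a}$.

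The main obstacle is the patching step inside injectivity: it relies simultaneously on the Case II restriction $a \not\in \pi_1, \pi_2$ (giving room to redefine $\alpha$ on $a$) and on the no-guessing hypothesis (which guarantees $\bar a' \subseteq \bar a \cup \set{a, \bot}$, so that fixing $\bar a$ and $a$ suffices to fix $\bar a'$). Weakening either of these would destroy the lifting from an $\bar a$-automorphism to an $\bar a'$-automorphism preserving $t$, which is precisely why Case II is stated with both constraints.
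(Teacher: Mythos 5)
Your proof is correct and follows essentially the same route as the paper: the same map $f(\orbit{\pi \cdot t}_{\bar a'}) = \orbit\pi_{\bar a}$, the same well-definedness and typing arguments, and the same key injectivity step of patching the $\bar a$-automorphism $\alpha$ into one that additionally fixes $a$ (possible because $a \not\in \pi_1 \cup \pi_2 \cup \bar a$) and is therefore an $\bar a'$-automorphism by the no-guessing inclusion $\bar a' \subseteq \bar a \cup \set{a}$. Your surjectivity argument is in fact slightly more careful than the paper's, which merely defers to \cref{claim:bijection:I}: you correctly note that one must first replace $\pi$ by an orbit representative avoiding $a$ before appending $t$, a step the Case~\I{} argument does not need.
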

    \begin{proof}[Proof of the claim]
        Consider the mapping
        \begin{align*}
            f(\orbit{\pi \cdot t}_{\bar a'}) = \orbit\pi_{\bar a},
                \quad \text{ with }
                    t = (p, \bar a \goesto {\sigma, a} p', \bar a').
        \end{align*}
        First of all, $f$ is well-defined as a function,
        and the argument is as in the previous point.
        Secondly, $f$ has the right type.
        If $\pi \cdot t$ is a run over a word $w \cdot a$ of width $k+1$,
        then $\pi$ is a run over $w$ of width $k$ since $a \not\in w$.
        Thus $f$ is indeed a mapping from $R_\II$ to $R_{p, \bar a}(n, k)$.
        We argue that $f$ is bijective. First of all, $f$ is injective.
        Consider $\bar a'$-orbit of runs
        $\orbit{\pi_1 \cdot t}_{\bar a'}, \orbit{\pi_2 \cdot t}_{\bar a'} \in R_\II$
        with $a \not\in \pi_1 \cup \pi_2$.
        %
        %
        %
        %
        %
        If $f(\orbit{\pi_1 \cdot t}_{\bar a'}) = f(\orbit{\pi_2 \cdot t}_{\bar a'})$,
        then by definition of $f$ we have
        $\orbit {\pi_1}_{\bar a} = \orbit {\pi_2}_{\bar a}$.
        There exists an $\bar a$-automorphism $\alpha$
        s.t.~$\alpha(\pi_1) = \pi_2$.
        Since $a \not\in \pi_1 \cup \pi_2$,
        there is an automorphism $\beta$
        s.t.~$\beta$ agrees with $\alpha$ on every data value in $\pi_1$
        (in particular, $\beta(\pi_1) = \pi_2$ and $\beta(\bar a) = \bar a$),
        and $\beta(a) = a$.
        Since the automaton is without guessing,
        $\bar a' \subseteq \bar a \cup \set a$.
        Thus, $\beta$ is a $\bar a'$-automorphism
        and $\beta(\pi_1 \cdot t) = \beta (\pi_1) \cdot \beta(t) = \pi_2 \cdot t$,
        i.e., $\orbit{\pi_1 \cdot t}_{\bar a'} = \orbit{\pi_2 \cdot t}_{\bar a'}$
        as required.
        The mapping $f$ is surjective by an argument as in the proof of \cref{claim:bijection:I}.
    \end{proof}
    
    \begin{claim}
        \label{claim:bijection:III}
        The set of orbits
        $R^\III_t$ with $k+1 \geq \width{\bar a}$
        is in bijection with $k+1 - \width{\bar a}$ disjoint copies of the set of orbits
        $$R_{p, \bar a}(n, k+1) = \setof{\orbit{\pi}_{\bar a}}{\pi \in S_{p, \bar a}(n, k+1)},$$
        and it is empty if otherwise $k+1 < \width{\bar a}$.
    \end{claim}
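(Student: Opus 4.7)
The plan is to establish the bijection by exhibiting the fiber structure of the forgetful map $f : R^\III_t \to R_{p, \bar a}(n, k+1)$ sending $\orbit{\pi\cdot t}_{\bar a'}$ to $\orbit{\pi}_{\bar a}$, and showing that every fiber has cardinality $k+1 - \width{\bar a}$ when $k+1 \ge \width{\bar a}$. The vacuous case is immediate: if $k+1 < \width{\bar a}$, then already $S_{p, \bar a}(n, k+1) = \emptyset$, because ``without guessing'' forces every value of $\bar a$ to appear in the word of $\pi$, so $\width{\bar a} \le \width\pi = k+1$, a contradiction. In the case $k+1 \ge \width{\bar a}$, I would first verify that $f$ is well-defined by exactly the argument of \cref{claim:bijection:I}: any $\bar a'$-automorphism mapping $\pi_1\cdot t$ to $\pi_2\cdot t$ pointwise fixes the penultimate configuration $\tuple{p,\bar a}$ and is therefore also an $\bar a$-automorphism.

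To count a fiber, I would fix a representative $\pi$ of $\orbit{\pi}_{\bar a}$, chosen so that $a\notin\pi$ (possible since $a\notin\bar a$), and let $V(\pi)$ denote the set of values appearing in the word of $\pi$ but not in $\bar a$, so $|V(\pi)| = k+1-\width{\bar a}$. For each $v\in V(\pi)$, let $\alpha_v$ be the transposition $(v\ a)$, which fixes $\bar a$, and set $\pi^{(v)} := \alpha_v(\pi)$. Each $\pi^{(v)}$ lies in $\orbit{\pi}_{\bar a}$ and contains $a$, so $\orbit{\pi^{(v)}\cdot t}_{\bar a'}$ lies in $f^{-1}(\orbit{\pi}_{\bar a})$. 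The claim reduces to showing that $v \mapsto \orbit{\pi^{(v)}\cdot t}_{\bar a'}$ is a bijection from $V(\pi)$ onto this fiber.

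For surjectivity, any fiber element has the form $\orbit{\beta(\pi)\cdot t}_{\bar a'}$ for some $\bar a$-automorphism $\beta$ with $a \in \beta(\pi)$; setting $v := \beta^{-1}(a) \in V(\pi)$, the composition $\beta\circ\alpha_v^{-1}$ sends $\pi^{(v)}\cdot t$ to $\beta(\pi)\cdot t$ and pointwise fixes $\bar a\cup\set a$, hence also $\bar a'$, by the non-guessing containment $\bar a'\subseteq\bar a\cup\set a$. Injectivity, the step I expect to demand the most care, goes by contradiction: if $v\neq v'$ and an $\bar a'$-automorphism $\gamma$ mapped $\pi^{(v)}\cdot t$ to $\pi^{(v')}\cdot t$, then matching the last transition $t$ on both sides forces $\gamma(a)=a$, and matching the penultimate configuration $\tuple{p,\bar a}$ forces $\gamma$ to pointwise fix $\bar a$. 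But the positions of $\pi$ carrying $v'$ still carry $v'$ in $\pi^{(v)}$ (since only $v$ was renamed) while carrying $a$ in $\pi^{(v')}$, forcing $\gamma(v') = a$; together with $\gamma(a)=a$ and $v'\neq a$, this contradicts the injectivity of $\gamma$.

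The only real difficulty is notational: keeping track of how automorphisms act on the three pairwise disjoint zones of values $\bar a$, $\set{a}$, and $V(\pi)$ across the two runs $\pi^{(v)}$ and $\pi^{(v')}$, and consistently appealing to the non-guessing inclusion $\bar a'\subseteq\bar a\cup\set a$ in order to transport between $\bar a$- and $\bar a'$-equivalences of runs.
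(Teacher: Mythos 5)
Your proof is correct, and it establishes the same bijection as the paper, but via a different bookkeeping device. The paper builds one global bijection $R^\III_t \to \set{1,\dots,k+1-\width{\bar a}} \times R_{p,\bar a}(n,k+1)$: it lists the $k+1-\width{\bar a}$ distinct non-register values of the word of $\pi$ in order of first occurrence (a sequence $D_\pi$), maps $\orbit{\pi\cdot t}_{\bar a'}$ to the pair $\tuple{j,\orbit\pi_{\bar a}}$ where $j$ is the position of $a$ in $D_\pi$, and then must check that $j$ is invariant under $\bar a'$-automorphisms (which follows because automorphisms commute with taking $D_\pi$ and the entries of $D_\pi$ are distinct). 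You instead count the fibers of the forgetful map, parametrising each fiber by the transpositions $(v\ a)$ for $v\in V(\pi)$ applied to a representative with $a\notin\pi$. The combinatorial content is identical --- in both arguments the multiplicity $k+1-\width{\bar a}$ is the number of choices of which non-register value of the input word the last letter coincides with, and both invoke the non-guessing inclusion $\bar a'\subseteq\bar a\cup\set a$ at exactly the same points to upgrade $\bar a$-automorphisms to $\bar a'$-automorphisms. Your injectivity step ($\gamma(v')=a=\gamma(a)$ with $v'\neq a$ contradicts injectivity of $\gamma$) is the paper's ``all elements of $D_{\pi_2}$ are distinct'' step in different clothing; what you gain is that no orbit-invariance of a canonical index needs to be verified, at the cost of fixing a representative with $a\notin\pi$ (which exists since $\A$ is infinite and $a\notin\bar a$, and which the paper does not need). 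Both treatments of the degenerate case $k+1<\width{\bar a}$ rest on the same observation that non-guessing forces $\width{\bar a}\leq\width w$.
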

    \begin{proof}[Proof of the claim]
        If $k+1 < \width{\bar a}$, then clearly since the automaton is non-guessing
        it could not have stored more distinct data values $\width{\bar a}$ in the register
        than the number of distinct data values $k+1$ in the input,
        and thus $R^\III_t = \emptyset$ in this case.
        In the following, thus assume $k+1 \geq \width{\bar a}$.
        Let $w = a_1 \cdots a_n \in \A^n$
        be the sequence of data values labelling the run $\pi$,
        and consider the non-contiguous subsequence
        $D_\pi = a_{i_1} \cdots a_{i_{k+1 - \width{\bar a}}}$ of $w$
        consisting of the $k+1 - \width{\bar a}$ distinct elements in $w \setminus\bar a$
        in their order of appearance in $w$ (and thus in $\pi$).
        Consider the function $f$ defined as
        \begin{align*}
        f(\orbit{\pi \cdot t}_{\bar a'}) = \tuple{j, \orbit\pi_{\bar a}}
            \quad \text{ with }
                t = (p, \bar a \goesto {\sigma, a} p', \bar a'),
        \end{align*}
        where $a \not\in\bar a$ equals the unique $a_{i_j} \in D_\pi$.
        First of all, $f$ is well-defined as a function:
        Assume $\tuple{\orbit{\pi_1 \cdot t}_{\bar a'}, \tuple{j_1, \orbit{\pi_1}_{\bar a}}}, \tuple{\orbit{\pi_2 \cdot t}_{\bar a'}, \tuple{j_2, \orbit{\pi_2}_{\bar a}}} \in f$
        with $\orbit{\pi_1 \cdot t}_{\bar a'} = \orbit{\pi_2 \cdot t}_{\bar a'}$.
        %
        There is an $\bar a'$-automorphism $\alpha$
        s.t.~$\alpha(\pi_1 \cdot t) = \pi_2 \cdot t$.
        In particular, $\alpha(\pi_1) = \pi_2$
        and $\alpha(t) = t$,
        which also implies $\alpha(a_1) = a_2$.
        From $\alpha(\pi_1) = \pi_2$,
        we even have that $\alpha$ is a $\bar a$-automorphism,
        and thus $\orbit{\pi_1}_{\bar a} = \orbit{\pi_2}_{\bar a}$.
        We now argue that $j_1 = j_2$.
        Assume $a$ appears in position $j_1$ in $D_{\pi_1}$
        and in position $j_2$ in $D_{\pi_2}$.
        Assume by way of contradiction that $j_1 \neq j_2$.
        We have that $\alpha(a) = a$ appears in position $j_1$
        in $\alpha (D_{\pi_1}) = D_{\alpha(\pi_1)} = D_{\pi_2}$,
        i.e., $a$ also appears in position $j_1$ in $D_{\pi_2}$.
        This is a contradiction, since all elements in $D_{\pi_2}$ are distinct.
        Thus $f$ is indeed a mapping from $R_\III$ to
        $\set{1, \dots, k+1 - \width{\bar a}} \times R_{p, \bar a}(n, k+1)$.
        
        We argue that $f$ is bijective.
        First of all, $f$ is injective.
        Consider $\bar a'$-orbit of runs
        $\orbit{\pi_1 \cdot t}_{\bar a'}, \orbit{\pi_2 \cdot t}_{\bar a'} \in R^\III_t$
        with $a \not\in \bar a, a \in \pi_1, a \in \pi_2$.
        Assume $f(\orbit{\pi_1 \cdot t}_{\bar a'}) = f(\orbit{\pi_2 \cdot t}_{\bar a'})$.
        By the definition of $f$, we have
        $\orbit {\pi_1}_{\bar a} = \orbit {\pi_2}_{\bar a}$,
        and $a$ occurs in the same position $j$ in $D_{\pi_1}$, resp., $D_{\pi_2}$.
        %
        %
        Consequently $\alpha(a)$ occurs at position $j$ in $\alpha(D_{\pi_1}) = D_{\alpha(\pi_1)} = D_{\pi_2}$,
        and thus $\alpha(a) = a$.
        There exists an $\bar a$-automorphism $\alpha$
        s.t.~$\alpha(\pi_1) = \pi_2$.
        Since the automaton is without guessing,
        $\bar a' \subseteq \bar a \cup \set {a}$,
        and thus $\alpha$ is even an $\bar a'$-automorphism.
        %
        This means $\orbit{\pi_1}_{\bar a'} = \orbit{\pi_2}_{\bar a'}$
        and $\alpha(t) = t$,
        and thus $\orbit{\pi_1 \cdot t}_{\bar a'} = \orbit{\pi_2 \cdot t}_{\bar a'}$
        as required.
        The mapping $f$ is surjective by an argument analogous as in the proof of \cref{claim:bijection:I}.
    \end{proof}
    Thanks to \cref{claim:bijection:I,claim:bijection:II,claim:bijection:III},
    we obtain the equations
    \begin{align*}
        \card{R_{p', \bar a'}(n+1, k+1)} =
            &\sum_{\orbit{p, \bar a \goesto {\sigma, a} p', \bar a'}:\; a \in \bar a}
                \card{R_{p, \bar a}(n, k+1)}\ + \\
            &\sum_{\orbit{p, \bar a \goesto {\sigma, a} p', \bar a'}:\; a \not\in \bar a}
                \card{R_{p, \bar a}(n, k)}\ + \\
            &\sum_{\orbit{p, \bar a \goesto {\sigma, a} p', \bar a'}:\; a \not\in \bar a}
                \card{\set{1, \dots, k+1 - \width{\bar a}} \times R_{p, \bar a}(n, k+1)}.
    \end{align*}
    By recalling the definition
    $\RNR {p, \bar a} (n+1, k+1) = \card {R_{p, \bar a}(n+1, k+1)}$,
    we obtain, as required,
    \begin{align*}
        \RNR {p', \bar a'} (n+1, k+1) =
            &\sum_{\orbit{p, \bar a \goesto {\sigma, a} p', \bar a'}:\; a \in \bar a}
                \RNR {p, \bar a}(n, k+1)\ + \\
            &\sum_{\orbit{p, \bar a \goesto {\sigma, a} p', \bar a'}:\; a \not\in \bar a}
                (\RNR {p, \bar a}(n, k) +
                    \max \set{k+1 - \width{\bar a}, 0} \cdot \RNR {p, \bar a}(n, k+1)).
        \qedhere
    \end{align*}
\end{proof}

\lemUniversalityToZeroness*
\begin{proof}
    We can effectively enumerate all orbits of transitions $\orbit {p, \bar a \goesto {\sigma, a} p', \bar a'}$ by enumerating all the exponentially many constraints up to logical equivalence \cite[Ch.~4]{Bojanczyk:AtomBook:2019},
    which can be done in \PSPACE
    since this is the complexity of first-order logic over the equality relation.
    Recall that the Bell number $B(n)$ counts the number of non-empty partitions of a set of $n$ elements.
    The system in \cref{fig:linrec:equations} contains
    $\ell \cdot B(d) + 2 = O(\ell \cdot 2^{d \cdot \log d})$
    equations and variables.
\end{proof}

\section{Proofs and additional material for \cref{sec:ore}}
\label{app:ore}

\thmCLM*
\begin{proof}
    We adapt a proof by Giesbrecht given in the case when $R$ is a field,
    for which there even is a \emph{least} common left multiple \cite[Sec.~2]{Giesbrecht:1998} (c.f.~also \cite[Sec.~2]{Ore:1933}).
    We consider the more general case where $R$ is a ring,
    in which case we will not have any minimality guarantee for the common left multiple.
    
    We first prove that $R[\partial; \sigma]$ has pseudo-division.
    Let consider the nonzero skew polynomials
    $$A = a_m \cdot \partial^m + \cdots + a_0 \quad \text {and} \quad
    B = b_n \cdot \partial^n + \cdots + b_0$$
    where $m\geq n$.
    Let $R_0 = A$.
    The leading term of $B$ is $b_n \cdot \partial^n$
    and thus the leading term of $\partial^{m-n} \cdot B$ is
    $\partial^{m-n} \cdot b_n \cdot \partial^n = \sigma^{m-n}(b_n) \cdot \partial^m$.
    Since $R$ is \CLM,
    there are $a_0'$ and $b_0'$ s.t.~$a_0' \cdot a_m = b_0' \cdot \sigma^{m-n}(b_n)$.
    Therefore, $R_1 := a_0' \cdot R_0 - b_0' \cdot \partial^{m-n} \cdot B$
    has degree strictly less than $m_0 := m = \deg {R_0}$.
    We repeat this operation obtaining a sequence of remainders:
    \begin{align*}
        a_0' \cdot R_0 &= b_0' \cdot \partial^{m_0-n} \cdot B + R_1, \\
        a_1' \cdot R_1 &= b_1' \cdot \partial^{m_1-n} \cdot B + R_2, \\
        &\ \ \vdots \\
        a_{k-1}' \cdot R_{k-1} &= b_{k-1}' \cdot \partial^{m_{k-1}-n} \cdot B + R_k, \\
        a_k' \cdot R_k &= b_k' \cdot \partial^{m_k-n} \cdot B + R_{k+1},
    \end{align*}
    where $m_i := \deg{R_i}$,
    $R_{i+1} := a_i' \cdot R_i - b_i' \cdot \partial^{m_i-n}$,
    and the degrees satisfy $m_0 > m_1 > \cdots > m_k > n > m_{k+1}$.
    By defining $a = a_k' a_{k-1}' \cdots a_0' \in R$,
    taking as quotient the skew polynomial
    $$P = b_k' \cdot \partial^{m_k-n} + a_k' b_{k-1}' \cdot \partial^{m_{k-1}-n} + \cdots + a_k' a_{k-1}' \cdots a_1' b_0' \cdot \partial^{m_0-n} \in R[\partial; \sigma]$$
    and as a remainder $Q = R_{k+1}$ we have, as required, $\deg Q < m$ and
    \begin{align*}
        a \cdot A = P \cdot B + Q.
    \end{align*}
    
    We now show that $R[\partial; \sigma]$ has the \CLM property.
    To this end, let $A_1, A_2 \in R[\partial; \sigma]$ with $\deg {A_1} \geq \deg {A_2}$ be given.
    We apply the pseudo-division algorithm above
    to obtain the sequence
    \begin{align*}
        a_1 \cdot A_1 &= Q_1 \cdot A_2 + A_3, \\
        a_2 \cdot A_2 &= Q_2 \cdot A_3 + A_4, \\
        &\ \ \vdots \\
        a_{k-2} \cdot A_{k-2} &= Q_{k-2} \cdot A_{k-1} + A_k, \\
        a_{k-1} \cdot A_{k-1} &= Q_{k-1} \cdot A_k + A_{k+1},
    \end{align*}
    with $a_1,\ldots,a_{k-1}\in R$, $A_{k+1} = 0$,
    and the degrees of the $A_i$'s are strictly decreasing:
    $\deg {A_2} > \deg {A_3} > \cdots > \deg {A_k}$. 
    Consider the following two sequences of skew polynomials
    \begin{align*}
        &S_1 = 1, \quad S_2 = 0, \quad S_i = a_{i-2} \cdot S_{i-2} - Q_{i-2} \cdot S_{i-1}, \textrm{ and } \\
        &T_1 = 0, \quad T_2 = 1, \quad T_i = a_{i-2} \cdot T_{i-2} - Q_{i-2} \cdot T_{i-1}.
    \end{align*}
    It can easily be verified that $S_i \cdot A_1 + T_i \cdot A_2 = A_i$ for every $0 \leq i \leq k+1$: The base cases $i = 0$ and $i = 1$ are clear;
    inductively, we have
    \begin{align*}
        S_i A_1 + T_i A_2
            &=
            (a_{i-2} \cdot S_{i-2} - Q_{i-2} \cdot S_{i-1})A_1 +
            (a_{i-2} \cdot T_{i-2} - Q_{i-2} \cdot T_{i-1})A_2 = \\
            &= a_{i-2} (S_{i-2}A_1 + T_{i-2}A_2) - Q_{i-2}(S_{i-1}A_1 + T_{i-1}A_2) = \\
            &= a_{i-2} A_{i-2} - Q_{i-2} A_{i-1} = A_i.
    \end{align*}
    In particular, at the end $S_{k+1} \cdot A_1 + T_{k+1} \cdot A_2 = 0$,
    as required. 
    
    It remains to check that $S_{k+1}$ is nonzero.
    We show the stronger property that $\deg {S_i} = \deg{A_2} - \deg{A_{i-1}}$
    for every $3 \leq i \leq k+1$.
    The base case $i = 3$ is clear.
    For the inductive step,
    notice that $\deg {Q_{i-2}} = \deg {A_{i-2}} - \deg {A_{i-1}} > 0$.
    Thus $\deg({Q_{i-2}}\cdot S_{i-1}) = \deg {A_{i-2}} - \deg {A_{i-1}} +  \deg{A_2} - \deg{A_{i-2}} = \deg{A_2} - \deg {A_{i-1}}$.
    Moreover, $\deg (a_{i-2} \cdot S_{i-2}) = \deg S_{i-2} = \deg{A_2} - \deg{A_{i-3}} < \deg{A_2} - \deg{A_{i-2}}$.
    Thus, $\deg {S_i} = \deg({Q_{i-2}}\cdot S_{i-1}) = \deg{A_2} - \deg {A_{i-1}}$,
    as required.
\end{proof}

\lemZeroness*
\begin{proof}
    We recall Lagrange's classical bound on the roots of univariate polynomials.
    \begin{theorem}[Lagrange, 1769]
        \label{thm:Lagrange}
        The roots of a complex polynomial
        $p(z) = \sum_{i=0}^d a_i \cdot z^i$ of degree $d$
        are bounded by $1 + \sum_{0\leq i\leq d-1} \frac{|a_i|}{|a_n|}$.
        In particular, the maximal root of a polynomial $p(k) \in \Q[k]$ with integral coefficients
        is at most $1 + d \cdot \max_i|a_i|$.
    \end{theorem}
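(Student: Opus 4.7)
The plan is to prove the bound by a direct triangle-inequality argument on the defining equation $p(z) = 0$, then specialise to the integral case. The key observation is that any counterexample to the bound must have modulus strictly greater than $1$, which in turn allows one to collapse the geometric-sum estimate.

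First I would fix a complex root $z$ of $p$, and split on whether $|z|\leq 1$ or $|z|>1$. In the first case the claimed bound $1 + \sum_{i=0}^{d-1} |a_i|/|a_d|$ is trivially at least $1$, so it already majorises $|z|$. In the second case I would rewrite $p(z)=0$ as
\begin{align*}
    a_d \cdot z^d \;=\; -\sum_{i=0}^{d-1} a_i \cdot z^i,
\end{align*}
apply the triangle inequality, and use $|z|^i \leq |z|^{d-1}$ (which holds because $|z|\geq 1$) to obtain
$
    |a_d|\cdot|z|^d \;\leq\; |z|^{d-1}\cdot \sum_{i=0}^{d-1}|a_i|.
$
Dividing through by $|a_d|\cdot|z|^{d-1}>0$ yields $|z| \leq \sum_{i=0}^{d-1} |a_i|/|a_d|$, which is in turn no larger than $1 + \sum_{i=0}^{d-1} |a_i|/|a_d|$. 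Either way the bound holds.

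For the second assertion I would invoke the first with $p \in \Q[k]$ having integer coefficients and $a_d\in\Z\setminus\set 0$, so that $|a_d|\geq 1$. Then each quotient $|a_i|/|a_d|$ is bounded by $|a_i|\leq \max_i|a_i|$, and summing over the $d$ indices $0\leq i\leq d-1$ gives $\sum_{i=0}^{d-1}|a_i|/|a_d| \leq d\cdot\max_i|a_i|$, whence the maximal root is at most $1 + d\cdot\max_i|a_i|$.

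There is no real obstacle here: the whole argument rests on the triangle inequality plus the dichotomy $|z|\leq 1$ vs.\ $|z|>1$, and the specialisation to integer coefficients is purely arithmetical. The only point requiring a moment of care is to keep the inequalities loose enough so that the ``$1+\cdots$'' form (rather than the sharper $\max\set{1,\sum|a_i|/|a_d|}$ form) is what is stated, which is automatic once the two cases are written out.
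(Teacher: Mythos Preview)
Your argument is correct and is the standard elementary proof of this classical root bound. Note, however, that the paper does not actually prove \cref{thm:Lagrange}: it is stated as a recalled classical result (``We recall Lagrange's classical bound\dots'') inside the proof of \cref{lem:zeroness} and used as a black box, with no proof supplied. So there is nothing to compare against; your write-up simply fills in a proof the paper omits.
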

    %
    By \cref{thm:Lagrange}, the largest root of the leading polynomial coefficient $p_{i^*, j^*}(k)$
    is $\leq 1 + \deg_k p_{i^*, j^*} \cdot \size {p_{i^*, j^*}} < 2 + e \cdot h$
    and similarly the roots of all the leading polynomial coefficients
    of the cancelling relations for the sections $f(0, n), \dots, f(i^*, n)$
    are $< 2 + e \cdot h$.
    In the following, let
    \begin{align*}
        K = 2 + j^* + e \cdot h.
    \end{align*}
    \begin{claim}
        The one-dimensional section $f(n, L) \in \Q^\N$ for a fixed $L \geq 0$
        is identically zero if, and only if,
        $f(0, L) = f(1, L) = \cdots = f(m \cdot (L+3), L) = 0$.
    \end{claim}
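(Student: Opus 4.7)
The plan is to prove the two directions separately. The forward direction is immediate: if $f(n, L) = 0$ for every $n$, then in particular the listed $m(L+3) + 1$ initial values vanish.

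For the backward direction, I would set $g(n) := f(n, L)$ and $m' := m \cdot (L+3)$. By \cref{lem:linrec:section}, $g$ is a one-dimensional linrec sequence of order at most $m'$, obtained by cascading the 2D system for $f$ through the $L$-sectioning procedure spelled out in the proof of that lemma. The crucial observation that drives the whole argument is that, since the polynomial coefficients of $f$ lie in $\Q[k]$ (i.e.\ are univariate in $k$ only), sectioning at a fixed value $k = L$ substitutes $k = L$ into each coefficient $p(k) \in \Q[k]$ and replaces it by the rational number $p(L) \in \Q$. Consequently, the resulting system defining $g$ and its $m'-1$ auxiliary sequences has \emph{constant} coefficients; in other words, $g$ is in fact a \emph{C-recursive} sequence of order at most $m'$, not a general polynomial-coefficient linrec.

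Once this reduction to the constant-coefficient case is in hand, the backward direction is routine. By Cayley--Hamilton applied to the companion matrix of the resulting first-order vector recurrence $v(n+1) = M \cdot v(n)$, the scalar sequence $g(n) = e_1^T v(n)$ satisfies a \emph{monic} linear recurrence of order at most $m'$ with constant rational coefficients. Such a recurrence propagates zeroness from any $m'$ consecutive vanishing values, and we are given $m' + 1$ of them, namely $g(0), g(1), \ldots, g(m')$; hence $g \equiv 0$. This is essentially the content of \cref{lem:zeroness:C-rec} in \cref{app:linrec}, to which the proof can defer. No genuine obstacle arises beyond recognising the collapse from $\Q[k]$ coefficients to rational constants upon $k$-sectioning; this reduction is precisely what rules out the subtler behaviour that polynomial-coefficient 1D linrec sequences can exhibit (where $m'+1$ vanishing values need not force identical zeroness).
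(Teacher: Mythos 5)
Your proof is correct and follows essentially the same route as the paper: both invoke \cref{lem:linrec:section} to get a one-dimensional linrec system of order $\leq m\cdot(L+3)$ for the section, observe that the $\Q[k]$ coefficients collapse to rational constants once $k=L$ is fixed (so the section is C-recursive), and then conclude via the Cayley--Hamilton argument of \cref{lem:zeroness:C-rec}. No gaps.
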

    \begin{proof}[Proof of the claim]
        The ``only if'' direction is obvious.
        By \cref{lem:linrec:section},
        for any fixed $L \in \N$
        the 1-dimensional $L$-section $f(n, L)$ is linrec of order $\leq m \cdot (L+3)$.
        In fact, it is C-recursive of the same order
        since the coefficients do not depend on $n$ and are thus constants.
        It follows that if $f(0, L) = f(1, L) = \cdots = f(m \cdot (L+3), L) = 0$,
        then in fact $f(n, L) = 0$ for every $n \in \N$
        (c.f.~the proof of \cref{lem:zeroness:C-rec}).
    \end{proof}
    \begin{claim}
        The one-dimensional section $f(M, k) \in \Q^\N$ for a fixed $0 \leq M \leq i^*$
        is identically zero if, and only if,
        $f(M, 0) = f(M, 1) = \cdots = f(M, d + e \cdot h) = 0$.
    \end{claim}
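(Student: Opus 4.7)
The ``only if'' direction is immediate, so suppose that $g(k) := f(M, k) \in \Q^\N$ vanishes at $k = 0, 1, \ldots, d + e \cdot h$; I would show that $g \equiv 0$ by invoking the hypothesized cancelling relation \eqref{eq:cancelling:relation:1} for $g$. Its $\shift 2$-degree is some $j^* \leq d$ and its leading polynomial coefficient $q_{j^*}(k) \in \Q[k]$ has degree $\leq e$ and height $\leq h$, so by the Lagrange bound (\cref{thm:Lagrange}) every real root of $q_{j^*}$ has absolute value strictly less than $1 + e \cdot h$; in particular $q_{j^*}(k) \neq 0$ for every integer $k \geq 1 + e \cdot h$.

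At any such $k$ the cancelling relation can be solved for its top term,
\begin{align*}
    g(k + j^*) = \frac{1}{q_{j^*}(k)} \sum_{0 \leq j < j^*} q_j(k) \cdot g(k + j),
\end{align*}
expressing $g(k + j^*)$ as a $\Q$-linear combination of the preceding $j^*$ values of $g$. I would then propagate the zeros by strong induction on $N$: the base range $0 \leq N \leq d + e \cdot h$ is covered by hypothesis, and for $N \geq d + e \cdot h + 1$ I would apply the displayed identity at $k := N - j^*$. The inequality $j^* \leq d$ then gives $k \geq 1 + e \cdot h$, so $q_{j^*}(k) \neq 0$, while each term $g(k+j) = g(N - j^* + j)$ with $j < j^*$ on the right-hand side sits at an index strictly below $N$ and is therefore zero by the induction hypothesis, forcing $g(N) = 0$.

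This is the direct one-dimensional analogue of the preceding claim for $f(n, L)$, with the added twist that the leading coefficient $q_{j^*}$ is a non-trivial polynomial rather than a constant, so one must first throw away integer arguments that could hit its roots. There is no substantial obstacle beyond the off-by-one bookkeeping linking the size of the initial zero block, the recurrence order $j^*$, and the Lagrange bound on the real roots of $q_{j^*}$; once one checks that $N \geq d + e \cdot h + 1$ together with $j^* \leq d$ already pushes $k = N - j^*$ past all those roots, the induction closes mechanically.
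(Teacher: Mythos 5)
Your proof is correct and takes essentially the same route as the paper's: the paper's own argument for this claim is just the terse observation that the hypothesized cancelling relation \eqref{eq:cancelling:relation:1} has leading coefficient $q_{j^*}(k)$ of degree $\leq e$ and height $\leq h$, so its roots are bounded via Lagrange's theorem and the initial block of zeros propagates forward. Your write-up merely makes explicit the strong induction and the index bookkeeping ($k = N - j^* \geq 1 + e\cdot h$ once $N \geq d + e\cdot h + 1$ and $j^* \leq d$) that the paper leaves implicit.
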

    \begin{proof}[Proof of the claim]
        The ``only if'' direction is obvious.
        %
        %
        By assumption, $f(M, k)$ admits a cancelling relation \eqref{eq:cancelling:relation:1}
        of $\shift2$-degree $\ell^* \leq d$
        and leading polynomial coefficient $q_{\ell^*}(k)$ of degree $\leq e$ and height $\leq h$.
        By \cref{thm:Lagrange}, the roots of $q_{\ell^*}(k)$ are bounded by $O(e \cdot h)$.
        It follows that if $f(M, 0) = f(M, 1) = \cdots = f(M, d + e \cdot h) = 0$
        then $f(M, n)$ is identically zero.
    \end{proof}
    \begin{claim}
        \label{claim:zeroness}
        $f = 0$ if, and only if,
        all the one-dimensional sections
        $$f(n, 0), \dots, f(n, K), f(0, k), \dots, f(i^*, k) \in \Q^\N$$
        are identically zero.
    \end{claim}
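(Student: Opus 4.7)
The only-if direction is immediate. For the if direction I plan to run a nested induction on $(n, k)$ ordered lexicographically, with outer induction on $n$ and inner induction on $k$, using the two families of vanishing sections to cover the base cases and the cancelling relation \eqref{eq:cancelling:relation} to handle the inductive step.

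The base cases are handed to us by hypothesis: if $0 \leq n \leq i^*$ then $f(n, k) = 0$ by the assumed vanishing of $f(0, k), \ldots, f(i^*, k)$; and for any fixed $n > i^*$, the values $f(n, 0), \ldots, f(n, K)$ vanish by assumption. For the inductive step I fix $n > i^*$ and $k > K$ and apply \eqref{eq:cancelling:relation} at the point $(n - i^*, k - j^*)$, obtaining
\begin{align*}
p_{i^*, j^*}(k - j^*) \cdot f(n, k) = \sum_{(i, j) <_\text{lex} (i^*, j^*)} p_{i, j}(n - i^*, k - j^*) \cdot f(n - i^* + i, k - j^* + j).
\end{align*}
Every right-hand-side term has either $n - i^* + i < n$ (handled by the outer IH) or $n - i^* + i = n$ together with $k - j^* + j < k$ (handled by the inner IH), so the entire sum vanishes.

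It will then remain to divide by the leading coefficient, and this is where the choice of $K = 2 + j^* + e \cdot h$ does the work: since $k - j^* > K - j^* = 2 + e \cdot h$, Lagrange's bound (\cref{thm:Lagrange}) forces $k - j^*$ to exceed every real root of the univariate polynomial $p_{i^*, j^*} \in \Q[k]$ of degree $\leq e$ and height $\leq h$, so $p_{i^*, j^*}(k - j^*) \neq 0$ and we conclude $f(n, k) = 0$. The main obstacle will be purely bookkeeping: aligning the shifts in the cancelling relation with the inductive thresholds, classifying each RHS term as falling under the outer or inner IH, and verifying that the $K$-threshold dominates every root encountered in the induction; the essential algebraic input is Lagrange's root bound together with the fact that the recurrence, solved for its lex-largest term, expresses $f(n, k)$ in terms of lexicographically smaller values only.
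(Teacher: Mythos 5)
Your proof is correct and follows essentially the same route as the paper: the paper phrases the argument as a minimal lexicographic counterexample $(n,k)$ with $f(n,k)\neq 0$, which is forced to satisfy $n > i^*$ and $k > K$ and is then contradicted via the cancelling relation and the Lagrange root bound, while your nested induction on $(n,k)$ in lexicographic order is the contrapositive formulation of the identical argument with the same key ingredients (evaluating \eqref{eq:cancelling:relation} at $(n-i^*, k-j^*)$, observing all right-hand-side indices are lexicographically smaller, and using $k - j^* > 2 + e\cdot h$ to divide by $p_{i^*,j^*}(k-j^*)$).
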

    \begin{proof}[Proof of the claim]
        The ``only if'' direction is obvious.
        For the ``if'' direction,
        assume all the sections above are identically zero
        as one-dimensional sequences.
        By way of contradiction, let $(n, k)$ be the pair of indices
        which is minimal for the lexicographic order s.t.~$f(n, k) \neq 0$.
        By assumption, we necessarily have $n > i^*$ and $k > K$.
        By \eqref{eq:cancelling:relation} we have
        \begin{align*}
            p_{i^*, j^*}(k - j^*) \cdot f(n, k) =
            \sum_{(i, j) <_\text{lex} (i^* , j^*)} p_{i, j}(n - i^*, k - j^*)\cdot f(n - (i^* - i), k - (j^* - k)).
        \end{align*}
        %
        Since $k > K$, $k - j^* > K - j^* \geq 2 + e \cdot h$,
        we have $p_{i^*, j^*}(k - j^*) \neq 0$ 
        since the largest root of $p_{i^*, j^*}$ is $\leq 1 + e \cdot h$.
        Consequently, there exists $(i, j) <_\text{lex} (i^* , j^*)$
        s.t.~$f(n - (i^* - i), k - (j^* - k)) \neq 0$,
        which contradicts the minimality of $(n, k)$.
    \end{proof}
    By putting together the three claims above it follows that $f$ is identically zero
    if, and only if, $f$ is zero on the set of inputs
    \begin{align*}
        \set{0, \dots, m \cdot (K+3)} \times \set {0, \dots, K} \cup
            \set{0, \dots, i^*} \times \set {0, \dots, d + e \cdot h}.
    \end{align*}
    Let $N = 1 + \max \set{m\cdot (K+3), i^*}$
    and $K' = 1 + \max \set{K, d + e \cdot h}$.
    The condition above can be verified
    by computing $O(N \cdot K')$ values for $f(n, k)$,
    each of which can be done in deterministic time $\softO {m \cdot N \cdot K'}$
    thanks to \cref{lem:linrec:compute},
    together yielding $\softO {m \cdot N^2 \cdot (K')^2}$
    which is $\softO {p(m, i^*, j^*, d, e, h)}$ for a suitable polynomial $p$.
\end{proof}

\thmZeronessLinrec*
\begin{proof}
    We interpret the system of equations \eqref{eq:linrec}
    as the following linear system of equations with coefficients $P_{i, j} \in W_2$.
    \begin{align}
        \label{eq:linrec:Weyl}
        \left\{\begin{array}{rcl}
            P_{1, 1} \cdot f_1 + \cdots + P_{1, m} \cdot f_m &=& 0, \\
            &\vdots& \\
            P_{m, 1} \cdot f_1 + \cdots + P_{m, m} \cdot f_m &=& 0.
        \end{array}\right.
    \end{align}
    The idea is to eliminate all variables $f_m, \dots, f_2$ from \eqref{eq:linrec:Weyl}
    until a \CR for $f_1$ remains.
    %
    %
    W.l.o.g.~We show how to remove the last variable $f_m$.
    %
    %
    %
    %
    The skew polynomial coefficients of $f_m$ in equations $1, \dots, m$
    are $P_{1, m}, \dots, P_{m, m} \in W_2$.
    By $m$ applications of \cref{cor:CLM},
    we can find left multipliers $Q_1, \dots, Q_m \in W_2$
    s.t.~$Q_1 \cdot P_{1, m} = Q_2 \cdot P_{2, m} = \cdots = Q_m \cdot P_{m, m}$.
    We obtain the new system not containing $f_m$
    \begin{align*}
        \left\{\begin{array}{rrrcl}
            (Q_1P_{1, 1}-Q_mP_{m, 1}) \cdot f_1 + \cdots&
                    +(Q_1P_{1, m-1}-Q_mP_{m, m-1}) \cdot f_{m-1}& =& 0, \\
            &&&\vdots& \\
            (Q_{m-1}P_{m-1, 1}-Q_mP_{m, 1}) \cdot f_1 + \cdots &
                    +(Q_{m-1}P_{m-1, m-1}-Q_mP_{m, m-1}) \cdot f_{m-1}& =& 0.
        \end{array}\right.
    \end{align*}
    After eliminating all the other variables $f_{m-1}, \dots, f_2$ in the same way,
    we are finally left with an equation $R \cdot f_1 = 0$ with $R \in W_2$.
    Thanks to a linear independence-argument
    that will be presented in \cref{lem:full_rank_linrec},
    the operator $R$ is not zero.
    (Notice that the univariate assumption is not necessary to carry over the elimination procedure
    and obtain a cancelling relation.)
    Notice that the polynomial coefficients in $R$ are univariate polynomials in $\Q[k]$.
    Let $p_{i^*, j^*}(k)$ be leading polynomial coefficient of $R$ when put in the form \eqref{eq:cancelling:relation}.
    By an analogous elimination argument we can find cancelling relations $R_1, \dots, R_{i^*}$
    for each of the one-dimensional sections
    $f^*(0, k), \dots, f(i^*, k) \in \Q^\N$
    (which are effectively one-dimensional linrec sequences by \cref{lem:linrec:section}) respectively.
    We then conclude by \cref{lem:zeroness}.
\end{proof}

The elimination algorithm presented so far
suffices to decide the universality, inclusion, and equivalence problems
for unambiguous register automata without guessing.
\begin{corollary}
    The universality and equivalence problems for unambiguous register automata without guessing are decidable.
    The inclusion problem $\lang A \subseteq \lang B$ for register automata without guessing is decidable when $B$ is unambiguous.
\end{corollary}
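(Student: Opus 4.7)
The plan is to assemble the corollary from the three main building blocks established in the excerpt: the reduction of inclusion to universality (\cref{lem:reduction:to:universality}), the reduction of universality to linrec zeroness (\cref{lem:universality2zeroness}), and the decidability of the zeroness problem for linrec sequences with univariate polynomial coefficients (\cref{thm:zeroness:linrec}).

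First, I would handle the universality problem for an unambiguous register automaton $\AA$ without guessing. By \cref{lem:universality2zeroness}, universality of $\AA$ reduces effectively to the zeroness problem for the sequence $G$ defined by the linrec system in \cref{fig:linrec:equations}. A crucial point to emphasise is that the polynomial coefficients appearing in these equations involve only the variable $k$ (the width) and the constants coming from the automaton's orbit structure, so the resulting system has univariate coefficients in $\Q[k]$, which is exactly the hypothesis required by \cref{thm:zeroness:linrec}. Applying that theorem yields decidability of universality.

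Next, I would reduce the inclusion problem $\lang \AA \subseteq \lang \BB$ with $\BB$ unambiguous and both automata without guessing to the universality case. Using part~(1) of \cref{lem:reduction:to:universality}, we first orbitise $\AA$ (which preserves being without guessing and the properties of $\BB$) and then convert to the case where $\AA$ is deterministic, still keeping $\BB$ unambiguous and without guessing. Part~(2) of the same lemma then produces a register automaton $\CC$ which is unambiguous and without guessing such that $\lang\AA \subseteq \lang\BB$ iff $\CC$ is universal, at which point the previous paragraph applies. Finally, equivalence $\lang\AA = \lang\BB$ is handled by performing two inclusion tests, each of which is decidable by the argument just given, since both automata are assumed unambiguous without guessing.

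There is no real obstacle here, as all the heavy lifting has been done in \cref{lem:universality2zeroness,lem:reduction:to:universality,thm:zeroness:linrec}; the only subtlety worth checking explicitly is that the univariate assumption on the polynomial coefficients in \cref{thm:zeroness:linrec} is genuinely met by the system produced in \cref{fig:linrec:equations}, which is immediate from the form of the three cases $\I, \II, \III$ in the last-step decomposition (the only non-constant coefficient is the factor $k+1 - \width{\orbit{\bar a}}$ arising in case~$\III$, which lies in $\Q[k]$).
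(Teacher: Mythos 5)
Your proposal is correct and follows essentially the same route as the paper: reduce inclusion/equivalence to universality via \cref{lem:reduction:to:universality}, reduce universality to zeroness of the linrec system in \cref{fig:linrec:equations}, and conclude by \cref{thm:zeroness:linrec}, noting (as the paper does) that the coefficients there are univariate in $\Q[k]$. The only cosmetic difference is that you invoke the packaged \cref{lem:universality2zeroness} where the paper cites its ingredients \cref{lem:univ-iff-count-URA} and \cref{lem:URA:linrec} directly.
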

Notice that in the inclusion problem $\lang A \subseteq \lang B$
we do not assume that $A$ is unambiguous.
\begin{proof}
    By \cref{lem:reduction:to:universality},
    inclusion and equivalence reduce to universality.
    By \cref{lem:univ-iff-count-URA},
    the universality problem reduces to the zeroness problem of the sequence
    $\RNR{}$ from \cref{fig:linrec:equations},
    which is linrec by its definition and \cref{lem:URA:linrec}.
    Since the polynomial coefficients in \cref{fig:linrec:equations} are univariate,
    we can decide zeroness of $\RNR{}$ by \cref{thm:zeroness:linrec}.
\end{proof}

\subsection{\CLM examples}
\label{app:CLM:examples}

In this section we illustrate the \CLM property with two examples,
the first for $W_1$ and the second for $W_2$.

\begin{example}
    \label{ex:CLM:W1}
    We give an example of application of the \CLM property in $W_1$.
    Consider the two polynomials $F_1 = \shift1^2 -(k+1) \shift1$
    and $F_2 = -\shift1^2 + \shift1$.
    Since $k$ and $\shift1$ commute, $F_2 \cdot F_1 = F_1\cdot F_2$
    and the multipliers have degree $2$.
    The \CLM algorithm finds multipliers of degree 1:
    %
    \begin{align*}
        1 \cdot F_1 &= (-1) \cdot F_2 + F_3
            &&\textrm{with } F_3 = -k\shift1, \\
        k \cdot F_2 &= \shift1 \cdot F_3 + F_4
            &&\textrm{with } F_4 = k\shift1, \\
        1\cdot F_3 &= (-1) \cdot F_4.
    \end{align*}
    We have $s_1 = 1, s_2 = 0, s_3 = 1, s_4 = -\shift1, s_5 = - \shift1 + 1$
    and $t_1 = 0, t_2 = 1, t_3 = 1, t_4 = -k+\shift1, t_5=-k+\shift1+1$.
    We can thus verify that $s_5 \cdot F_1 = -t_5 \cdot F_2$.
\end{example}

\begin{example}
    \label{ex:CLM:W2}
    We give an example of \CLM property in $W_2$.
    Consider the skew polynomials
    $G_1 = (-\shift1^2 + \shift1)\shift2^2$ and
    $G_2 = (\shift1^2-k\shift1)\shift2-\shift1$.
    Since $\shift2G_2 = (\shift1^2-(k+1)\shift1)\shift2^2-\shift1\shift2$,
    thanks to \cref{ex:CLM:W1} we have
    \begin{align*}
        (\shift1 - k - 1) \cdot G_1 &= (-\shift1+1)\shift2\cdot G_2 + G_3,
        &&\textrm{with } G_3 = (-\shift1^2 + \shift1)\shift2,
    \end{align*}
    which gives the first pseudo-division.
    Analogously, since $(-\shift1+1)\cdot(\shift1^2-k\shift1) = (\shift1-k)\cdot(-\shift1^2+\shift1)$, we have the second and third pseudo-divisions
    \begin{align*}
        (-\shift1+1)\cdot G_2 &= (\shift1-k)\cdot G_3 + G_4,
        &&\textrm{with } G_4 = \shift1^2-\shift1,\\
        1 \cdot G_3 &= -\shift2 \cdot G_4.
    \end{align*}
    We thus have $s_1 = 1, s_2 = 0, s_3 = \shift1 - k - 1, s_4 = -(\shift1-k)\cdot(\shift1 - k - 1), s_5 = (\shift1 - k - 1) - \shift2\cdot(\shift1-k)\cdot(\shift1 - k - 1) = (\shift1 - k - 1) - (\shift1-k-1)\cdot(\shift1 - k - 2)\shift2$ and
    $t_1 = 0, t_2 = 1, t_3 = -(-\shift1+1)\shift2, t_4 = (-\shift1+1)+(\shift1-k)\cdot (-\shift1+1)\shift2, t_5 = -(-\shift1+1)\shift2 +\shift2\cdot ((-\shift1+1)+(\shift1-k)\cdot (-\shift1+1)\shift2) = (\shift1-k-1)\cdot (-\shift1+1)\shift2^2$.
    One can check that $s_5 \cdot G_1 = -t_5 \cdot G_2$.
\end{example}

\subsection{\CR examples}
\label{app:CR:examples}

In this section we present detailed examples of \CR.

\begin{example}
    \label{ex:elimination}
    We continue our running \cref{ex:weyl:equations}.
    Recall the starting equations:
    \equationsExWeyl
    In order to eliminate $G_p$, we need to find a common left multiple of $a_0 = \shift1\shift2$ and $b_0 = 1 + (k+1) \shift2$, i.e.,
    we need to find skew polynomials $c, d$ s.t.~$c\cdot a_0 = d \cdot b_0$.
    It can be verified that taking $c = 1 + (k+2)\shift2$ and $d = \shift1\shift2$ fits the bill.
    We thus remove the first equation and left-multiply by $d$ the second and third equations
    (with $S_1 = S$ for simplicity from now on):
    \begin{align*}
        &\begin{array}{rrrrrrr}
            \underbrace{(\shift 1^2\shift 2^2 - (k+1) \shift 1 \shift 2^2 - \shift 1\shift 2)}_{a_1} \cdot G_q&&& = 0, \\
            & + (\shift 1^2 \shift 2^2 - \shift 1 \shift 2^2) \cdot G_r&& = 0, \\
            - \underbrace{\shift 2}_{b_1} \cdot G_q & - (1 + k \shift 2) \cdot G_r& + \shift 1 \shift 2 \cdot G_s& = 0,
        \end{array} \\
        &(\shift 1 \shift 2 - (k+1)\shift 2 - 1) \cdot S = 0, \\
        &G_s - S + G = 0.
    \end{align*}
    We now remove $G_q$. Since its coefficient $b_1 = \shift 2$ in the third equation is already a multiple of its coefficient $a_1 = \shift 1^2\shift 2^2 - (k+1) \shift 1 \shift 2^2 - \shift 1\shift 2$ in the first equation,
    it suffices to remove the first equation and left-multiply the third equation by ``$\shift 1^2\shift 2 - (k+1) \shift 1 \shift 2 - \shift 1$'':
    \begin{align*}
        &\begin{array}{rrrrrr}
            \underbrace{(\shift 1^2 \shift 2^2 - \shift 1 \shift 2^2)}_{a_2} \cdot G_r&& = 0, \\
            - \underbrace{(\shift 1^2\shift 2 - (k+1) \shift 1 \shift 2 - \shift 1)(1 + k \shift 2)}_{b_2} \cdot G_r& + (\shift 1^2\shift 2 - (k+1) \shift 1 \shift 2 - \shift 1)\shift 1 \shift 2 \cdot G_s& = 0,
        \end{array} \\
        &(\shift 1 \shift 2 - (k+1)\shift 2 - 1) \cdot S = 0, \\
        &G_s - S + G = 0.
    \end{align*}
    We now remove $G_r$, and thus we need to find a \CLM of
    $a_2 = \shift 1^2 \shift 2^2 - \shift 1 \shift 2^2 = (\shift 1 - 1)\shift 1 \shift 2^2$ and
    $b_2 = (\shift 1^2\shift 2 - (k+1) \shift 1 \shift 2 - \shift 1)(1 + k \shift 2) = (\shift 1\shift 2 - (k+1) \shift 2 - 1)(1 + k \shift 2)\shift 1$.
    It can be checked that for $d = (\shift 1 - 1)\shift 2^2$ there exists some $c$ (whose exact value is not relevant here) s.t.~$c \cdot a_2 = d \cdot b_2$.
    We can thus remove the first equation and left-multiply the second one by $d$:
    \begin{align*}
        \underbrace{(\shift 1 - 1)\shift 2^2(\shift 1^2\shift 2 - (k+1) \shift 1 \shift 2 - \shift 1)\shift 1 \shift 2}_{a_3} \cdot G_s& = 0, \\
        (\shift 1 \shift 2 - (k+1)\shift 2 - 1) \cdot S &= 0, \\
        G_s - S + G &= 0.
    \end{align*}
    We can now immediately remove $G_s$ by left-multiplying the last equation by its coefficient $a_3$ in the first equation:
    \begin{align*}
        \underbrace{(\shift 1 \shift 2 - (k+1)\shift 2 - 1)}_{b_3} \cdot S &= 0, \\
        \underbrace{(\shift 1 - 1)\shift 2^2(\shift 1^2\shift 2 - (k+1) \shift 1 \shift 2 - \shift 1)\shift 1 \shift 2}_{a_3}\cdot(- S + G) &= 0.
    \end{align*}
    In order to finish it remains to remove $S$.
    The general approach is to find a \CLM of $a_3$ and $b_3$,
    but we would like to avoid performing too many calculations here.
    Since $b_3 \cdot S = 0$,
    we also have $b_3 \shift 1^2 \shift 2 \cdot S = 0$
    (since $\shift 1^2 \shift 2 \cdot S$ is just a shifted version of $S$,
    and since $a_3$ can be written as
    $a_3 = (\shift 1 - 1)\shift 2^2(\shift 1\shift 2 - (k+1) \shift 2 - 1)\shift 1^2 \shift 2 = (\shift 1 - 1)\shift 2^2 \cdot b_3 \cdot \shift 1^2 \shift 2$,
    it follows that $a_3 \cdot S$ = 0 and we immediately have
    \begin{align*}
        \underbrace{(\shift 1 - 1)\shift 2^2(\shift 1\shift 2 - (k+1)\shift 2 - 1)\shift 1^2 \shift 2}_{a_4}\cdot G &= 0.
    \end{align*}
    Since $a_4$ can be expanded to (as a sum of products).
    \begin{align*}
        a_4 &=
            (\shift 1 - 1)\shift 2^2(\shift 1\shift 2 - (k+1)\shift 2 - 1)\shift 1^2 \shift 2 = \\
            &=
            (\shift 1\shift 2 - (k+3) \shift 2 - 1) \shift 1^2 (\shift 1 - 1)\shift 2^3 = \\
            &=
            \shift 1^4\shift 2^4 - (k+3)\shift 1^3\shift 2^4 - \shift 1^3\shift 2^3 - \shift 1^3\shift 2^4 + (k+3)\shift 1^2\shift 2^4 + \shift 1^2\shift 2^3 = \\
            &=
            \shift 1^4\shift 2^4 - (k+4) \shift 1^3\shift 2^4 - \shift 1^3\shift 2^3 + (k+3)\shift 1^2\shift 2^4 + \shift 1^2\shift 2^3 ,
    \end{align*}
    the sought cancelling relation for $G$, obtained by expanding the equation above,
    is
    \begin{align*}
        G(n+4, k+4) =
            &\ (k+4) \cdot G(n+3, k+4) + G(n+3, k+3)\; + \\
            &- (k+3) \cdot G(n+2, k+4) - G(n+2, k+3).
    \end{align*}
\end{example}

\begin{example}
    \label{ex:CR:two-registers}
    We show a \CR example coming from a two-register deterministic automaton.
    There are three control locations $p, q, r$,
    which are all accepting and $p$ is initial.
    When going from $p$ to $q$ the automaton stores the input in its first register $x_1$.
    When going from $q$ to $r$, the automaton checks that the input is different from what is stored in $x_1$ and stores it in $x_2$.
    Then the automaton goes from $r$ to $r$ itself
    by reading an input $y$ different from both registers,
    $x_1' = x_2$ and $x_2' = y$.
    In this way the automaton accepts all words s.t.~any three consecutive data values are pairwise distinct.
    We have the counting equations:
    \begin{align*}
        G_p(n+1, k+1) &= 0, \\
        G_q(n+1, k+1) &= G_p(n, k) + (k+1) \cdot G_q(n, k+1), \\
        G_r(n+1, k+1) &= G_q(n, k) + k\cdot G_q(n, k+1) + G_r(n, k) + (k-1)\cdot G_r(n, k+1) , \\
        G(n, k) &= S(n, k) - G_p(n, k) - G_q(n, k) - G_r(n, k).
    \end{align*}
    We find the following \CR:
    \begin{align}
        \label{eq:CR:two-registers}
        \begin{array}{rl}
            G(n+4, k+3)\ = &(2k+4) \cdot G(n+3, k+3) + 2 \cdot G(n+3, k+2)\ + \\
            &- (k^2+4k+3)\cdot G(n+2, k+3)\ + \\
            &- (2k+3)\cdot G(n+2, k+2)-G(n+2, k+1).
        \end{array}
    \end{align}
\end{example}

In the last example we consider an automaton which is almost universal.

\begin{example}
    \label{ex:URA:universal}
    Consider the following register automaton $\AA$ with one register $x$
    with unary finite alphabet $\card\Sigma = 1$.
    There are four control locations $p, q, r, s$
    of which $p$ is initial and $s$ is final.
    The automaton accepts all words of length $\geq 2$
    by unambiguously guessing whether or not the last two letters are equal.
    The transitions are
    $p \goesto {x = \bot \land x' = \bot} p$,
    $p \goesto {x = \bot \land x' = y} q$,
    $p \goesto {x = \bot \land x' = y} r$,
    $q \goesto {x = y \land x' = x} s$,
    $r \goesto {x \neq y \land x' = x} s$.
    Equations:
    \begin{align*}
        G_p(n+1, k+1) &= G_p(n, k) + (k + 1) \cdot G_p(n, k+1), \\
        G_q(n+1, k+1) &= G_p(n, k) + (k + 1) \cdot G_p(n, k+1) = G_p(n+1, k+1), \\
        G_r(n+1, k+1) &= G_p(n, k) + (k + 1) \cdot G_p(n, k+1) = G_p(n+1, k+1), \\
        G_s(n+1, k+1) &= G_q(n, k+1) + G_r(n, k) + k \cdot G_r(n, k+1) = \\
        &=(k+2)G_p(n, k+1) + G_p(n, k), \\
        G(n, k) &= S(n, k) - G_s(n, k).
    \end{align*}
    %
    %
    %
    %
    %
    %
    %
    We find the following \CR:
    \begin{align}
        \label{eq:CR:universal}
          G(n+3, k+3) = G(n+2, k+2) + (k+3) \cdot G(n+2, k+3).
    \end{align}
    Thanks to the relationship above,
    we manually check that
    $G(2, 0) = G(2, 1) = G(2, 2) = 0$,
    we can conclude that $G(n, k) = 0$ for every $n, k \geq 2$.
    Indeed, the automaton accepts all words of length $\geq 2$.
\end{example}

\section{Hermite forms}
\label{app:hermite}

In this section we present an elimination algorithm based on the computation of the Hermite normal form for matrices of skew polynomials.
An easy but important observation in order to get good bounds
is that the first Weyl algebra
$W_1 = \Q[k][\shift1; \sigma_1]$ from \cref{sec:ore}
is in fact isomorphic to the (commutative) ring of bivariate polynomials
$\Q[k, \shift1]$.
In places where we need to obtain good complexity bounds,
we will use $W_1'$ instead of $W_1$
and $W_2'$ instead of $W_2$, where
\begin{align}
    \label{eq:our:skew:polynomials}
    W_1' = \Q[k, \shift1]
        \quad \text{and} \quad
            W_2' = W_1'[\shift2; \sigma_2] = \Q[k, \shift1] [\shift2; \sigma_2].
\end{align}
A skew polynomial $P \in W_2$ (or $W_2'$) can be written in a unique way as a finite sum
$\sum_{i,j, k} a_{i, j, k} z^i \shift1^j \shift2^k$ with $a_{i, j, k} \in \Q$.
We define $\deg_z P$ as the largest $i$ s.t.~$a_{i,j, k} \not= 0$ for some $j, k$;
$\deg_{\shift1}$ and $\deg_{\shift2}$ are defined similarly.
The \emph{combined degree} $\deg_{\shift1+\shift2} P$ is the largest $j+k$ s.t.~$a_{i,j,k} \not= 0$ for some $i$,
and similarly for $\deg_{z+\shift1}$.
The \emph{height} of $P$ is $\size P = \max_{i, j, k} {\size {a_{i, j, k}}}$.

\subparagraph{Rational skew fields.}

The improved elimination algorithm does not work in the skew polynomial ring,
but in its rational field extension. To this end we need to introduce skew fields.
A \emph{skew field} $\F$ is a field where multiplication is not necessarily commutative \cite{Cohn:1995}.
(Skew fields are sometimes called \emph{division rings} since they are noncommutative rings where multiplicative inverses exist.)
%
%
%
%
%
In the same way as the ring of polynomials $\F[x]$ over a field $\F$
can be extended to a rational polynomial field $\F(x)$,
a skew polynomial ring $\F[\partial;\sigma]$ over a skew field $\F$
can be extended to a \emph{rational skew field} $\F(\partial; \sigma)$.
Its elements are formal fractions $\frac{P}{Q} = Q^{-1}P$
quotiented by $Q^{-1}P\sim S^{-1}R$
if there exist $A, B \in \F[\partial;\sigma]$
s.t.~$A \cdot P = B \cdot R$ and $A \cdot Q = B \cdot S$.
Given $P,Q,R,S\in \F[\partial;\sigma]$
s.t.~
$S_1 \cdot Q = Q_1 \cdot S$ and 
$S_1 \cdot P = P_1 \cdot S$
for some $P_1, S_1, Q_1 \in \F[\partial;\sigma]$,
we can define the operations:
\begin{align*}
    \frac{P}{Q} + \frac{R}{S} = \frac{S_1 \cdot P + Q_1 \cdot R}{S_1 \cdot Q}, \qquad  
    \frac{P}{Q} \cdot \frac{R}{S} = \frac{P_1 R}{S_1 Q}, \qquad
    \left(\frac{P}{Q}\right)^{-1} = \frac{Q}{P}.
\end{align*}
%
%
It was shown by O.~Ore that this yields a well-defined skew field structure to $\F(\partial;\sigma)$
and that unique reduced representations $\frac P Q$ exist \cite{Ore:1931}%
\footnote{Actually, Ore considered formal quotients of the form $PQ^{-1}$,
but we found it more convenient to work in the symmetric definition.}.
In our context, we define the skew fields
\begin{align}
    \label{eq:our:skew:fields}
    \F(W_1') = \Q(k, \shift 1)
        \quad \text{and} \quad
            \F(W_2') = \F(W_1') (\shift 2;\sigma_2) = \Q(k, \shift 1)(\shift 2;\sigma_2)
\end{align}
associated to the corresponding iterated Weyl algebras $W_1'$ and $W_2'$. 
Note that $\F(W_1')$ is in fact just a rational (commutative) field of bivariate polynomials. %
For $R = \frac P Q \in \F(W_1')$ or $\F(W_2')$ written in reduced form,
we define $\size R = \max \set{\size P, \size Q}$.

\subparagraph*{Non-commutative linear algebra.}

Let $\F$ be a skew field.
We denote by $\F^{n\times m}$ the ring of matrices $A$ with $n$ rows and $m$ columns
with entries in $\F$, equipped with the usual matrix operations ``$+$'' and ``$\cdot$''.
The \emph{height} of $A \in\F^{n\times m}$ is $\size A = \max_{i, j} \size {A_{i, j}}$.
The \emph{left $\F$-module} spanned by the rows of $A = (u_1, \dots, u_n)$
is the set of vectors in $\F^n$ of the form
$a_1 \cdot u_1 + \cdots + a_n \cdot v_n$
for some $a_1, \dots, a_n \in \F$.
%
The \emph{rank} of $A$ is the dimension of the left $\F$-module spanned by its rows.
%
In other words, the rank of $A$ is the largest integer $r$
s.t.~we can extract $r$ rows $u_{i_1}, \ldots, u_{i_r}$ that are free:
for every $a_1, \ldots, a_r \in \F$,
$a_1 \cdot u_{i_1} + \cdots + a_k \cdot u_{i_k} = 0 $ implies $a_1 = \cdots = a_k = 0$.
A square matrix $A \in \F^{n\times n}$ is \emph{non-singular}
if there exists a matrix $B$ such that $A \cdot B = I$
, where $I \in \F^{n\times n}$ is the identity matrix.
%

The following lemma implies that matrices arising from linrec systems have full rank.
We used this lemma to justify why the elimination algorithm
in the proof of \cref{thm:zeroness:linrec}
successfully produces a non-zero \CR.
\begin{lemma}
    \label{lem:full_rank_linrec}
    Let $A \in W_2 = \Q[n,k][\shift1;\sigma_1][\shift2;\sigma_2]^{n\times n}$ be a matrix of skew polynomials s.t.~the combined degree $\deg_{\shift1+\shift2} A_{i, i}$ of the diagonal entries
    is strictly larger than the combined degree $\deg_{\shift1+\shift2} A_{j, i}$ of every other entry $j \neq i$ in the same column $i$.
    Then $A$ has rank $n$.
\end{lemma}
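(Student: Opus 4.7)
The plan is a column-dominance argument adapted to the noncommutative setting of $W_2$. First I would reduce to showing independence over $W_2$: given a nontrivial dependence $\sum_j a_j u_j = 0$ with coefficients $a_j \in \F(W_2)$, iterating the common left multiple property (\cref{cor:CLM}) yields $c \in W_2 \setminus \{0\}$ with $c \, a_j \in W_2$ for every $j$, producing a nontrivial dependence with all coefficients in $W_2$. Thus it suffices to prove that $\sum_j a_j u_j = 0$ with $a_j \in W_2$ forces $a_j = 0$ for every $j$.

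The key technical ingredient is that the combined degree $\deg_{\shift1+\shift2}$ is multiplicative on $W_2$: for nonzero $P, Q \in W_2$, a direct expansion using the commutation rules $\shift1 \cdot p = \sigma_1(p) \cdot \shift1$ and $\shift2 \cdot p = \sigma_2(p) \cdot \shift2$ gives $\deg_{\shift1+\shift2}(P Q) = \deg_{\shift1+\shift2}(P) + \deg_{\shift1+\shift2}(Q)$, because $\sigma_1$ and $\sigma_2$ are automorphisms of the integral domain $\Q[n,k]$ and so leading coefficients of products do not cancel. In particular $W_2$ is a noncommutative domain, and the hypothesis $\deg A_{i,i} > \deg A_{j,i}$ for $j \neq i$ already implies $A_{i,i} \neq 0$ for each $i$.

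The main argument is then a leading-degree comparison. Assume for contradiction that $a_1, \dots, a_n \in W_2$ are not all zero with $\sum_j a_j u_j = 0$, and let $I = \{j : a_j \neq 0\}$. If $|I| = 1$, say $I = \{i\}$, the $i$-th coordinate reads $a_i A_{i,i} = 0$, forcing $a_i = 0$ by the domain property, a contradiction. Otherwise pick $i \in I$ achieving $\max_{j \in I} \deg a_j$. The $i$-th coordinate of the dependence yields
\[
    a_i A_{i,i} = -\sum_{j \in I \setminus \{i\}} a_j A_{j,i},
\]
so multiplicativity together with the usual triangle inequality for degree give
\[
    \deg a_i + \deg A_{i,i} \leq \max_{j \in I \setminus \{i\}} \bigl(\deg a_j + \deg A_{j,i}\bigr) < \deg A_{i,i} + \max_{j \in I \setminus \{i\}} \deg a_j,
\]
where the strict inequality uses $\deg A_{j,i} < \deg A_{i,i}$ for $j \neq i$. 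Cancelling $\deg A_{i,i}$ yields $\deg a_i < \max_{j \in I \setminus \{i\}} \deg a_j$, contradicting the choice of $i$.

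The only mildly delicate step is the denominator-clearing reduction, which is handled by an iterated application of the CLM property; the rest is a clean noncommutative analogue of the familiar commutative criterion that a column-dominant matrix has full rank.
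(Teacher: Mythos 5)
Your proof is correct and follows essentially the same degree-comparison argument as the paper's: both exploit that each diagonal entry dominates its column in combined degree, the paper by building a strictly increasing chain of coefficient degrees until it exceeds the number of coefficients, you by starting directly at a coefficient of maximal degree and deriving the contradiction in one step. Your version also makes explicit two points the paper leaves implicit---clearing denominators via the common-left-multiple property to pass from skew-field coefficients to coefficients in $W_2$, and the multiplicativity of the combined degree---which is a welcome tightening rather than a genuinely different route.
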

Indeed, the combined degree of diagonal entries $\shift1\shift2$
in a system of linrec equations \eqref{eq:linrec} is $2$,
while every other entry has the form $p(n, k)$,
$p(n, k) \cdot \shift1$, or
$p(n, k) \cdot \shift2$ with $p(n, k) \in Q[n, k]$
and thus has combined degree $1$.
\begin{proof}
    We denote by $A_i$ the $i^\text{th}$ row of $A$.
    By contradiction, assume $A$ does not have full rank.
    There exist rows $A_{i_1},\ldots,A_{i_k}$
    and nonzero coefficients $P_1,\cdots,P_k\in W_2$ such that:
    $$ P_{i_1} \cdot A_{i_1} + \cdots + P_{i_k} \cdot A_{i_k} = 0.$$
    Let $j_1=i_1$.
    Since $\deg_{\shift1+\shift2} {A_{i_1,i_1}} > \deg_{\shift1+\shift2} A_{{i_r},{i_1}}$ for $r\geq 2$,
    there is an index $j_2$ such that $\deg_{\shift1+\shift2}{P_{j_2}} > \deg_{\shift1+\shift2}{P_{j_1}}$. By repeating this process, we have a sequence of indices $j_1,\ldots,j_{k+1}$ such that
    $$\deg_{\shift1+\shift2}{P_{j_{k+1}}} > \deg_{\shift1+\shift2}{P_{j_k}} > \cdots > \deg_{\shift1+\shift2}{P_{j_1}}.$$
    This is a contradiction because there are only $k$ different $P_i$'s.
\end{proof}

\subparagraph{Hermite normal forms.}

Let $A \in \F[\partial;\sigma]^{n\times n}$ be a skew polynomial square matrix.
Let $\deg_\partial A = \max_{i, j}\deg_\partial A_{i, j}$.
We say that $A$ is \emph{unimodular}
if it is invertible in $\F(\partial;\sigma)^{n\times n}$
and moreover the inverse matrix $A^{-1}$ has coefficients already in the skew polynomial ring $\F[\partial;\sigma]$.
We say that $A$ of rank $r$ is in \emph{Hermite form} if 
\begin{inparaenum}[a)]
    \item \label{cond:H:a}%
    exactly its first $r$ rows are non-zero,
    and the first (leading) non-zero entry in each row
    satisfies the following conditions:
    \begin{inparaenum}[b.1)]
        \item \label{cond:H:b.1}%
        it is a monic skew polynomial
        (its leading coefficient is $1 \in \F$),
        \item \label{cond:H:b.2}%
        all entries below it are zero, and
        \item \label{cond:H:b.3}%
        all entries above it have strictly lower degree.
    \end{inparaenum}
\end{inparaenum}
(In particular, a matrix in Hermite form is upper triangular.)
The \emph{Hermite normal form} (\HNF) of a skew polynomial matrix
$A$ of full rank $n$
is the (unique) matrix $H\in \F[\partial;\sigma]^{n\times n}$
in Hermite form
which can be obtained by applying a (also unique) unimodular transformation
$U\in \F[\partial;\sigma]^{n\times n}$ as $H = U \cdot A$.
Existence of $U$ (and thus of $H$) has been shown in
\cite[Theorem 2.4]{GiesbrechtKim:JA:2013},
and uniqueness in
\cite[Theorem 2.5]{GiesbrechtKim:JA:2013}.
The Hermite form $H$ yields directly a cancelling relationship \eqref{eq:cancelling:relation}
for the $n$-th linrec variable $f_n$, as we show in the following example.
(By reordering the equations, we can get an analogous relationship for $f_1$.)

\begin{example}
    Consider the following system of linrec equations:
    \begin{align*}
        \left\{\begin{array}{rrr}
            (\shift 1  - 1) \shift 2 \cdot G_r & - \shift 2 \cdot G_s& = 0, \\
            - (k \shift 2 + 1 ) \cdot G_r & + \shift 1 \shift 2 \cdot G_s& = 0.
        \end{array}\right.
    \end{align*}
    In matrix form we have
    \begin{align}
        \label{ex:matrix:A}
        \underbrace{\begin{pmatrix}
            (\shift 1-1)\shift 2    &- \shift 2 \\
            -k\shift2 - 1           &\shift1\shift2
        \end{pmatrix}}_{A \in W_2^{2 \times 2}} \cdot \underbrace{\begin{pmatrix}
            G_r \\ G_s
            \end{pmatrix}}_{x} = 0.
    \end{align}
    The matrix $A$ above is not in Hermite form;
    one reason is that $(\shift1-1)\shift2$ is not monic as a polynomial in $W_2$
    (because its leading coefficient is $\shift1 - 1 \neq 1$);
    another reason is that the entry $-k\shift2 - 1$ below it is nonzero.
    We show in \cref{ex:Hermite:solution}
    that the Hermite form $H = U \cdot A$ of $A$ is
    %
    \begin{align*}
        H = \begin{pmatrix}
            1  
            & (\frac{k}{\shift1 -1} - \shift1)\shift2 \\
            0           &\shift2^2 - \frac{1}{\shift1^2 - \shift1 - (k+1)}\shift2
        \end{pmatrix}.
    \end{align*}
    This allows us to immediately obtain a cancelling relation for the variable $G_s$ corresponding to the last row.
    Going back to our initial matrix equation $A \cdot x = 0$,
    we have $U A x = H x = 0$ where $x = \tuple{G_r\ G_s}^T$,
    yielding
    \begin{align*}
        \left(\shift2^2 - \frac{1}{\shift1^2 - \shift1 - (k+1)}\shift2\right) \cdot G_s = 0.
    \end{align*}
    By clearing out the denominator
    (an ordinary bivariate polynomial from $\Q[k, \shift1]$),
    we obtain
    \begin{align}
        \label{eq:cancelling:relation:example:2}
        ((\shift1^2 - \shift1 - (k+1)) \cdot \shift2^2 - \shift2)\cdot G_s
            = (\shift1^2\shift2^2 - \shift1 \shift2^2 - (k+1)\shift2^2 - \shift2)\cdot G_s
            = 0
    \end{align}
    yielding the sought cancelling relation for $G_s$ not mentioning any other sequence:
    \begin{align*}
        G_s(k+2, n+2) = G_s(n+1, k+2) + (k+1)\cdot G_s(n, k+2) + G_s(n, k+1).
    \end{align*}
\end{example}

In order to bound the complexity of the Hermite form $H$ in our case of interest,
we will use results from \cite{GiesbrechtKim:JA:2013},
instantiated in the special case of Ore shift polynomials.
These results generalise to skew polynomials analogous complexity bounds for the \HNF 
over integer matrices $\Z^{n \times n}$ \cite{KannanBachem:SIAM:JoC:1979}
and integer univariate polynomial matrices $\Z[z]^{n \times n}$ \cite{Villard:ISAAC:1996,MuldersStorjohann:JSC:2003,LabahnNeigerZhou:JoC:2017,NeigerRosenkildeSolomatov:ISAAC:2018}.
\begin{theorem}
    \label{thm:bounds}
    Let $A \in \F[\partial;\sigma]^{n\times n}$ of full rank $n$
    with \HNF~$H = U \cdot A \in \F[\partial;\sigma]^{n\times n}$.
    \begin{enumerate}
    
        \item $\sum_i \deg_\partial H_{i,i} \leq n \cdot \deg_\partial A$
        \cite[Theorem 4.7, point (a)]{GiesbrechtKim:JA:2013}.
        In particular,
        \begin{align}
            \label{eq:outer:deg:bound}
            \deg_\partial H \leq n \cdot \deg_\partial A.
        \end{align}

        \item For $A \in \F[z][\partial;\sigma]^{n\times n}$
        and $H \in \F(z)[\partial;\sigma]^{n\times n}$
        \cite[Theorem 5.6, point (a)]{GiesbrechtKim:JA:2013},
        \begin{align}
            \label{eq:inner:deg:bound}
            \deg_z H = O(n^2 \cdot \deg_z A \cdot \deg_\partial A)
        \end{align}

        \item For $A \in \Z[z][\partial;\sigma]^{n\times n}$
        and $H \in \Q(z)[\partial;\sigma]^{n\times n}$
        we have \cite[Corollary 5.9]{GiesbrechtKim:JA:2013},
        \begin{align}
            \label{eq:coefficient:bound}
            \log \size H = \softO {n^2 \cdot \deg_z A \cdot (\deg_\partial A + \log \size A)}.
        \end{align}
        
    \end{enumerate}
\end{theorem}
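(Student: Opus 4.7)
The plan is to deduce all three parts directly from Giesbrecht and Kim's paper on Hermite normal forms of skew polynomial matrices \cite{GiesbrechtKim:JA:2013}, after checking that our setting matches their hypotheses. Our matrices live in $\F[\partial;\sigma]^{n \times n}$ for a skew field $\F$ and an automorphism $\sigma$, which is exactly the framework they work in; in the intended instantiation for our complexity analysis we take $\F = \Q(k, \shift1)$ and $\partial = \shift2$, so $A \in \F[\partial;\sigma]^{n\times n}$. The full-rank hypothesis required below holds for the matrices arising from linrec systems by \cref{lem:full_rank_linrec}.

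For part~1, I would invoke \cite[Theorem 4.7, point (a)]{GiesbrechtKim:JA:2013} to get $\sum_i \deg_\partial H_{i,i} \leq n \cdot \deg_\partial A$. The ``in particular'' bound $\deg_\partial H \leq n \cdot \deg_\partial A$ does not follow from the sum bound alone, since a priori some off-diagonal entry could have larger $\partial$-degree than the diagonal. Here one must use the structural constraints of Hermite form: condition \ref{cond:H:b.2} forces entries below each pivot to be zero, and condition \ref{cond:H:b.3} forces entries above the pivot in the same column to have strictly smaller $\partial$-degree than the pivot itself. Hence the column-wise maximum of $\deg_\partial$ is attained on the diagonal, so $\deg_\partial H = \max_i \deg_\partial H_{i,i}$, which is then bounded by the already-established sum.

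For part~2, the bound $\deg_z H = O(n^2 \cdot \deg_z A \cdot \deg_\partial A)$ is \cite[Theorem 5.6, point (a)]{GiesbrechtKim:JA:2013} applied to our setting; here $z$ plays the role of the polynomial ``ground'' variable (e.g.\ $k$ in our Weyl-like algebras) and the bound quantifies the blow-up in that variable incurred by the common-left-multiple operations hidden inside the Hermite reduction. For part~3, the height bound is \cite[Corollary 5.9]{GiesbrechtKim:JA:2013}, which controls the bitsize of the rational function coefficients of $H$ in terms of $n$, the degrees of $A$, and $\log \size A$.

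The step I expect to need most care is not in the proofs themselves---which are essentially direct citations---but in matching our concrete iterated-Weyl setting to the abstract hypotheses of \cite{GiesbrechtKim:JA:2013}. In particular, one has to be explicit about which variable is treated as the ``ground'' variable $z$ versus the ``skew'' variable $\partial$ in each invocation, and to propagate the resulting bounds through the intermediate rational extensions such as $\Q(k, \shift1)(\shift2; \sigma_2)$ used in our zeroness algorithm. This is bookkeeping rather than a genuine mathematical obstacle, and once done it feeds immediately into \cref{lem:CR:bounds} and hence into the \EXPTIME bound of \cref{thm:zeroness:complexity}.
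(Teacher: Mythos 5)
Your proposal matches the paper exactly: the paper gives no proof of this theorem beyond the inline citations to \cite{GiesbrechtKim:JA:2013}, which is precisely what you do for all three parts. Your additional observation that the ``in particular'' bound $\deg_\partial H \leq n \cdot \deg_\partial A$ requires the structural conditions of the Hermite form (pivots monic and upper-triangular, entries above each pivot of strictly smaller degree, so the column-wise maximum of $\deg_\partial$ is attained on the diagonal) is correct and fills in a detail the paper leaves implicit.
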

We lift the results of \cref{thm:bounds} from univariate polynomial rings $\F[z], \Z[z]$
to the bivariate polynomial rings $\F[k, \shift1], \Z[k, \shift1]$ that we need in our complexity analysis
by noticing that the latter behave like the former if we replace $\deg_z$ with $\deg_{k+\shift1}$.
The formal result that we need is the following.
%
\begin{lemma}
    \label{lem:inv:bivariate}
    Let $A$ be an invertible matrix in $\Z[k,\shift1]^{n\times n}$.
    Then $\deg_{k+\shift1} A^{-1} \leq n \cdot \deg_{k+\shift1} A$ and
    $\log |A^{-1}|_\infty \leq n^2 (1+\log |A|_\infty + \log \deg_{k+\shift1} A)$.
\end{lemma}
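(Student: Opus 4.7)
The plan is to invoke Cramer's rule, writing $A^{-1}_{i,j} = (-1)^{i+j}\det(M_{ji})/\det(A)$, where $M_{ji}$ is the $(j,i)$-minor of $A$. I will then bound the combined $(k,\shift{1})$-degree and the height of the numerator and denominator polynomials separately, and finally account for the passage to the reduced form that defines $|\cdot|_\infty$ on $\Q(k,\shift{1})$.

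For the degree bound, I would use the Leibniz expansion
\begin{align*}
    \det A = \sum_{\sigma\in S_n}\mathrm{sign}(\sigma)\prod_{i=1}^n A_{i,\sigma(i)}.
\end{align*}
Each summand is a product of $n$ entries of combined degree $\leq \deg_{k+\shift{1}}A$, so its combined degree is at most $n\cdot\deg_{k+\shift{1}}A$, a bound preserved by summation. Analogously, $\deg_{k+\shift{1}}\det M_{ji}\leq (n-1)\cdot\deg_{k+\shift{1}}A$. Reducing a fraction to lowest terms divides a common polynomial factor out of both numerator and denominator and hence cannot increase degrees, yielding $\deg_{k+\shift{1}}A^{-1}\leq n\cdot\deg_{k+\shift{1}}A$.

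For the height bound, I would iterate the elementary estimate that $|PQ|_\infty\leq \binom{\min(\deg P,\deg Q)+2}{2}\cdot|P|_\infty\cdot|Q|_\infty$ for bivariate polynomials $P,Q$, since each coefficient of $PQ$ is a sum of at most $\binom{d+2}{2}$ products of two coefficients. Applied to the Leibniz formula this yields $|\det A|_\infty\leq n!\cdot|A|_\infty^{n}\cdot(\deg_{k+\shift{1}}A+2)^{2(n-1)}$, with an analogous bound for $\det M_{ji}$. Taking logarithms produces terms of size $O(n\log n + n\log|A|_\infty + n\log \deg_{k+\shift{1}}A)$, all of which are absorbed into $n^2(1+\log|A|_\infty+\log \deg_{k+\shift{1}}A)$, using $n\log n\leq n^2$ and the other two terms dominated by $n\cdot n\log |A|_\infty$ and $n\cdot n\log \deg_{k+\shift{1}}A$ respectively.

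The main obstacle lies in this last step: passing from bounds on the unreduced representation $\det M_{ji}/\det A$ to bounds on the reduced form $P'/Q'$ which defines $|A^{-1}|_\infty$. I would address this by invoking a Mahler/Mignotte-style bound on the heights of polynomial divisors in $\Z[k,\shift{1}]$, exploiting the fact that the common divisor cancels simultaneously from numerator and denominator, so that the correction factors balance out and are dominated by the $n^2$ term in the claimed bound. Alternatively, if ``invertible'' is read strictly as invertibility inside $\Z[k,\shift{1}]^{n\times n}$, then $\det A = \pm 1$, $A^{-1}=\pm \operatorname{adj}(A)\in\Z[k,\shift{1}]^{n\times n}$, no reduction is required, and the preceding degree and height estimates apply verbatim to the entries of the adjugate.
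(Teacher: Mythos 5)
Your approach is the same as the paper's: Cramer's rule plus the Leibniz expansion of the determinant, bounding combined degree and height term by term. In fact you go further than the paper does --- its proof of this lemma literally breaks off after the sentence ``Then we can bound the size of the determinant'', so your degree bound and your explicit height estimate for products of bivariate polynomials supply material the paper omits. Your degree argument and the log-height accounting for the unreduced representation $\det M_{ji}/\det A$ are correct. The one point that remains genuinely open is the passage to the reduced form, which you rightly identify as the main obstacle but then dispatch too quickly: a Gelfond/Mignotte-type bound for divisors of a polynomial $P\in\Z[k,\shift1]$ incurs an \emph{additive} term of order $\deg P$ in the log-height, here $O(n\cdot\deg_{k+\shift1}A)$, and this is \emph{not} dominated by $n^2\log\deg_{k+\shift1}A$ when the degree is large relative to $n$ (take $n=2$ and $\deg_{k+\shift1}A=100$). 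Your fallback reading --- invertibility inside $\Z[k,\shift1]^{n\times n}$, forcing $\det A=\pm1$ --- would make the statement clean but does not match the intended application: the matrix $\tilde A$ inverted in the Hermite-form computation has an inverse with genuinely rational entries such as $k/(\shift1-1)$. So either the height bound should be read for the unreduced Cramer representation, or it needs the extra $O(n\cdot\deg_{k+\shift1}A)$ term; since the paper's own proof never reaches this step, you have exposed a real gap in the lemma as stated rather than introduced one.
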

\begin{proof}
    By Cramer's formula, every coefficient of $A^{-1}$ is the quotient of the determinant of a submatrix of $A$ and the determinant of $A$.
    By Lipschitz' formula we have
    $\det(A) = \sum_\sigma \text{sign}(\sigma) A_{1,\sigma_1}\cdots A_{n,\sigma_2}$,
    where $\text{sign}(\sigma) \in \set{-1, 1}$
    and $\sigma$ ranges over all permutations of $\set{1,\ldots,n}$.
    Then we can bound the size of the determinant 
\end{proof}

The two bounds in \cref{lem:bounds} below
are obtained from the last two bounds in \cref{thm:bounds}
by inspecting the proofs in \cite{GiesbrechtKim:JA:2013}
and using the the bounds on inversion of matrices of bivariate polynomials from \cref{lem:inv:bivariate}.
\begin{lemma}
    \label{lem:bounds}
    \begin{enumerate}
        \item For $A \in \F[k,\shift1][\partial;\sigma]^{n\times n}$
        and $H \in \F(k,\shift1)[\partial;\sigma]^{n\times n}$,
        \begin{align}
            \deg_{k+\shift1} H = O(n^2 \cdot \deg_{k+\shift1} A \cdot \deg_\partial A)
        \end{align}

        \item For $A \in \Z[k,\shift1][\partial;\sigma]^{n\times n}$
        and $H \in \Q(k,\shift1)[\partial;\sigma]^{n\times n}$
        we have
        \begin{align}
            \log \size H = \softO{n^2 \cdot \deg_{k+\shift1} A \cdot (\deg_\partial A + \log \size A)}.
        \end{align}
    \end{enumerate}
\end{lemma}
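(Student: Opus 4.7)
The plan is to show that the bounds of \cref{thm:bounds} carry over essentially verbatim when the scalar polynomial ring $\F[z]$ (resp.~$\Z[z]$) is replaced by the bivariate polynomial ring $\F[k,\shift1]$ (resp.~$\Z[k,\shift1]$), provided that we use the \emph{combined} degree $\deg_{k+\shift1}$ as the analogue of $\deg_z$. The reason this works is that the skew polynomial ring $\F(W_2') = \F[k,\shift1][\partial;\sigma]$ sits above the commutative coefficient ring $\F[k,\shift1]$ in exactly the same way that $\F[z][\partial;\sigma]$ sits above $\F[z]$, and the combined degree $\deg_{k+\shift1}$ satisfies the same grading identities as $\deg_z$ (additivity under multiplication up to $\leq$ and sub-additivity under sums).

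First I would open the proofs of \cite[Theorem~5.6(a)]{GiesbrechtKim:JA:2013} and \cite[Corollary~5.9]{GiesbrechtKim:JA:2013} and isolate every step that uses a degree or height bound on the \emph{scalar} coefficient ring $\F[z]$ or $\Z[z]$. The only place where the structure of the scalar ring genuinely enters is through bounds on the inverse of a matrix with entries in the scalar ring: a square scalar matrix $M$ of full rank has an inverse whose entries have degree (resp.~logarithmic height) controlled in terms of $n$, $\deg M$, and $\log \size M$, via Cramer's rule and Hadamard's inequality. These are exactly the bounds extracted in \cref{lem:inv:bivariate} for the bivariate case with the combined degree. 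Substituting \cref{lem:inv:bivariate} in place of the corresponding univariate estimates, and propagating the replacement $\deg_z \leadsto \deg_{k+\shift1}$ through the inductive argument on $\deg_\partial$ in Giesbrecht--Kim, yields directly the announced bounds for $\deg_{k+\shift1} H$ in part~(1) and $\log \size H$ in part~(2).

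The main obstacle will be verifying that nothing in the Giesbrecht--Kim argument implicitly relies on the coefficient ring being a univariate polynomial ring. Two points deserve care: (a)~the noncommutative pseudo-division/remainder computation used to reduce the matrix to Hermite form is defined over any commutative coefficient ring with the \CLM property, which is the case for $\F[k,\shift1]$ by \cref{cor:CLM} and a fortiori for $\Z[k,\shift1]$; and (b)~the degree estimates in the proof of \cite[Theorem~5.6(a)]{GiesbrechtKim:JA:2013} are derived by iterating the inversion bound along a chain of at most $n$ elimination steps, each of which multiplies degrees by $O(\deg_\partial A)$, and this iteration is insensitive to whether ``degree'' means $\deg_z$ or $\deg_{k+\shift1}$ as long as the inversion bound is stated in the same grading. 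Once these two points are checked, the bounds in parts (1) and (2) follow by plugging the bivariate inversion bound of \cref{lem:inv:bivariate} into each step, with the constants absorbed into the $O(\cdot)$ and $\softO{\cdot}$ respectively.
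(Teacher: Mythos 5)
Your proposal matches the paper's argument: the paper likewise obtains \cref{lem:bounds} from the last two bounds of \cref{thm:bounds} by inspecting the Giesbrecht--Kim proofs and substituting the bivariate inversion bounds of \cref{lem:inv:bivariate} (with $\deg_z$ replaced by the combined degree $\deg_{k+\shift1}$) for the univariate ones. Your write-up is in fact more explicit than the paper's one-sentence justification about \emph{where} the scalar ring enters, and is correct as far as the paper's own level of detail goes.
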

Putting everything together,
the bounds from point 1.~of \cref{thm:bounds}
and the two bounds from \cref{lem:bounds}
yield the following corollary.
\begin{corollary}
    \label{cor:bounds}
    Let $A \in (W_2')^{m\times m} = \Q[k, \shift1][\shift2; \sigma_2]^{m \times m}$ of full rank $m$
    with \HNF~$H = U \cdot A \in \Q(k, \shift1)[\shift2;\sigma_2]^{m \times m}$.
    We have:
    \begin{align*}
        \deg_{\shift2} H &\leq n \cdot \deg_{\shift2} A, \\
            \deg_{k+\shift1} H &= O(n^2 \cdot \deg_{k+\shift1} A \cdot \deg_{\shift2} A), \\
                \log \size H &= \softO{m^2 \cdot \deg_{\shift2} A \cdot (\deg_{k+\shift1} A + \log \size A)}.
    \end{align*}
\end{corollary}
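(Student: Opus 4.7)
The plan is to derive all three bounds by direct specialisation of Theorem~\ref{thm:bounds} and Lemma~\ref{lem:bounds} to the iterated skew polynomial ring $W_2' = \Q[k,\shift1][\shift2;\sigma_2]$. I would view this as an Ore extension $R[\partial;\sigma]$ with $R = \Q[k,\shift1]$ (respectively $\Z[k,\shift1]$ when tracking heights), $\partial = \shift2$, and $\sigma = \sigma_2$; the required hypothesis that $\sigma_2$ is an automorphism of the base ring is immediate, since $\sigma_2$ acts as $k \mapsto k+1$ while leaving $\shift1$ fixed. Since $A$ has full rank $m$, the Hermite normal form $H = U \cdot A$ exists and is unique, so the cited bounds apply without further adjustment.

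The first bound is then an immediate application of equation~\eqref{eq:outer:deg:bound} from point~1 of Theorem~\ref{thm:bounds}, which gives $\deg_\partial H \leq n \cdot \deg_\partial A$ for any full-rank matrix over a skew polynomial ring $\F[\partial;\sigma]$. Taking the skew field to be $\Q(k,\shift1)$ from~\eqref{eq:our:skew:fields} and $\partial = \shift2$ yields the desired estimate $\deg_{\shift2} H \leq m \cdot \deg_{\shift2} A$ directly.

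The second and third bounds are similarly obtained by invoking points~1 and~2 of Lemma~\ref{lem:bounds} respectively, with the base ring taken to be $\F[k,\shift1]$ (resp.~$\Z[k,\shift1]$) and the Ore indeterminate $\partial$ specialised to $\shift2$. These lemmas are precisely the bivariate analogues of the univariate estimates of Theorem~\ref{thm:bounds}, with $z$ replaced by the pair $(k,\shift1)$ and the notion of $\deg_z$ replaced by the combined degree $\deg_{k+\shift1}$; the claims of the corollary match the conclusions of these lemmas verbatim.

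Since this is a packaging corollary, there is no substantive obstacle: the analytic work has already been carried out in Lemma~\ref{lem:bounds}, whose proof lifts the results of Giesbrecht and Kim from a univariate polynomial base ring to a bivariate one using the bounds on inversion of bivariate polynomial matrices from Lemma~\ref{lem:inv:bivariate}. The only care required in the present proof is to maintain the notational translation $(z,\partial) \leftrightarrow ((k,\shift1),\shift2)$ consistently and to note that the full-rank hypothesis, which is assumed, guarantees existence and uniqueness of $H$ so that each cited bound is applicable.
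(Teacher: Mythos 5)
Your proposal is correct and follows essentially the same route as the paper, which likewise obtains the corollary by directly combining point~1 of \cref{thm:bounds} (for the $\deg_{\shift2}$ bound) with the two bounds of \cref{lem:bounds} (for the combined degree and the height), instantiated at $R=\Q[k,\shift1]$, $\partial=\shift2$, $\sigma=\sigma_2$. Your silent correction of the statement's $n$ to $m$ is also the intended reading.
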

Thus, the degrees of the \HNF are polynomially bounded,
and the heights are exponentially bounded.
The bounds from \cref{cor:bounds} yield the complexity upper-bound
on the zeroness problem that we are after.

\lemCRBounds*
\begin{proof}
    Let $f$ be a linrec sequence of order $\leq m$, degree $\leq d$, and height $\leq h$.
    Since $\deg_{\shift2} = \deg_{\shift1} = 1$ in $A$ from linrec,
    thanks to \cref{cor:bounds} the Hermite form $H$ has
    $\deg_{\shift2} H \leq m$,
    $\deg_{k+\shift1} H$ is polynomially bounded
    (and thus $\deg_k H$ and $\deg_{\shift1} H$ as well), 
    and $\size H$ is exponentially bounded.
    %
    %
    Thanks to the fact that the Hermite form is triangular,
    we can immediately extract from $H \cdot x = 0$
    the existence of a cancelling relation \eqref{eq:cancelling:relation} for $f_1$
    where $i^*, j^*$ are polynomially bounded,
    the degree of $p_{i^*, j^*}$ is polynomially bounded,
    and the height of $\size {p_{i^*, j^*}}$ is exponentially bounded.
    
    Moreover, consider the one-dimensional sections
    $f(0, k), \dots, f(i^*, k) \in \Q^\N$.
    By \cref{lem:linrec:section},
    they are linrec of order $\leq m \cdot (i^*+3)$,
    degree $\leq d$,
    and height $\leq h \cdot (i^*)^d$,
    and thus there are associated matrices $A_0, \dots, A_{i^*}$
    of the appropriate dimensions $\leq (m \cdot (i^*+3)) \times (m \cdot (i^*+3))$
    with coefficients in $\Q[k][\shift2; \sigma_2]$.
    The bounds from \cref{cor:bounds}
    can be applied to this case as well
    and we obtain for each $0 \leq i \leq i^*$
    a cancelling relation \eqref{eq:cancelling:relation:1} $R_i$
    with leading polynomial coefficient $q_{i, \ell_i^*}(k)$ where
    $\ell_i^*$ is polynomially bounded,
    its degree in $k$ is polynomially bounded,
    and the height $\size {q_{i, \ell_i^*}}$ is exponentially bounded.
\end{proof}

%



\subsection{Extended example}

We conclude this section with an extended example
showing how to compute the Hermite form of a skew polynomial matrix,
thus illustrating the techniques of Giesbrecht and Kim \cite{GiesbrechtKim:JA:2013}
leading to \cref{thm:bounds}.
We apply the algorithm on our running example.
%
%
%
For $n \in \N$,
denote with $\F[\partial; \sigma]_{n}$ the semiring of skew polynomials of degree at most $n$ with coefficients in the field $\F$.
Let $\phi_n : \F[\partial; \sigma]_{n} \to \F^{n+1}$
be the bijection that associates to a skew polynomial of degree $\leq n$
the vector of its coefficients,
starting from the one of highest degree.
For instance,
$$\phi_5(5\cdot \partial^3 + 4 \cdot \partial^2 + 7) = \tuple{0, 0, 5, 4, 0, 7}.$$
The \emph{$m$-Sylvester matrix} of a skew polynomial $P \in \F[\partial; \sigma]_{n-m}$
of degree $\leq n-m$
is the matrix $S_n^m(P) \in \F^{(m+1)\times (n+1)}$
defined by
\begin{align}
    S_n^m(P) = \begin{pmatrix}
        \phi_n(\partial^{m}P)\\
        \phi_n(\partial^{m-1}P)\\
        \vdots \\
        \phi_n(\partial^0 P)
    \end{pmatrix}.
\end{align}
%
For example, for $P = 5\cdot \partial^3 + 4 \cdot \partial^2 + 7$
we have
\begin{align*}
    S_5^2(P) = \begin{pmatrix}
        \phi_5(\partial^2 P)\\
        \phi_5(\partial^1 P)\\
        \phi_5(\partial^0 P)
    \end{pmatrix} = \begin{pmatrix}
        5 &4 &0 &7 &0 &0\\
        0 &5 &4 &0 &7 &0\\
        0 &0 &5 &4 &0 &7
    \end{pmatrix}.
\end{align*}
The next lemma shows that sufficiently large Sylvester matrices can be used
to express product of polynomials in terms of products of matrices.
This crucial idea allows one to transform
problems on skew polynomials in $\F[\partial, \sigma]$
to linear algebra problems in the underlying field (or just semiring) $\F$.
\begin{lemma}[\protect{c.f.~\cite[Sec.~1, eq.~(1)]{BostanChyzakSalvyLi:ISAAC:2012}}]
    Let $P, Q \in \F[\partial; \sigma]$ and $n,m \in \N$
    s.t.~$\deg P \leq m$
    and $\deg Q \leq n - \deg P$.
    Then,
    \begin{align*}
        \phi_n(Q \cdot P) = \phi_{m}(Q) \cdot S_n^m(P).
    \end{align*}
\end{lemma}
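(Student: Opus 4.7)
The plan is to decompose $Q$ in the monomial basis and exploit the additivity of $\phi_n$, reducing the identity to the row-wise definition of the Sylvester matrix; no genuine obstacle is expected, the main care being the skew-commutation $\partial \cdot a = \sigma(a) \cdot \partial$.

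First I would write $Q = \sum_{i=0}^{m} b_i \cdot \partial^i$ with $b_i \in \F$, where $b_i = 0$ for $i > \deg Q$; this expansion is legitimate since the hypothesis (suitably read so that $\deg Q \leq m$) bounds the degree of $Q$. By distributivity of multiplication in $\F[\partial;\sigma]$ we get $Q \cdot P = \sum_{i} b_i \cdot (\partial^i \cdot P)$. The hypothesis $\deg Q + \deg P \leq n$ ensures $\deg(\partial^i \cdot P) = i + \deg P \leq n$ for every $i \leq m$, so each $\phi_n(\partial^i \cdot P) \in \F^{n+1}$ is well-defined. Additivity of $\phi_n$ on $\F[\partial;\sigma]_n$ then gives
\[
\phi_n(Q \cdot P) \;=\; \sum_{i=0}^{m} b_i \cdot \phi_n(\partial^i \cdot P),
\]
where the scalars $b_i \in \F$ slip through $\phi_n$ unchanged because they multiply $\partial^i \cdot P$ on the \emph{left}, and $\phi_n$ merely reads off coefficients in the ordered basis $\partial^n, \partial^{n-1}, \ldots, \partial^0$.

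On the other hand, the definition of $S_n^m(P)$ lists its rows from top to bottom as $\phi_n(\partial^m P), \phi_n(\partial^{m-1} P), \ldots, \phi_n(\partial^0 P)$, and the convention for $\phi_m$ yields $\phi_m(Q) = (b_m, b_{m-1}, \ldots, b_0) \in \F^{m+1}$. A direct matrix-vector computation therefore gives
\[
\phi_m(Q) \cdot S_n^m(P) \;=\; \sum_{i=0}^{m} b_i \cdot \phi_n(\partial^i \cdot P),
\]
which coincides with the preceding display, completing the proof.

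The only conceptually non-trivial point I would flag is the skewness itself: because $\partial^i \cdot a = \sigma^i(a) \cdot \partial^i$ for $a \in \F$, the row $\phi_n(\partial^i P)$ is not a plain shift of $\phi_n(P)$ but carries the $\sigma^i$-twist on each coefficient of $P$. This twist, however, is fully packaged inside the rows of $S_n^m(P)$ and never interacts with the outer scalars $b_i$ coming from $Q$, so the calculation proceeds verbatim as in the commutative Sylvester setting. Hence the identification is structural and no induction, rearrangement, or explicit coefficient chasing is required beyond unfolding definitions.
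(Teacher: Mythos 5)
Your proof is correct. The paper does not actually prove this lemma --- it is imported verbatim from Bostan, Chyzak, Li and Salvy with only a citation --- so there is nothing to compare against; your argument (expand $Q=\sum_i b_i\partial^i$, use distributivity and the left-$\F$-linearity of $\phi_n$, and match the resulting sum $\sum_i b_i\,\phi_n(\partial^i P)$ against the row-wise definition of $S_n^m(P)$) is exactly the intended one-line unfolding of the definitions, and your remark that the $\sigma^i$-twist is entirely absorbed into the rows of $S_n^m(P)$ and never meets the left scalars $b_i$ is the right thing to flag. You were also right to read the hypotheses as $\deg Q\leq m$: as literally printed, $\deg P\leq m$ and $\deg Q\leq n-\deg P$ guarantee neither that $\phi_m(Q)$ nor that $S_n^m(P)$ (which needs $\deg P\leq n-m$) is defined, so $P$ and $Q$ appear to be swapped in the statement; the corrected reading is the one consistent with the paper's only use of the lemma, namely $\phi_{\rho+d}(H)=\phi_{\rho}(U)\cdot S^{\rho}_{\rho+d}(A)$ with $\deg U\leq\rho$ and $\deg A\leq d$.
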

We extend both $\phi_n$ and $S_n^m$ to skew polynomial matrices in $\F[\partial;\sigma]_n^{k\times k}$
by point-wise application
and then merging all the obtained matrices into a single one.
\begin{example}
    For instance, $\phi_2(A)$ with $A \in \Q[k][\shift1;\sigma_1][\shift2;\sigma_2]^{2\times 2}$
    from \eqref{ex:matrix:A} equals
    \begin{align*}
        \phi_2(A) &= \phi_2 \begin{pmatrix}
            (\shift 1-1)\shift 2    &- \shift 2 \\
            -k\shift2 - 1           &\shift1\shift2
        \end{pmatrix} = \begin{pmatrix}
            \phi_2 ((\shift 1-1)\shift 2)    & \phi_2 (-\shift 2) \\
            \phi_2 (-k\shift2 - 1)           & \phi_2 (\shift1\shift2)
        \end{pmatrix}  \\
        &= \begin{pmatrix}
            0 &\shift 1-1 &0 &0 &{-1} &0 \\
            0 &{-k} &{-1} &0 &\shift 1 &0
        \end{pmatrix} \in \Q[k][\shift1;\sigma_1]^{2 \times 6}
    \end{align*}
    and thus $S_2^1(A) \in \Q[k][\shift1;\sigma_1]^{4 \times 6}$ is
    \begin{align*}
        S_2^1(A) &= S_2^1 \begin{pmatrix}
            (\shift 1-1)\shift 2    &- \shift 2 \\
            -k\shift2 - 1           &\shift1\shift2
        \end{pmatrix} = \\
        &= \begin{pmatrix}
            S_2^1 ((\shift 1-1)\shift 2)    & S_2^1 (-\shift 2) \\
            S_2^1 (-k\shift2 - 1)           & S_2^1 (\shift1\shift2)
        \end{pmatrix} = \\
        &= \begin{pmatrix}
            \begin{pmatrix}
                \phi_2 (\shift2 (\shift 1-1)\shift 2) \\
                \phi_2 ((\shift 1-1)\shift 2)
            \end{pmatrix} &
                \begin{pmatrix}
                    \phi_2 (\shift2(-\shift2)) \\
                    \phi_2 (-\shift 2)
                \end{pmatrix} \\
            \begin{pmatrix}
                \phi_2 (\shift2(-k\shift2 - 1)) \\
                \phi_2 (-k\shift2 - 1)
            \end{pmatrix} &
                \begin{pmatrix}
                    \phi_2 (\shift2\shift1\shift2) \\
                    \phi_2 (\shift1\shift2)
                \end{pmatrix}
        \end{pmatrix}=  \\
        &= \begin{pmatrix}
            \begin{pmatrix}
                \tuple{\shift 1-1 \ 0 \ 0} \\
                \tuple{0 \ \shift 1-1 \ 0}
            \end{pmatrix} &
                \begin{pmatrix}
                    \tuple{{-1} \ 0 \ 0} \\
                    \tuple{0 \ {-1} \ 0}
                \end{pmatrix} \\
            \begin{pmatrix}
                \tuple{{-(k+1)} \ {-1}\ 0} \\
                \tuple{0 \ {-k} \ {-1}}
            \end{pmatrix} &
                \begin{pmatrix}
                    \tuple{\shift 1 \ 0 \ 0} \\
                    \tuple{0 \ \shift 1 \ 0}
                \end{pmatrix}
        \end{pmatrix}=  \\
        &= \begin{pmatrix}
                \shift 1-1 &0 &0 &{-1} &0 &0 \\
                0 &\shift 1-1 &0 &0 &{-1} &0 \\
                {-(k+1)} &{-1} &0 &\shift 1 &0 &0 \\
                0 &{-k} &{-1} &0 &\shift 1 &0
        \end{pmatrix}.
    \end{align*}
\end{example}
By definition of the Hermite form, we have that $H = U \cdot A$.
By \eqref{eq:outer:deg:bound} every degree of skew polynomials appearing therein is bounded by $n \cdot \deg A$. Hence setting $\rho = n \cdot \deg A$,
we have the following matrix equation with coefficients in $\F$:
\begin{align*}
    \phi_{\rho+d}(H) = \phi_{\rho}(U) \cdot S^\rho_{\rho+d}(A).
\end{align*}

The \emph{diagonal degree vector} of the Hermite form for $A$
is the unique vector $d$ s.t.~$d_i = \deg H_{i,i} $.
The algorithm will guess such a vector,
and it can detect whether the guess was correct or not.
If it is the right one, then $H$ and $U$ can be computed.

\begin{example}
    The correct diagonal degree vector for our running example is $(0,2)$.
    The Hermite normal form $H = U \cdot A$ of the $2\times2$ matrix $A$
    from our running example has the form
    \begin{align*}
        H = \begin{pmatrix}
            H_{11}  & H_{12} \\
            0       & H_{22}
        \end{pmatrix}, \quad
        U = \begin{pmatrix}
            U_{11}  & U_{12} \\
            U_{21}  & U_{22}
        \end{pmatrix} \in \Q[k][\shift1;\sigma_1][\shift2;\sigma_2]^{2\times 2}
    \end{align*}
    where $H_{11}, H_{22} \in \Q[k][\shift1;\sigma_1][\shift2;\sigma_2]$
    are \emph{monic} skew polynomials of degree respectively $0$ and $2$
    and $H_{12}, U_{11}, U_{12}, U_{21}, U_{22} \in \Q[k][\shift1;\sigma_1][\shift2;\sigma_2]$
    are skew polynomials of degree $1$.
    It follows that
    \begin{align*}
        \phi_2 (H) &= \begin{pmatrix}
        \phi_2 (H_{11})  & \phi_2 (H_{12}) \\
        0       & \phi_2 (H_{22})
        \end{pmatrix} = \\
        &= \begin{pmatrix}
        \phi_2 ( 1)
            & \phi_2 (a_{121} \shift2 + a_{120}) \\
        0
            & \phi_2 (\shift2^2 + a_{221} \shift2 + a_{220})
        \end{pmatrix} = \\
        &= \begin{pmatrix}
            0
                &0
                    &1
                        &0
                            &a_{121}
                                &a_{120} \\
            0 &0 &0 &1
                        &a_{221}
                            &a_{220}
        \end{pmatrix} \in \Q[k][\shift1;\sigma_1]^{2\times 6}.
    \end{align*}
    Similarly,
    \begin{align*}
        \phi_{1}(U) &= \begin{pmatrix}
            \phi_1(U_{11}) & \phi_1(U_{12}) \\
            \phi_1(U_{21}) & \phi_1(U_{22})
        \end{pmatrix} = \begin{pmatrix}
            \phi_1(u_{111}\shift2 + u_{110}) & \phi_1(u_{121}\shift2 + u_{120}) \\
            \phi_1(u_{211}\shift2 + u_{210}) & \phi_1(u_{221}\shift2 + u_{220})
        \end{pmatrix} = \\
        &= \begin{pmatrix}
            u_{111} &u_{110} &u_{121} &u_{120} \\
            u_{211} &u_{210} &u_{221} &u_{220}
        \end{pmatrix} \in \Q[k][\shift1;\sigma_1]^{2\times 4}.
    \end{align*}
    By putting the pieces together,
    we obtain the following matrix equation with entries in $\Q[k][\shift1;\sigma_1]$
    \begin{align*}
        &\underbrace{\begin{pmatrix}
            0
                &0
                    &1
                        &0
                            &a_{121}
                                &a_{120} \\
            0 &0 &0 &1
                        &a_{221}
                            &a_{220}
        \end{pmatrix}}_{\phi_2(H)} = \\
        &\qquad\underbrace{\begin{pmatrix}
            u_{111} &u_{110} &u_{121} &u_{120} \\
            u_{211} &u_{210} &u_{221} &u_{220}
        \end{pmatrix}}_{\phi_1(U)} \cdot
        \underbrace{\begin{pmatrix}
                \shift 1-1 &0 &0 &{-1} &0 &0 \\
                0 &\shift 1-1 &0 &0 &{-1} &0 \\
                {-(k+1)} &{-1} &0 &\shift 1 &0 &0 \\
                0 &{-k} &{-1} &0 &\shift 1 &0
        \end{pmatrix}}_{S^1_2(A)}.
    \end{align*}
\end{example}

It is shown in \cite[Theorem 5.2]{GiesbrechtKim:JA:2013}
that if we guessed the diagonal degree vector right, then we can remove columns from $\phi_{\rho+d}(H)$ corresponding to under-determined entries,
and corresponding columns in $S^\rho_{\rho+d}(A)$,
in order to obtain two matrices $\Tilde{A}$ and $\Tilde{H}$ such that:
\begin{itemize}
    \item $\Tilde{H}$ is only made of $0$'s and $1$'s.
    \item $\Tilde{A}$ is a square matrix.
    \item The matrix equation $T\Tilde{A} = \Tilde{H}$ of unknown $T$
    (of the same dimensions as $\phi_\rho(U)$) has a unique solution.
    In particular, $\Tilde{A}$ has full rank and hence is invertible.
\end{itemize}

\begin{example}
    The reduced system $\tilde H = \phi_1(U) \cdot \tilde A$
    in our running example is obtained by removing columns
    $5, 6$ from $\phi_2(H)$ and correspondingly from $S_2^1(A)$:
    \begin{align*}
        &\underbrace{\begin{pmatrix}
            0 &0&1&0 \\
            0 &0&0&1
        \end{pmatrix}}_{\tilde H} =
        \underbrace{\begin{pmatrix}
            u_{111} &u_{110} &u_{121} &u_{120} \\
            u_{211} &u_{210} &u_{221} &u_{220}
        \end{pmatrix}}_{\phi_1(U)} \cdot
        \underbrace{\begin{pmatrix}
                \shift1 - 1 & 0 & 0&-1 \\
                0&\shift1-1&0&0 \\
                -(k+1)&-1&0&\shift1 \\
                0&-k&-1&0
        \end{pmatrix}}_{\tilde A}.
    \end{align*}
    
\end{example}

Now the obtained $\tilde A$ is invertible.
Hence we can determine $U$ thanks to the equation $\phi_1(U) = \tilde H \tilde A^{-1}$. 

\begin{example}
    \label{ex:Hermite:solution}
    In the example, we obtain
    \begin{align*}
            T = \begin{pmatrix}
                -\frac{k}{\shift1 - 1}
                & -1 \\
                \frac{k+1}{\shift1^2 - \shift1 - (k+1)}\shift2 + \frac{1}{\shift1^2 - \shift1 - (k+1)}
                & \frac{1}{\shift1^2 - \shift1 - (k+1)} \shift2
            \end{pmatrix},
    \end{align*}
    yielding the Hermite form:
    \begin{align}
            \nonumber
            H = T \cdot A &=
            \begin{pmatrix}
                -\frac{k}{\shift1 - 1}
                & -1 \\
                \frac{k+1}{\shift1^2 - \shift1 - (k+1)}\shift2 + \frac{1}{\shift1^2 - \shift1 - (k+1)}
                & \frac{1}{\shift1^2 - \shift1 - (k+1)} \shift2
            \end{pmatrix}
            \cdot 
            \begin{pmatrix}
                (\shift 1-1)\shift 2    &- \shift 2 \\
                -k\shift2 - 1           &\shift1\shift2
            \end{pmatrix} \\
            \label{eq:Hermite:solution}
            &= \begin{pmatrix}
                1  
                & (\frac{k}{\shift1 -1} - \shift1)\shift2 \\
                0           &\shift2^2 - \frac{1}{\shift1^2 - \shift1 - (k+1)}\shift2
            \end{pmatrix}.
    \end{align}
\end{example}
\end{appendix}

\end{document}